\renewcommand{\d}{\mathrm{d}}
\newcommand{\scri}{{\mathscr I}}
\newcommand{\hook}{{\setlength{\unitlength}{11pt}   
                   \begin{picture}(.833,.8)
                   \put(.15,.08){\line(1,0){.35}}
                   \put(.5,.08){\line(0,1){.5}}
                   \end{picture}}}
\newtheorem{definition}{Definition}
\newtheorem{theorem}{Theorem}
\newtheorem{proposition}{Proposition}
\newtheorem{corollary}{Corollary}
\newtheorem{lemma}{Lemma}
\newtheorem{remark}{Remark}
\begin{document}
\mbox{} \thispagestyle{empty}

\begin{center}
\bf{\Huge Conformal scattering theories for tensorial wave equations on Schwarzschild spacetime} \\

\vspace{0.1in}

{Truong Xuan PHAM\footnote{Faculty of Pedagogy, VNU University of Education, Vietnam National University, Hanoi, 144 Xuan Thuy, Cau Giay, Hanoi, Viet Nam. Email~: phamtruongxuan.k5@gmail.com}}
\end{center}

{\bf Abstract.} In this paper, we establish the constructions of conformal scattering theories for the tensorial wave equation such as the tensorial Fackerell-Ipser and the spin $\pm 1$ Teukolsky equations on Schwarzschild spacetime. In our strategy, we construct the conformal scattering for the tensorial Fackerell-Ipser equations which are obtained from the Maxwell equation and spin $\pm 1$ Teukolsky equations. Our method combines Penrose's conformal compactification and the energy decay results of the tensorial fields satisfying the tensorial Fackerell-Ipser equation to prove the energy equality of the fields through the conformal boundary $\mathfrak{H}^+\cup \scri^+$ (resp. $\mathfrak{H}^-\cup \scri^-$) and the initial Cauchy hypersurface $\Sigma_0 = \left\{ t=0 \right\}$. We will prove the well-posedness of the Goursat problem by using a generalization of H\"ormander's results for the tensorial wave equations. By using the results for the tensorial Fackerell-Ipser equations we will establish the construction of conformal scattering for the spin $\pm 1$ Teukolsky equations.  

{\bf Keywords.} Conformal scattering, Goursat problem, black holes, tensorial Fackerell-Ipser equations, spin $\pm 1$ Teukolsky equations, Schwarzschild metric, null infinity, Penrose's conformal compactification.

{\bf Mathematics subject classification.} 35L05, 35P25, 35Q75, 83C57.

\tableofcontents

\section{Introduction}
The analytic scattering theories of field equations outside black holes of spacetimes in general relativity have been studied since 1985. The first work of Dimock \cite{Di1} established the scattering theory for scalar wave equation on the Schwarzschild spacetime by using Cook's method. Then, the series works of Dimock and Kay provided the scattering theory for massive Klein-Gordon equations \cite{Di2} and classical and quantum scattering theory for linear scalar fields on the Schwarzschild spacetime \cite{Di3,Di4}. The works of Dimock and Kay have been developed by Bachelot to study the scattering theory for the Maxwell equation on the Schwarzschild spacetime \cite{Ba1}. In this work, Bachelot has also provided the connection between the Characteristic Cauchy problem (i.e., the Goursat problem) in the Penrose conformal spacetime and the existence of wave operators. After that, Bachelot studied the asymptotic completeness and scattering theory for massive Klein-Gordon equations on the Schwarzschild spacetime in \cite{Ba2} by using the invariance principle for long range potentials, and constructed the scattering operator by Dollar-modified wave operators. Concerning the scattering of Dirac fields outside a Schwarzschild black hole, Nicolas \cite{Ni1995'} provided a scattering theory for classical massless Dirac fields by using Cook's method; Jin \cite{Jin} constructed wave operators, classical at the event horizon and Dollard-modified at infinity and obtained the scattering for the massive Dirac fields. Moreover, Melnik \cite{Mel} gave a complete scattering theory for massive charged Dirac fields on the Reissner-Nordstrøm spacetime.

A complete scattering theory for the wave equations, on stationary, asymptotically
flat spacetimes (which consists of Kerr spacetimes) has been established by H\"afner \cite{Ha2001}
by using Mourre's theory. Then, the work \cite{Ha2001} has been extended by H\"afner and Nicolas \cite{HaNi2004} to construct the scattering theory for massless Dirac fields outside a Kerr black hole. 
By using Mourre's theory again, Daud\'e \cite{Dau} proved the existence and asymptotic completeness of wave operators, classical at the event horizon and Dollard-modified at infinity, for classical massive Dirac particles on the Kerr-Newman spacetime; Riton \cite{Riton} studied the scattering for massive Dirac equations on the Schwarzschild-Anti-de Sitter spacetime.
On the other hand, Batic \cite{Batic} has provided another approach from \cite{HaNi2004} to construct the scattering theory for massive Dirac particles outside the event horizon of a nonextreme Kerr black hole spacetime. The method in \cite{Batic} is based on an integral representation of the Dirac propagator in the exterior region of the Kerr spacetime.

Conformal scattering theory is a geometric approach to construct the scattering for field equations on spacetimes in general relativity that is based on a conformal technique and vector field methods. The idea of the conformal compactification structure of spacetimes was posed initially by Penrose \cite{Pe1964} in the 1960's. Since then, this structure plays an important role in the study of peeling and conformal scattering, the two aspects of conformal asymptotic analysis. In particular, the conformal scattering theory (i.e., the geometric scattering theory) has been studied extensively from the early works by Friedlander \cite{Fri1962,Fri1964,Fri1967,Fri1980, Fri2001}, Baez et al. \cite{BaSeZho1990}, H\"ormander \cite{Ho1990} to recent ones by Mason and Nicolas \cite{MaNi2004}, Joudioux \cite{Jo2012,Jo2019}, Nicolas \cite{Ni2016}, Mokdad \cite{Mo2019,Mo2022}, Taujanskas \cite{Ta2019} and Pham \cite{Pha2020,Pha2022}.

The works of Nicolas and Mason \cite{MaNi2004} and Nicolas \cite{Ni2016} put farther a program of conformal scattering theories for the Dirac, Maxwell and scalar wave equations on the asymptotic simple or flat spacetimes. In particular, a conformal scattering theory on the exterior domains of the black hole spacetimes such as Schwarzschild and Kerr ones consists of three following steps: first, we prove the well-posedness of Cauchy problem of the rescaled equations on the rescaled spacetime, then we define and extend the trace operators $\mathcal{T}^{\pm}$ from the finite energy space of initial data on $\Sigma_0=\left\{t=0\right\}$ to the scattering data spaces on conformal boundaries. Second, we show that the extension of the trace operator is injective by proving the energy identity up to the future timelike infinity $i^+$. Third, we prove the well-posedness of Goursat problem with the initial data on  conformal boundaries (which is the scattering data); then as a consequence, we obtain that the extensions of the trace operators $\mathcal{T}^\pm$ are surjective. Therefore, the extended trace operator $\mathcal{T}^+$ (resp. $\mathcal{T}^-$) is an isometry between the space of the initial data on $\Sigma_0$ and the space of the future (resp. past ) scattering data on conformal boundaries. As a consequence, we define the conformal scattering operator $S:= \mathcal{T}^+\circ (\mathcal{T}^-)^{-1}$ that is an isometry that maps the past scattering data to the future scattering data.

Continuing this program, Mokdad \cite{Mo2019,Mo2022} constructed explicitly the conformal scattering theories for the Maxwell and Dirac equations on the exterior and interior of black hole of Reissner-Nordstr\"om de Sitter spacetime (which is outside a spherically symmetric charged body), respectively. On the other hand, Pham \cite{Pha2020} constructed conformal scattering theories for the scalar Reeger-Wheeler and Zerelli equations arising from the linearized gravity fields and the spin $\pm 2$ Teukolsky equations. This is the first step to obtain the conformal scattering theory for the linearized gravity fields on the Schwarzschild spacetime which is spherical symmetric. The extension of the conformal scattering theory on Kerr spacetime (which is non-static and non-spherical symmetric) has been established recently by Pham \cite{Pha2022} for the massless Dirac equations. In the works on the exterior domains of black hole spacetimes \cite{Mo2019,Pha2020,Pha2022}, the authors used the results about the uniformly bounded energy, Morawertz estimate and pointwise decay of the fields to establish the energy identity up to the future (resp. past) timelike infinity $i^+$ (resp. $i^-$) in the second step of the conformal scattering theory's construction. In order to prove the well-posedness of the Goursat problem, the authors used the generalization of H\"ormander's results (see \cite{Ho1990,Ni2006}) in the third step of the construction. 

There are some related works that also use the uniformly bounded energy and pointwise decay results to construct the scattering theory. We refer the readers to the works about the scattering theories for the scalar wave equation on the interior of Reissner-Nordstr\"om de Sitter by Keller et al. \cite{Ke2019}, on the extremal Reissner-Nordstr\"om spacetime by Angelopoulos et al. \cite{An2020}; on the exterior of slowly Kerr spacetime by Dafermos et al. \cite{Da2018}, and on Oppenheimer–Snyder spacetime by Alford \cite{Alford}. The uniformly bounded energy, Morawertz's estimate, energy and pointwise decays are obtained in the program to prove linear and nonlinear stability of black hole spacetimes and the related problems (see \cite{ABlu2,Da2016,Da2018,Da2019,El2020',El2021,El20222,Hu2018,Kl2018,Kl2021,Jo2019,Pa2019'}). The method of $r^p$-theory of Dafermos and Rodnianski \cite{DaRo2010} is an essential tool of the proof in a lot of later works.  
 
The spin $\pm 1$ Teukolsky equations are derived from the extreme components of the Maxwell fields (see Subsection \ref{Equation} and more details in \cite{Ba1973,Pa2019}).
There are two ways to establish the tensorial Fackerell-Ipser equations. The first one is obtained by commuting the spin $\pm 1$ Teukolsky equations with the projected covariant derivatives $\slashed{\nabla}_L$ and $\slashed{\nabla}_{\underline L}$ on the $2$-sphere $\mathbb{S}^2_{(t,r)}$ at $(t,r)$, where $L$ and $\underline{L}$ are outgoing and incoming principal null directions, respectively. The second one is obtained by commuting the scalar Fackerell-Ipser equation with the angular derivatives $r\slashed{\nabla}_{\partial_{x^a}}$. The potentials (which are of zero order in the term of derivatives) in the tensorial Fackerell-Ipser and Teukolsky equations decay as $r^{-2}$, whence the ones in the scalar Regger-Wheeler and Zerelli equations (see \cite{Pha2020}) and also the scalar (real or complex) Fackerell-Ipser equations (see \cite{ABlu2015,Blue2008}) decay as $r^{-3}$. 

The spin $\pm 1$ Teukolsky equations are studied in some recent works by Pasqualotto \cite{Pa2019}, Giorgi \cite{El2019} and Ma \cite{Ma20}. In particular, the authors used $r^p$-method (see \cite{DaRo2010}) to establish the boundedness of energy and study time decays of the associated solutions of Teukolsky equations on Schwarzschild, Reissner-Nordstr\"om and Kerr spacetimes in \cite{Pa2019,El2019,Ma20}, respectively. On the other hand, the peeling for spin $\pm 1$ Teukolsky equations on Schwarzschild spacetime has been studied by Pham in a recent work \cite{Pham2022}.

In this paper, we explore the method in \cite{Mo2019,Ni2016,Pha2020} to establish conformal scattering theories for the tensorial Fackerell-Ipser and spin $\pm 1$ Teukolsky equations on Schwarzschild spacetime. First, we construct the conformal scattering theories for the tensorial Fackerell-Ipser equations in Sections \ref{Fackerell-Ipser} and \ref{ConFac}. In Subsection \ref{ConserFac}, we establish the conservation law \eqref{zero1} for the tensorial Fackerell-Ipser equations by using the energy momentum tensor for tensorial wave equations and the Killing vector field $T=\partial_t$. 
Integrating this conservation law, we obtain the energy equality between the energy flux of solution throughs the initial hypersurface $\Sigma_0 = \left\{ t=0 \right\}$ and energy fluxes through the following null hypersurfaces: $\mathfrak{H}^+_K=\mathfrak{H}^+\cap \left\{ v\leq V_K\right\}$, $\mathcal{H}^+_{K} = \left\{ u=U_K,\, v\geq V_K\right\}$, $\mathcal{I}^+_K= \left\{ v=v_K,\, u\geq U_K \right\}$, $\scri^+_K= \scri^+\cap \left\{ u \leq U_K\right\}$. 
In Subsection \ref{FacSpace}, we define the finite energy spaces $\mathcal{H}(\Lambda^1(\mathbb{S}^2)|_{\Sigma_t})\, (t\geq 0)$ of tensorial fields, then we establish the well-posedness of Cauchy problem for tensorial Fackerell-Ipser equations by extending the method in the previous work of Saka \cite{Saka1985}. The well-posedness of Cauchy problem allows us to define the trace operator $\mathcal{T}^+$ (resp. $\mathcal{T}^-$) for the smooth solution of tensorial Fackerell-Ipser equation which maps the initial data (with smooth and compact support) to the restrictions of the smooth solution on the conformal boudary $\mathfrak{H}^+\cup \scri^+$ (resp. $\mathfrak{H}^-\cup \scri^-$).  

In order to prove the energy identity up to the timelike infinity $i^+$ (and also to $i^-$), we need to use the energy decay results obtained previously in the literature. The decays of the solution of the tensorial Fackerell-Ipser equations can be established from the ones of the scalar Fackerell-Ipser equations. There are some works on the decay of solutions of scalar Fackerell-Ipser equations in Schwarzschild spacetime such as \cite{Blue2008,Ga2014,Me2012}. However, in this work, we will use the energy decay results which have been obtained in a recent work of Pasqualotto \cite{Pa2019}. This energy decay helps us to prove that the energy fluxes through null hypersurfaces $\mathcal{H}^+_{K} = \left\{ u=U_K,\, v\geq V_K\right\}$ and $\mathcal{I}^+_K= \left\{ v=v_K,\, u\geq U_K \right\}$ tend to zero as $U_K$ and $V_K$ tend to infinity. This together with the energy equality obtained in Subsection \ref{ConserFac} lead to the energy identity up to $i^+$, i.e., the energy flux of tensorial Fackerell-Ipser solution through the initial hypersurface $\Sigma_0=\left\{ t=0 \right\}$ is equal to the sum of energy fluxes of solution through the future hoziron $\mathfrak{H}^+$ (resp. the past horizon $\mathfrak{H}^-$) and the future infinity $\scri^+$ (resp. the past infinity $\scri^-$) (see Theorem \ref{EqualityFac}). Therefore, we can extend the future trace operator to an injective operator: $\mathcal{T}^+: \mathcal{H}\to \mathcal{H}^+$ between the finite energy space on $\Sigma_0 = \left\{ t=0 \right\}$ and the scattering data spaces on $\mathfrak{H}^+\cup \scri^+$ (see Theorem \ref{Trace}). Similarly, the extended past trace operator $\mathcal{T}^-: \mathcal{H} \to \mathcal{H}^-$ is also injective. Here, the spaces $\mathcal{H}^+$ (resp. $\mathcal{H}^-$) is the scattering data space which is completion of smooth and compact support tensorial fields on $\mathfrak{H}^+\cup \scri^+$ (resp. $\mathfrak{H}^-\cup \scri^-$) under energy norm (see Definition \ref{ScatteringDataFac}).

In Section \ref{ConFac} we prove that the trace operator is surjective. For this purpose, we establish the well-posedness of the Goursat problem with the smoothly supported compact initial data on the conformal boundary $\mathfrak{H}^+\cup \scri^+$ (resp. $\mathfrak{H}^-\cup \scri^-$). This work is done by developing H\"ormander's work \cite{Ho1990}, for the tensorial wave equations on Schwarzschild spacetime. We project the tensorial Fackerell-Ipser equations on the basic frame of the unit $2$-sphere $\mathbb{S}^2$, we get a symmetrical hyperbolic system which consists of two scalar wave equations with potentials at the first order of derivatives. The well-posedness of the Goursat problem consists of two parts: in the first one, we extend the results in \cite{Ho1990} to solve the Goursat problem of the symmetrical hyperbolic system in the future of a spacelike hypersurface $\mathcal{S}$ which intersects with the horizon at the crossing sphere and crosses $\scri^+$ strictly in the past of the support of the data (in details see Lemma \ref{partlyGoursat}, Corollary \ref{partGo} and Appendix \ref{appendix}); in the second one, we extend the solution obtained in the first part down to $\Sigma_0$, where the method is developed from \cite{Ni2016} (see Theorem \ref{Goursat}). The well-posedness of the Goursat problem shows that the extended trace operator $\mathcal{T}^+$ (resp. $\mathcal{T}^-$) is surjective, hence an isometry. Therefore, we can define the conformal scattering operator $S: \mathcal{H}^-\to \mathcal{H}^+$ for the tensorial Fackerell-Ipser equations that maps the past scattering data to the future scattering data by
$$S:= \mathcal{T}^+\circ (\mathcal{T}^-)^{-1}.$$

Finally, in Section \ref{ConTeu}, we will construct the conformal scattering theories for spin $\pm 1$ Teukolsky equations by using the results obtained in Sections \ref{Fackerell-Ipser} and \ref{ConFac}.
Our method is developed from a recent work of  Masaood (see \cite{Masao}) for the scattering theories of the spin $\pm 2$ Teukolsky equations. In Subsection \ref{TraceScat}, we prove that we can define a $\mathcal{H}^1$-norm of tensorial fields on the spacelike hypersurface $\Sigma_\tau = \left\{ t=\tau \right\}$ which satisfies the spin $+1$ Teukolsky equation via the norm of corresponding tensorial Fackerell-Ipser field (see Proposition \ref{norm}). Then, we prove the well-posedness of the Cauchy problem of the spin $+1$ Teukolsky equations for the initial data in $\mathcal{H}^1(\Lambda^1(\mathbb{S}^2)|_{\Sigma_0})$ (see Theorem \ref{CauchyTeu}) by extending the method in \cite{Saka1985}.
We define the trace operator $\mathfrak{T}^+$ (resp. $\mathfrak{T}^-$) in Definition \ref{TRACETEU} and the energy space $\mathcal{H}^{2,+}$ (resp. $\mathcal{H}^{2,-}$) on the conformal boundary $\mathfrak{H}^+\cup \scri^+$ (resp. $\mathfrak{H}^-\cup \scri^-$) in Definition \ref{normboundary}. 
By using the equality energy obtained for tensorial Fackerell-Ipser equation and the $\mathcal{H}^1$-norm defined on the solution of spin $+1$ Teukolsky equation, we prove that
the extended trace operator $\mathfrak{T}^+:\mathcal{H}^1\to \mathcal{H}^{2,+}$ under $\mathcal{H}^1$-energy norm is injective (see Theorem \ref{TeuTraceInj}). 

In Subsection \ref{GoursatTeu}, we use the well-posedness of Goursat problem of tensorial Fackerell-Ipser equations to prove the one for the Teukolsky equations (see Theorem \ref{GGoursatTeu} and Theorem \ref{GGoursatTeu'}). The well-posedness of Goursat problem shows that the extended trace operator $\mathfrak{T}^+: \mathcal{H}^1\to \mathcal{H}^{2,+}$ (resp. $\mathfrak{T}^-: \mathcal{H}^1\to \mathcal{H}^{2,-}$) is surjective, hence $\mathfrak{T}^+$ is an isometric operator. The conformal scattering operator $\mathfrak{S}: \mathcal{H}^{2,-}\to \mathcal{H}^{2,+}$ for spin $+1$ Teukolsky equation that maps the past scattering data to the future scattering data are given by
$$\mathfrak{S}:= \mathfrak{T}^+\circ (\mathfrak{T}^-)^{-1}.$$
{\bf Notation.}\\
Through this paper, we follow the notations which were used in \cite{Pa2019,Pa2019'} (see also \cite{Chris2009,Da2019}) on the round metric and projected covariant derivatives on the $2$-sphere $\mathbb{S}^2_{(t,r)}$.\\
$\bullet$ We denote the bundle tangent to each $2$-sphere $\mathbb{S}^2_{(t,r)}$ at $(t,r)$ by $\mathcal{B}$ and the vector space of all smooth sections of $\mathcal{B}$ by $\Gamma(\mathcal{B})$. 
We denote local coordinates for $\mathbb{S}^2_{(t,r)}$ by $(x^a,x^b)$ and the associated vector fields to $x^a,\,x^b$ by $\partial_{x^a},\,\partial_{x^b}$, respectively. The space of all $1$-forms on $\mathbb{S}^2_{(t,r)}$ is denoted by $\Lambda^1(\mathcal{B})$.\\
$\bullet$ We denote the metric on $2$-sphere $\mathbb{S}^2_{(t,r)}$ by $\slashed{g}$. Note that $\slashed{g}$ is a round metric and $\slashed{g}=r^2g_{\mathbb{S}^2}$, where $g_{\mathbb{S}^2}$ is the metric on the unit $2$-sphere $\mathbb{S}^2$. \\
$\bullet$ Let $V,\, W\in \Gamma(\mathcal{B})$. We define a connection on $\mathcal{B}$ by
$$\slashed{\nabla}_VW = \left( \nabla_VW\right)^\perp,$$
where $(\cdot)^\perp: T\mathcal{M} \to \mathcal{B}$ is the orthogonal projection on the $2$-sphere $\mathbb{S}^2(t,r)$ for a given $(t,r)$. Here, $\mathcal{M}$ denotes the region outside the Schwarzschild black-hole equipped with the metric $g$ (see Subsection \ref{PENROSE}). 
This connection coincides with the Levi-Civita connection associated with the metric $\slashed{g}$.\\
$\bullet$ For $V\in \Gamma(\mathcal{B})$, there are two other covariant operators (projected covariant derivatives) which are defined by
$$\slashed{\nabla}_L V= \left( \nabla_L V \right)^{\perp},\, \slashed{\nabla}_{\underline L} V = \left(  \nabla_{\underline L} V\right)^{\perp}, $$
where $\nabla$ is the Levi-Civita connection on $(\mathcal{M},g)$ and $L$, $\underline{L}$ are outgoing and incoming principal null directions (see Subsection \ref{PENROSE}).\\
$\bullet$ We denote local coordinates for the unit $2$-sphere $\mathbb{S}^2$ by $(\theta^a,\theta^b)$ and the associated vector fields to $\theta^a,\theta^b$ by $\partial_{\theta^a}$ and $\partial_{\theta^b}$, respectively. Normaly, we have $(\theta^a,\theta^b) =(\theta, \varphi)$.\\
$\bullet$ The space of $1$-forms on the unit $2$-sphere is denoted by $\Lambda^1(\mathbb{S}^2)$. The basic frame of $\Lambda^1(\mathbb{S}^2)$ is denoted by $(\slashed{\nabla}_{\partial_{\theta^a}},\,\slashed{\nabla}_{\partial_{\theta^b}})$, where $\slashed{\nabla}_{\partial_{\theta^a}}$ is the Levi-Civita connection associated with the metric $g_{\mathbb{S}^2}$, follows the vector field $\partial_{\theta^a}$. On the $2$-sphere $\mathbb{S}^2_{(t,r)}$, we have the relation $r\slashed{\nabla}_{\partial_{x^a}} =\slashed{\nabla}_{\partial_{\theta^a}}$. \\
$\bullet$ We denote the covariant Laplacian operator associated with the round metric $\slashed{g}$ on $\mathbb{S}^2_{(t,r)}$ by $\slashed{\Delta}$ and the one associated with the metric $g_{\mathbb{S}^2}$ on unit sphere $\mathbb{S}^2$ by $\slashed{\Delta}_{\mathbb{S}^2}$. We use the definition $\slashed{\Delta} = \slashed{g}^{ab}\slashed{\nabla}_{\partial_{x^a}}\slashed{\nabla}_{\partial_{x^b}}$ through this paper. Follows this definition, we have $\slashed{\Delta}_{\mathbb{S}^2} =r^2\slashed{\Delta}$.
\\
$\bullet$ Beside, we denote the space of smooth compactly supported scalar functions on $\mathcal{M}$ (a smooth manifold without boundary) by ${C}_0^\infty(\mathcal{M})$ and the space of distributions on $\mathcal{M}$ by $\mathcal{D}'(\mathcal{M})$. The space of smooth compactly supported $1$-forms in $\Lambda^1(\mathbb{S}^2)$ on $\mathcal{M}$ is denoted by ${C}_0^\infty(\Lambda^1(\mathbb{S}^2)|_{\mathcal{M}})$.\\
$\bullet$ Let $f(x)$ and $g(x)$ be two real functions. We write $f \lesssim g$ if there exists a constant $D \in (0,+\infty)$ which does not depend on $f,\, g$ and $x$, such that $f(x)\leq D g(x)$ for all $x$, and write $f\simeq g$ if both $f\lesssim g$ and $g\lesssim f$ are valid.\\
{\bf Acknowledgements.} The author would like to thank Prof. Jean-Philippe Nicolas (LMBA, Brest University) for some helpful discussions when this work started.
This work is supported by Vietnam Institute for Advanced Study in Mathematics (VIASM) 2023.

\section{Geometrical and analytical setting} 

\subsection{Schwarzschild metric and Penrose's conformal compactification}\label{PENROSE}
We consider the region outside the Schwarzschild black hole $({\cal{M}}=\mathbb{R}_t\times ]2M,+\infty[_r\times \mathbb{S}^2,g)$, equipped with the Lorentzian metric $g$ given by
$$g = F \d t^2 - F^{-1}\d r^2 - r^2\d\mathbb{S}^2, \, F=F(r)=1-\mu,\, \mu=\frac{2M}{r},$$
where $\d \mathbb{S}^2$ is the euclidean metric on the unit $2$-sphere $\mathbb{S}^2$, and $M>0$ is the mass of the black hole.

We recall that the Regge-Wheeler coordinate $r_*=r+2M \log(r-2M)$ which satisfies $\d r=F\d r_*$. In the coordinates $(t,r_*,\theta^a,\theta^b)$, the Schwarzschild metric takes the form
$$g = F(\d t^2- \d r_*^2) - r^2\d\mathbb{S}^2.$$
The retarded and advanced Eddington-Finkelstein coordinates $u$ and $v$ are defined by
$$u=t-r_*, \, v= t+r_*.$$
The outgoing and incoming principal null directions are 
$$L =\partial_v = \partial_t + \partial_{r_*}, \, \underline{L} = \partial_u = \partial_t - \partial_{r_*},$$
respectively.

Putting $\Omega = 1/r$ and $\hat{g} = \Omega^2g$. We obtain a conformal compactification of the exterior domain in the retarded variables $(u, \, R = 1/r, \, \theta^a,\, \theta^b)$ that is $\left(\mathbb{R}_u\times \left[0,\dfrac{1}{2M}\right] \times \mathbb{S}^2, \hat{g} \right)$ with the rescaled metric 
\begin{equation}
\hat{g} = R^2F \d u^2 - 2\d u\d R - \d\mathbb{S}^2.
\end{equation}
The future null infinity $\scri^+$ and the past horizon $\mathfrak{H}^-$ are null hypersurfaces of the rescaled spacetime
$$\scri^+ = \mathbb{R}_u \times \left\{ 0\right\}_R \times \mathbb{S}^2, \, \mathfrak{H}^- = \mathbb{R}_u \times \left\{ 1/2M\right\}_R \times \mathbb{S}^2.$$
If we use the advanced variables $(v, \, R=1/r, \, \theta^a,\, \theta^b)$, the rescaled metric $\hat{g}$ takes the form
\begin{equation}
\hat{g} = R^2F\d v^2 + 2 \d v \d R - \d\mathbb{S}^2.
\end{equation}
The past null infinity $\scri^-$ and the future horizon $\mathfrak{H}^+$ are described as the null hypersurfaces
$$\scri^- = \mathbb{R}_v \times \left\{ 0\right\}_R \times \mathbb{S}^2, \, \mathfrak{H}^+ = \mathbb{R}_v \times \left\{ 1/2M\right\}_R \times \mathbb{S}^2.$$
Penrose's conformal compactification of $\mathcal{M}$ is
$$\bar{\mathcal{M}} = \mathcal{M} \cup \scri^+ \cup \mathfrak{H}^+ \cup \scri^-\cup \mathfrak{H}^-\cup S_c^2,$$
where $S_c^2$ is the crossing sphere which is an intersection of $\mathfrak{H}^+$ and $\mathfrak{H}^-$.  The construction of $S_c^2$ can be done by using Kruskal-Szekeres coordinates (see \cite{Hawking,Wald}).

Note that, the compactified spacetime $\bar{\mathcal{M}}$ is not compact. There are three ``points'' missing to the boundary: $i^+$, or future timelike infinity, defined as the limit point of uniformly timelike curves as $t\to + \infty$; $i^-$, past timelike infinity, symmetric of $i^+$ in the distant past, and $i_0$, spacelike infinity, the limit point of uniformly spacelike curves as $r\to +\infty$.  These ``points'' are singularities of the rescaled metric $\hat{g}$.

\begin{figure}[H]
\begin{center}
\includegraphics[scale=0.3]{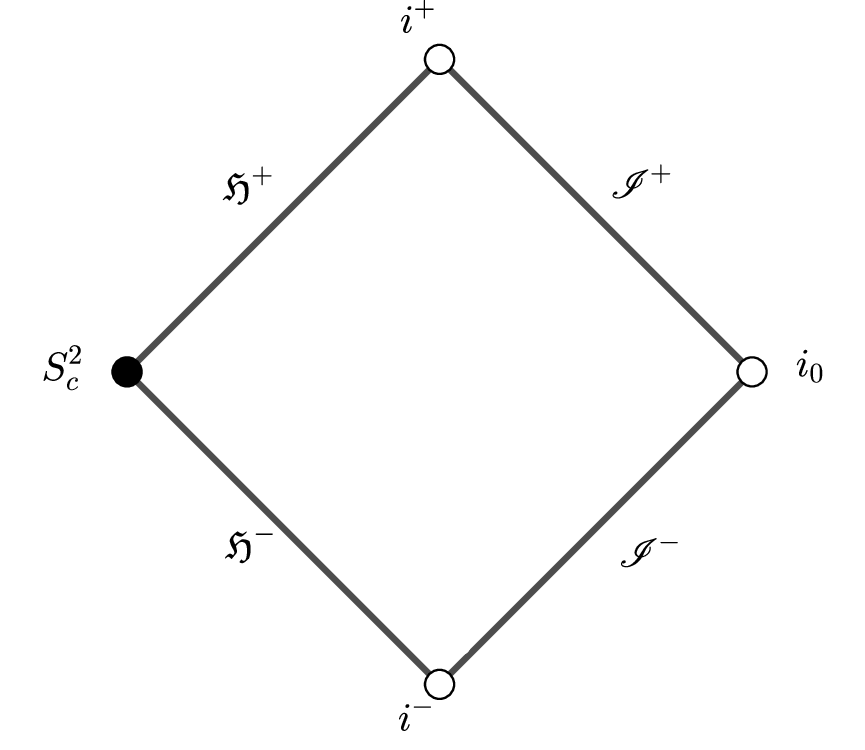}
\caption{Penrose's conformal compactification of $\mathcal{M}$.}
\end{center}
\end{figure}

In the retarded coordinates $(u,\, R, \, \theta^a,\, \theta^b)$, we have the following relations
\begin{equation}\label{Cor1}
\partial_R = - \frac{r^2}{F}(\partial_t + \partial_{r_*}) = -\frac{r^2}{F}L,
\end{equation}
and
\begin{equation}\label{Cor2}
\hat{L}=r^2L,\, \hat{\underline{L}} = \underline{L},\, \partial_{{\theta}^a} = r\partial_{x^a},\, \partial_{{\theta}^b} = r\partial_{x^b}.
\end{equation}
In the advanced coordinates $(v,\, R, \, \theta^a,\, \theta^b)$, we have the following relations
\begin{equation}\label{Cor3}
\partial_R = - \frac{r^2}{F}(\partial_t - \partial_{r_*}) = -\frac{r^2}{F}\underline{L},
\end{equation}
and
\begin{equation}\label{Cor4}
\hat{L}=L,\, \hat{\underline{L}} = r^2\underline{L},\, \partial_{{\theta}^a} = r\partial_{x^a},\, \partial_{{\theta}^b} = r\partial_{x^b},
\end{equation}
where $(x^a,x^b)$ denote local coordinates for the $2$-sphere $\mathbb{S}^2_{(t,r)}$ at $(t,r)$.

The scalar curvature of the rescaled metric $\hat g$ is
$$\mathrm{Scal}_{\hat g}= 12MR.$$

The volume forms associated with the Schwarzschild metric $g$ and the rescaled metric $\hat{g}$ are
$$\mathrm{dVol}_g = r^2F \d t \wedge \d r_* \wedge \d \mathbb{S}^2 \hbox{  and  } \mathrm{dVol}_{\hat g} = R^2F \d t \wedge \d r_* \wedge \d\mathbb{S}^2 = \frac{R^2F}{2}\d u \wedge \d v \wedge \d \mathbb{S}^2,$$
respectively, where $\d \mathbb{S}^2$ is the euclidean area element on unit $2$-sphere $\mathbb{S}^2$.

\subsection{The Maxwell and tensorial wave equations}\label{Equation}
Let $\mathbb{F}$ be an antisymmetric $2$-form on the exterior domain of Schwarzschild black hole $\mathcal{M}$. The Maxwell equations take the form
\begin{equation*}
\d \mathbb{F} = 0, \,\,\, \d *\mathbb{F}=0,
\end{equation*}
where $*$ denotes the Hodge dual operator of $2$-form, i.e,
$$(*\mathbb{F})_{\mu\nu} = \frac{1}{2}e_{\mu\nu\gamma\delta}\mathbb{F}^{\gamma\delta}.$$
The system can be reformulated as follows
\begin{equation*}
\nabla_{[\mu}\mathbb{F}_{\kappa\lambda]}=0, \,\,\, \nabla^\mu \mathbb{F}_{\mu\nu}=0,
\end{equation*}
where the square brackets denote antisymmetrization of indices.

The Maxwell field $\mathbb{F}$ can be decomposed into $1$-forms $\alpha_a, \, \underline{\alpha}_a \in \Lambda^1(\mathcal{B})$ and $\rho,\, \sigma \in C^{\infty}(\mathcal{M})$ which are defined as follows
\begin{gather*}
\alpha(V):= \mathbb{F}(V,L), \, \underline{\alpha}(V):= \mathbb{F}(V,\underline{L}) \hbox{  for all  } V\in \Gamma(\mathcal{B}),\\
\rho:= \frac{1}{2} \left( 1 - \frac{2M}{r} \right)^{-1}\mathbb{F}(\underline{L},L),\, \sigma:= \frac{1}{2}e^{cd}\mathbb{F}_{cd},
\end{gather*}
where $e_{cd}\in \Lambda^2(\mathcal{B})$ is the volume form of $2$-sphere $\mathbb{S}^2_{(t,r)}$ at $(t,r)$.

Let $\mathbb{F}$ be in $\Lambda^2(\mathcal{M})$ such that $\mathbb{F}$ satisfies the Maxwell equation on $\mathcal{M}$. Then, we have the following formulas (see \cite[Proposition 3.6]{Pa2019}):
\begin{equation*}
\frac{1}{r}\slashed{\nabla}_L(r\underline{\alpha}_a) = - (1-\mu)(\slashed{\nabla}_a\rho - e_{ab}\slashed{\nabla}^b\sigma) 
\end{equation*} 
and
\begin{equation*}
\frac{1}{r}\slashed{\nabla}_{\underline L}(r\alpha_a) = (1-\mu)(\slashed{\nabla}_a\rho + e_{ab}\slashed{\nabla}^b\sigma).
\end{equation*}
From this, we can define the $1$-forms in $\Lambda^1(\mathcal{B})$:
\begin{equation}\label{Tran}
\phi_a := \frac{r^2}{F}\slashed{\nabla}_{\underline L}(r\alpha_a), \,\,\, \underline{\phi}_a := \frac{r^2}{F}\slashed{\nabla}_{L}(r\underline{\alpha}_a).
\end{equation}
Moreover, the extreme components $\alpha_a$ and $\underline{\alpha}_a$ satisfy the spin $\pm 1$ Teukolsky equations, respectively (see original proof in \cite{Ba1973} and recent \cite[Proposition 3.6]{Pa2019}):
\begin{equation}\label{Teu1}
\slashed{\nabla}_L\slashed{\nabla}_{\underline L}(r\alpha_a) + \frac{2}{r}\left( 1-\frac{3M}{r} \right)\slashed{\nabla}_{\underline L}(r\alpha_a) - F\slashed\Delta(r\alpha_a) + \frac{F}{r^2}r\alpha_a=0,
\end{equation}
\begin{equation}\label{Teu2}
\slashed{\nabla}_L\slashed{\nabla}_{\underline L}(r\underline{\alpha}_a) - \frac{2}{r}\left( 1-\frac{3M}{r} \right)\slashed{\nabla}_{L}(r\underline{\alpha}_a) - F\slashed\Delta(r\underline{\alpha}_a) + \frac{F}{r^2}r\underline{\alpha}_a=0,
\end{equation}
where $F=1-2MR$ and $\slashed\Delta$ is the covariant Laplacian operator associated with the round metric $\slashed{g}$ on the $2$-sphere $\mathbb{S}^2_{(t,r)}$.

The tensorial Fackerell-Ipser equations are established from the spin $\pm 1$ Teukolsky equations by the following proposition (see also \cite[Proposition 3.7]{Pa2019}).
\begin{proposition}\label{relationEq}
Suppose that $(\alpha_a,\underline{\alpha}_a,\rho,\sigma)$ satisfy the Maxwell equation, then the $1$-forms $\phi_a$ and $\underline{\phi}_a$ satisfy the following tensorial Fackerell-Ipser equations
\begin{equation}\label{Fac01}
\slashed{\Box}_{\hat g}(\phi_a) + \phi_a=0,
\end{equation}
\begin{equation}\label{Fac02}
\slashed{\Box}_{\hat g}(\underline{\phi}_a) + \underline{\phi}_a=0,
\end{equation}
where we denote the tensorial wave operator (also called the tensorial Fackerell-Ipser operator) by
$$\slashed{\Box}_{\hat g} = r^2\slashed{\Box}_g =  \frac{r^2}{F}\slashed{\nabla}_L\slashed{\nabla}_{\underline L} - \slashed{\Delta}_{\mathbb{S}^2},$$
with $\slashed{\Delta}_{\mathbb{S}^2}$ is the covariant Laplacian operator associated with the metric $g_{\mathbb{S}^2}$ on the unit sphere $\mathbb{S}^2$.
\end{proposition}
\begin{proof}
We treat the equation for $\underline{\phi}_a$, the one for $\phi_a$ is obtained similarly. A straightforward calculation gives
$$\underline{L}\left( \frac{r^2}{F} \right) = -\frac{r^2}{F}\frac{2}{r}\left( 1 - \frac{3M}{r} \right).$$
Hence, the Teukolsky equation \eqref{Teu2} is equivalent to
\begin{equation*}
\frac{F}{r^2}\slashed{\nabla}_{\underline{L}}\left( \frac{r^2}{F}\slashed{\nabla}_L(r\underline{\alpha}_a) \right) - F\slashed{\Delta}(r\underline{\alpha}_a) + \frac{F}{r^2}r\underline{\alpha}_a =0.
\end{equation*}
Therefore,
\begin{equation*}
\slashed{\nabla}_{\underline{L}}\left( \frac{r^2}{F}\slashed{\nabla}_L(r\underline{\alpha}_a) \right) - r^2\slashed{\Delta}(r\underline{\alpha}_a) + r\underline{\alpha}_a =0.
\end{equation*}
By applying $\slashed{\nabla}_L$ to the above equation with noting that $[\slashed{\nabla}_L,r^2\slashed{\Delta}] = 0$ and $[\slashed{\nabla}_L,\slashed{\nabla}_{\underline L}] =0$, we get
\begin{equation*}
\slashed{\nabla}_L\slashed{\nabla}_{\underline L}(\underline{\phi}_a) - F\slashed{\Delta}(\underline{\phi}_a)+ \frac{F}{r^2}\underline{\phi}_a=0.
\end{equation*}
This equation is equivalent to \eqref{Fac02} because $r^2\slashed{\Delta} = \slashed{\Delta}_{\mathbb{S}^2}$.
\end{proof}
\begin{remark}
We have the following expressions of tensorial Fackerell-Ipser equations \eqref{Fac01} and \eqref{Fac02} in the retarded coordinates and advanced coordinates in $(\mathcal{M},\hat{g})$: 

In the retarded coordinates $(u,\,R,\,\theta^a,\,\theta^b)$: by using relations \eqref{Cor1} and \eqref{Cor2}, the tensorial Fackerell-Ipser \eqref{Fac01} has the following form
\begin{eqnarray}\label{ReFac011}
\slashed{\Box}_{\hat g}{\phi}_a + {\phi}_a&=& \frac{r^2}{F}\slashed{\nabla}_L\slashed{\nabla}_{\underline{L}}\phi_a - \slashed{\Delta}_{\mathbb{S}^2}\phi_a + \phi_a \cr
&=&\frac{1}{F}\slashed{\nabla}_{\hat L}\slashed{\nabla}_{\hat{\underline{L}}}{\phi}_a - \slashed{\Delta}_{\mathbb{S}^2}{\phi}_a + {\phi}_a \cr
&=& -2\slashed{\nabla}_u\slashed{\nabla}_R{\phi}_a - \slashed{\nabla}_R R^2(1-2MR)\slashed{\nabla}_R{\phi}_a - \slashed{\Delta}_{\mathbb{S}^2}{\phi}_a + {\phi}_a = 0
\end{eqnarray}
and the tensorial Fackerell-Ipser \eqref{Fac02} has the following form
\begin{eqnarray}\label{ReFac022}
\slashed{\Box}_{\hat g}\underline{\phi}_a + \underline{\phi}_a&=& \frac{r^2}{F}\slashed{\nabla}_L\slashed{\nabla}_{\underline{L}}\, {\underline{\phi}_a} - \slashed{\Delta}_{\mathbb{S}^2}\underline{\phi}_a + \underline{\phi}_a\cr
&=&\frac{1}{F}\slashed{\nabla}_{\hat L}\slashed{\nabla}_{\hat{\underline{L}}}\underline{\phi}_a - \slashed{\Delta}_{\mathbb{S}^2}\underline{\phi}_a + \underline{\phi}_a \cr
&=& -2\slashed{\nabla}_u\slashed{\nabla}_R\underline{\phi}_a - \slashed{\nabla}_R R^2(1-2MR)\slashed{\nabla}_R\underline{\phi}_a - \slashed{\Delta}_{\mathbb{S}^2}\underline{\phi}_a + \underline{\phi}_a = 0.
\end{eqnarray}

In the advanced coordinates $(v,\,R,\,\theta^a,\, \theta^b)$: by using relations \eqref{Cor3} and \eqref{Cor4}, the tensorial Fackerell-Ipser equation \eqref{Fac01} has the following form
\begin{eqnarray}\label{ReFac01}
\slashed{\Box}_{\hat g}\phi_a + \phi_a &=& \frac{r^2}{F}\slashed{\nabla}_{\underline{L}}\slashed{\nabla}_L\phi_a - \slashed{\Delta}_{\mathbb{S}^2}\phi_a + \phi_a\cr
&=& \frac{1}{F}\slashed{\nabla}_{\hat{\underline{L}}}\slashed{\nabla}_{\hat L}\phi_a - \slashed{\Delta}_{\mathbb{S}^2}\phi_a + \phi_a \cr
&=& -2\slashed{\nabla}_v\slashed{\nabla}_R\phi_a - \slashed{\nabla}_R R^2(1-2MR)\slashed{\nabla}_R\phi_a - \slashed{\Delta}_{\mathbb{S}^2}\phi_a + \phi_a = 0
\end{eqnarray}
and the tensorial Fackerell-Ipser equation \eqref{Fac02} has the following form
\begin{eqnarray}\label{ReFac02}
\slashed{\Box}_{\hat g}\underline{\phi}_a + \underline{\phi}_a&=& \frac{r^2}{F}\slashed{\nabla}_{\underline{L}}\slashed{\nabla}_L\underline{\phi}_a - \slashed{\Delta}_{\mathbb{S}^2}\underline{\phi}_a + \underline{\phi}_a\cr
&=&\frac{1}{F}\slashed{\nabla}_{\hat{\underline{L}}}\slashed{\nabla}_{\hat L}\underline{\phi}_a - \slashed{\Delta}_{\mathbb{S}^2}\underline{\phi}_a + \underline{\phi}_a \cr
&=& -2\slashed{\nabla}_v\slashed{\nabla}_R\underline{\phi}_a - \slashed{\nabla}_R R^2(1-2MR)\slashed{\nabla}_R\underline{\phi}_a - \slashed{\Delta}_{\mathbb{S}^2}\underline{\phi}_a + \underline{\phi}_a = 0.
\end{eqnarray}

\end{remark}

Another way to obtain the tensorial Fackerell-Ipser equations \eqref{Fac01} and \eqref{Fac02} is to use the scalar Fackerell-Ipser equation. In particular, since $L= \partial_t + \partial_{r_*}, \, \underline{L}= \partial_t - \partial_{r_*}$ and $\slashed{\Delta} = \frac{1}{r^2}\slashed{\Delta}_{\mathbb{S}^2}$, the scalar wave equation on Schwarzschild spacetime can be expressed as
\begin{eqnarray*}
\Box_g \psi &=& \frac{1}{F} \left( \partial_t^2 - \frac{1}{r^2}\partial_{r_*}r^2\partial_{r_*} \right)\psi - \frac{1}{r^2}\slashed\Delta_{\mathbb{S}^2}\psi \cr
&=& \frac{1}{F}L\underline{L}\psi -\frac{2}{r}\partial_{r_*}\psi - \frac{1}{r^2}\slashed{\Delta}_{\mathbb{S}^2}\psi\cr
&=& \frac{1}{F}L\underline{L}\psi -\frac{2}{r}\partial_{r_*}\psi - \slashed{\Delta}\psi.
\end{eqnarray*}
Hence
$$\Box_g \psi + \frac{2}{r}\partial_{r_*}\psi = \frac{1}{F}L\underline{L}\psi - \slashed{\Delta}\psi.$$
The right-hand side is the scalar Fackerell-Ipser operator which has the same form as the rescaled scalar wave operator by multiplying the factor $r^2$ due to
$$\Box_{\hat g} = \frac{r^2}{F}L\underline{L}\psi - \slashed{\Delta}_{\mathbb{S}^2}\psi.$$
Moreover, we have the following relations
$$\phi_a = r^3(\slashed\nabla_{\partial_{x^a}} \rho + e_{ab}\slashed\nabla^{\partial_{x^b}}\sigma),\,\,\, \underline{\phi}_a = r^3(-\slashed\nabla_{\partial_{x^a}} \rho + e_{ab}\slashed\nabla^{\partial_{x^b}}\sigma),$$
where $e_{ab}$  is the induced volume form on the sphere $\mathbb{S}^2_{(t,r)}$ and the scalar functions $r^2\rho$, $r^2\sigma$ satisfy the scalar Fackerell-Ipser equation (see \cite[Remark 2.10]{Pa2019} or \cite[Appendix D.1]{Pa2019'}):
\begin{equation}\label{Scalar}
\Box_g\psi + \frac{2}{r}\partial_{r_*}\psi= \frac{1}{F}L\underline{L}\psi - \slashed{\Delta}\psi=0.
\end{equation}
We have the following commutators on scalar fields (see the proof in Appendix \ref{appen}):
\begin{equation}\label{commutator1}
[r\slashed\nabla_{\partial_{x^a}},\, \slashed{\nabla}_L]= [r\slashed\nabla_{\partial_{x^a}},\slashed{\nabla}_{\underline{L}}]=0, \, [r\slashed{\nabla}_{\partial_{x^a}},\slashed{\Delta}]= \dfrac{1}{r^2}(r\slashed{\nabla}_{\partial_{x^a}}).
\end{equation}
Commuting the covariant angular derivative $r\slashed{\nabla}_{\partial_{x^a}}$ and its Hodge dual $r(e_{ab}\slashed{\nabla}^{\partial_{x^b}})$ to the scalar wave equation \eqref{Scalar} with $\psi= \pm r^2\rho$ and $\psi= r^2\sigma$, respectively; then by using the commutators \eqref{commutator1}, we get the tensorial Fackerell-Ipser equations \eqref{Fac01} and \eqref{Fac02} (see also  \cite[Remark 2.10]{Pa2019} and \cite[Remark 7.1]{Da2019}).

Since $r^2\rho$ and $r^2\sigma$ satisfy Equation \eqref{Scalar}, we have that $r\rho$ and $r\sigma$ satisfy the scalar wave equation $\Box_g\psi -\frac{2M}{r^2}\psi = 0$.
Applying the projected covariant angular derivative $r\slashed{\nabla}$ to this equation, we get
\begin{eqnarray}\label{TensorWave}
&&\slashed{\Box}_g \widetilde{\phi}_a - \frac{2}{r}\slashed{\nabla}_{r_*}\widetilde{\phi}_a + \frac{1-2M}{r^2} \widetilde{\phi}_a \cr
&=& \frac{1}{F}\slashed{\nabla}_L\slashed{\nabla}_{\underline L}\widetilde{\phi}_a - \slashed{\Delta}\widetilde{\phi}_a - \frac{2}{r}\slashed{\nabla}_{r_*}\widetilde{\phi}_a + \frac{1-2M}{r^2}\widetilde{\phi}_a \cr
&=& \frac{1}{F}\slashed{\nabla}_L\slashed{\nabla}_{\underline L}\widetilde{\phi}_a - \slashed{\Delta}\widetilde{\phi}_a - \frac{1}{r}(\slashed{\nabla}_{L} - \slashed{\nabla}_{\underline{L}})\widetilde{\phi}_a + \frac{1-2M}{r^2}\widetilde{\phi}_a =0,
\end{eqnarray}
where
$$\widetilde{\phi}_a = r[\slashed\nabla_a (r\rho) + e_{ab}\slashed\nabla^b (r\sigma)] = r^2[\slashed\nabla_a \rho + e_{ab}\slashed\nabla^b \sigma].$$
Similarly,
\begin{equation}\label{TensorWave1}
\frac{1}{F}\slashed{\nabla}_L\slashed{\nabla}_{\underline L}\underline{\widetilde{\phi}}_a - \slashed{\Delta}\underline{\widetilde{\phi}}_a - \frac{1}{r}(\slashed{\nabla}_{L} - \slashed{\nabla}_{\underline{L}})\underline{\widetilde{\phi}}_a + \frac{1-2M}{r^2}\underline{\widetilde{\phi}}_a =0,
\end{equation}
where
$$\underline{\widetilde{\phi}}_a = r[-\slashed\nabla_a (r\rho) + e_{ab}\slashed\nabla^b (r\sigma)] = r^2[-\slashed\nabla_a \rho + e_{ab}\slashed\nabla^b \sigma].$$
The conformal rescaled equation of \eqref{TensorWave} (resp. \eqref{TensorWave1}) is the tensorial Fackerell-Ipser equation \eqref{Fac01} (resp. \eqref{Fac02}). Therefore, Equation \eqref{Fac01} (resp. \eqref{Fac02}) can be considered as a conformal equation in the conpactification domain $(\bar{\mathcal{M}},\hat{g})$.

In the rest of this paper, we will construct the conformal scattering theory for the tensorial wave equation \eqref{TensorWave} (resp. \eqref{TensorWave1}), i.e, the scattering theory for the tensorial Fackerell-Ipser equation \eqref{Fac01} (resp. \eqref{Fac02}). Then, using the scattering for the tensorial Fackerell-Ipser equations we will establish the scattering for the Teukolsky equations \eqref{Teu1} and \eqref{Teu2}.  
\begin{remark}
We can see that, the potentials of the scalar Regge-Wheeler and Zerilli equations (see \cite{Pha2020}) decay as $r^{-3}$, whence the ones of the tensorial wave equations \eqref{TensorWave} and \eqref{TensorWave1} decay as $r^{-2}$.
\end{remark}

\section{Energies of the tensorial Fackerell-Ipser field}\label{Fackerell-Ipser}
\subsection{Energy conservation law and energy fluxes}\label{ConserFac}
For a $1$-form $\xi_a \in \Lambda^1(\mathcal{B})$ on the $2$-sphere $\mathbb{S}^2_{(t,r)}$, we define
$$\xi^a = \xi_b g_{\mathbb{S}^2}^{ab}, \hbox{    } \slashed{\nabla}_{\underline L}\xi^a = \slashed{\nabla}_{\underline L}\left(\xi_b g_{\mathbb{S}^2}^{ab}\right) \hbox{  and  } \slashed{\nabla}_L \xi^a = \slashed{\nabla}_L \left(\xi_b g_{\mathbb{S}^2}^{ab}\right).$$
We define also the pointwise norms for $1$-form $\xi_a$ and $2$-tensor $\zeta_{ab}$ on $\mathbb{S}^2_{(t,r)}$ by
\begin{equation}\label{contraction}
|\xi_a|^2 = g_{\mathbb{S}^2}^{ab}\xi_a\xi_b, \, |\zeta_{ab}|^2 = g_{\mathbb{S}^2}^{ac}g_{\mathbb{S}^2}^{bd}\zeta_{ab}\zeta_{cd},
\end{equation}
where $g^{ab}_{\mathbb{S}^2}$ is the inverse of metric $g_{\mathbb{S}^2}$ on the unit sphere $\mathbb{S}^2$.

Similar to the energy momentum tensors for wave equations on scalar functions (see \cite{Ni2016,Pha2020}) and for wave equations on tensor fields (see \cite{Saka1985}), we define the one for the tensorial Fackerell-Ipser equation \eqref{Fac01}: $\slashed{\Box}_{\hat g} \phi_a + \phi_a = 0$ (we use also the forms \eqref{ReFac011} and \eqref{ReFac01} of \eqref{Fac01} to calculate) as follows
\begin{equation}\label{energytensor}
\mathbb{T}_{cd}(\phi_a) = \mathbb{T}_{(cd)} (\phi_a) = \hat{\slashed{\nabla}}_c \phi_a \hat{\slashed{\nabla}}_d \phi^a - \frac{1}{2} \hat{g}_{cd}\hat{g}^{ef}\hat{\slashed{\nabla}}_e\phi_a \hat{\slashed{\nabla}}_f\phi^a + \frac{1}{2}|\phi_a|^2\hat{g}_{cd},
\end{equation}
where $\hat{\slashed{\nabla}}_a$ denotes the projection of rescaled covariant derivative $\hat{\nabla}_a$ (which is associated with the rescaled metric $\hat{g}=\frac{1}{r^2}g$) on the unit sphere $\mathbb{S}^2$. Since $\hat{\slashed{g}}= \frac{1}{r^2}\slashed{g}= g_{\mathbb{S}^2}$, and the relations \eqref{Cor2}, \eqref{Cor4}, we have  
\begin{equation}\label{ResDeri}
\hat{\slashed{\nabla}}_{L} = \slashed{\nabla}_{\hat L}, \, \hat{\slashed{\nabla}}_{\underline{L}}=\slashed{\nabla}_{\hat{\underline L}},\,\hat{\slashed{\nabla}}_{\partial_{x^a}} = \slashed{\nabla}_{\partial_{\theta^a}}.
\end{equation}
In order to obtain the conservation law for \eqref{Fac01}, we use timelike Killing vector $T=T^c\partial_c = \partial_t$, which satisfies $\hat{\slashed{\nabla}}_{(c}T_{d)} = 0$.
For a solution ${\phi}_a$ of the tensorial Fackerell-Ipser equation \eqref{Fac01}, we have
\begin{equation}\label{Non1}
\hat{\slashed{\nabla}}^c \mathbb{T}_{cd}({\phi}_a) = \left( \slashed{\Box}_{\hat g}{\phi}_a + {\phi}_a\right)\hat{\slashed{\nabla}}_d{\phi}^a=0,
\end{equation}
where $\slashed{\Box}_{\hat g} =  \frac{r^2}{F}\slashed{\nabla}_L\slashed{\nabla}_{\underline L} - \slashed{\Delta}_{\mathbb{S}^2}$.
Setting
\begin{equation}\label{current}
J_c({\phi}_a) := T^d \mathbb{T}_{cd}({\phi}_a).
\end{equation}
From \eqref{Non1} and $\slashed{\nabla}_{(c}T_{d)} = 0$, the nonlinear energy current $J_c({\phi}_a)$ satisfies the following conservation law
\begin{equation}\label{zero1}
\hat{\slashed{\nabla}}^cJ_c({\phi}_a) = \hat{\slashed{\nabla}}_{(c}T_{d)} \mathbb{T}^{cd}({\phi}_a) = 0.
\end{equation}

Now we define the energy fluxes for tensorial Fackerell-Ipser equations \eqref{Fac01} through oriented (null or spacelike) hypersurfaces by the same way in \cite{Ni2016,Pha2020}. We follow the convention used by Penrose and Rindler \cite{PeRi84} about the Hodge dual of a $1$-form $\beta_a$ on a spacetime $(\mathscr{M}, \mathrm{g})$ (i.e., a $4-$dimensional Lorentzian manifold that is oriented and time-oriented):
\begin{equation*}
(*\beta)_{abcd} = e_{abcd}{\beta}^d,
\end{equation*}
where $e_{abcd}$ is the volume form on $(\mathscr{M},\mathrm{g})$, denoted simply by $\mathrm{dVol}_{\mathrm{g}}$. We shall use the following differential operator of the Hodge star
\begin{equation*}
\d *\beta = -\frac{1}{4}(\nabla_a\beta^a)\mathrm{dVol}^4_{\mathrm{g}}.
\end{equation*}
If $\mathcal{S}$ is the boundary of a bounded open set $\Omega$ in $\mathscr{M}$, and has outgoing
orientation, then by using Stokes theorem, we have
\begin{equation}\label{Stokesformula}
-4\int_{\mathcal{S}}*\beta = \int_{\Omega}(\nabla_a\beta^a)\mathrm{dVol}^4_{\mathrm{g}}.
\end{equation}
Now, let ${\phi}_a$ be a solution of \eqref{Fac01} with smooth and compactly supported initial data on the rescaled spacetime $(\bar{\mathcal{M}},\hat{g})$. By using \eqref{Stokesformula} and \eqref{current}, we define the rescaled energy fluxes of $\phi_a$ associated with the Killing vector field $T=\partial_t$, through an oriented (null or spacelike) hypersurface $\mathcal{S}$ in $\bar{\mathcal{M}}$ as follows (see the same formula in Equation (2.6), page 184 in \cite{Saka1985} and also \cite{Ni2016,Pha2020} for similar formulas for scalar wave equations):
\begin{equation}\label{e1}
\mathcal{E}^T_{\mathcal {S}}({\phi}_a) = -4\int_{\mathcal{S}} *J_c({\phi}_a)\d x^c = \int_{\mathcal{S}} J_c({\phi}_a){\mathcal{N}}^c{\mathcal{L}}\hook \mathrm{dVol}_{\hat g},
\end{equation}
where ${\mathcal{L}}$ is a transverse vector to $\mathcal{S}$ and ${\mathcal{N}}$ is the normal vector field to $\mathcal{S}$ such that
$\hat{g}^{ab}{\mathcal{L}}_a{\mathcal{N}}_b=1$.

We consider a domain $\Omega \subset \bar{\mathcal{M}}$ (the colored domain in Figure 2 below) which has the boundary obtained by five hypersurfaces as follows
$${\Sigma}_0 = \left\{ t=0\right\},\, \mathcal{H}^+_K = \left\{u = U_K,\, v \geq V_K \right\},\,\mathcal{I}^+_K = \left\{v=V_K,u \geq U_K \right\}$$
and
$$\mathfrak{H}^{+}_K =\mathfrak{H}^+\cap \left\{v\leq V_K \right\},\, \scri^+_K = \scri^+\cap \left\{ u\leq U_K \right\}.$$
\begin{figure}[H]
\begin{center}
\includegraphics[scale=0.9]{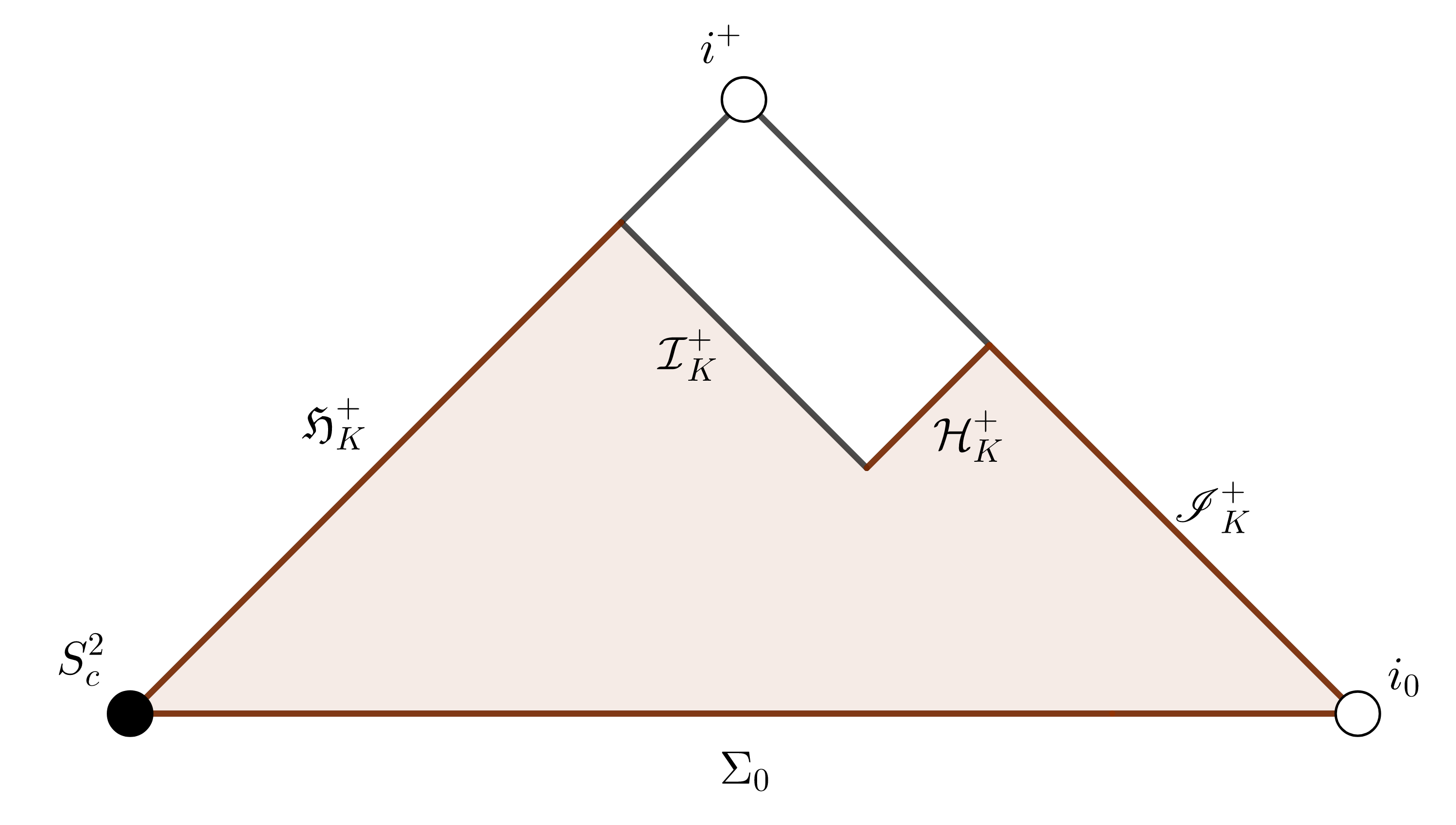}
\caption{The domain $\Omega$ in Penrose's conformal compactification $\bar{\mathcal{M}}$.}
\end{center}
\end{figure}

\begin{proposition}\label{UpToInfinity}
Consider the smooth and compactly supported initial data on $\Sigma_0$, we can define the energy fluxes of the solution $\phi_a$ of Equation \eqref{Fac01}, through the null conformal boundary $\mathfrak{H}^+\cup \scri^+$ by 
$${\mathcal{E}}^T_{\scri^+}(\phi_a) + {\mathcal{E}}^T_{\mathfrak{H}^+}(\phi_a) := \lim\limits_{U_K,V_K\rightarrow +\infty} \left( {\mathcal{E}}^T_{\scri_K^+}(\phi_a) + {\mathcal{E}}^T_{\mathfrak{H}_K^+}(\phi_a) \right).$$
Moreover, we have
$${\mathcal{E}}^T_{\scri^+}(\phi_a) + {\mathcal{E}}^T_{\mathfrak{H}^+}(\phi_a) \leq {\mathcal{E}}^T_{\Sigma_0}(\phi_a),$$
where the equality holds if and only if 
$$\lim\limits_{U_K,V_K\rightarrow +\infty}\left( {\mathcal{E}}^T_{\mathcal{H}^+_K}(\phi_a) + {\mathcal{E}}^T_{\mathcal{I}^+_K}(\phi_a) \right)= 0.$$ 
\end{proposition}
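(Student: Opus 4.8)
The plan is to obtain the statement directly from the energy equality \eqref{EnergyEquality}, using only the positivity of the individual fluxes together with a monotone–convergence argument; no new estimate is required.

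\emph{Positivity.} On the compactified domain one has $0\le R\le\tfrac1{2M}$, hence $F=1-2MR\ge0$, $R^2F\ge0$ and $2MR+1\ge1$; since in addition $\slashed{g}_{ab}=r^2 g_{\mathbb S^2}$ is positive definite, each of $\mathcal E_{\bar\Sigma_0}(\hat\Phi)$, $\mathcal E_{\mathcal H^+_{2u_T}}(\hat\Phi)$, $\mathcal E_{\mathcal I^+_{2v_T}}(\hat\Phi)$, $\mathcal E_{\mathcal H_{u_T}}(\hat\Phi)$, $\mathcal E_{\mathcal I_{v_T}}(\hat\Phi)$ is the integral of a sum of non-negatively weighted squared norms, hence $\ge0$; moreover $\mathcal E_{\bar\Sigma_0}(\hat\Phi)=\mathcal E_{\Sigma_0}(\hat\Phi)$ because $\bar\Sigma_0=\Sigma_0$ here. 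Dropping the two non-negative terms $\mathcal E_{\mathcal H_{u_T}}(\hat\Phi)$ and $\mathcal E_{\mathcal I_{v_T}}(\hat\Phi)$ from \eqref{EnergyEquality} therefore gives, for every $T>0$,
\[
\mathcal E_{\mathcal H^+_{2u_T}}(\hat\Phi)+\mathcal E_{\mathcal I^+_{2v_T}}(\hat\Phi)\ \le\ \mathcal E_{\Sigma_0}(\hat\Phi).
\]

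\emph{Limit on the boundary.} As $u_T\to\infty$ the hypersurface $\mathcal H^+_{2u_T}$ tends to the portion $\mathfrak H^+_T$ of the future horizon, and since the solution of the Cauchy problem for \eqref{Fac1''} with smooth compactly supported data (well posed by \eqref{EnergyEquality'} and Leray's theorem) extends smoothly up to $\bar{\mathcal B}_I$ while $F$ vanishes on $\mathfrak H^+$, the terms carrying a factor $R^2F$ drop out and $\mathcal E_{\mathcal H^+_{2u_T}}(\hat\Phi)\to\mathcal E_{\mathfrak H^+_T}(\hat\Phi)$; similarly $\mathcal E_{\mathcal I^+_{2v_T}}(\hat\Phi)\to\mathcal E_{\scri^+_T}(\hat\Phi)$ as $v_T\to\infty$, the only surviving contributions being $|\slashed{\nabla}_L\hat\Phi|^2$ and $|\slashed{\nabla}_{\underline L}\hat\Phi|^2$ respectively, in accordance with the Remark above. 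Passing to this limit in the displayed inequality yields $\mathcal E_{\mathfrak H^+_T}(\hat\Phi)+\mathcal E_{\scri^+_T}(\hat\Phi)\le\mathcal E_{\Sigma_0}(\hat\Phi)$ for all $T$. Since $\{\mathfrak H^+_T\}$ and $\{\scri^+_T\}$ exhaust $\mathfrak H^+$ and $\scri^+$ increasingly and the integrands are non-negative, the map $T\mapsto\mathcal E_{\mathfrak H^+_T}(\hat\Phi)+\mathcal E_{\scri^+_T}(\hat\Phi)$ is non-decreasing and bounded above by $\mathcal E_{\Sigma_0}(\hat\Phi)$, hence convergent; this is exactly the assertion that the limit defining $\mathcal E_{\scri^+}(\hat\Phi)+\mathcal E_{\mathfrak H^+}(\hat\Phi)$ exists, and letting $T\to\infty$ in the last inequality gives $\mathcal E_{\scri^+}(\hat\Phi)+\mathcal E_{\mathfrak H^+}(\hat\Phi)\le\mathcal E_{\Sigma_0}(\hat\Phi)$.

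\emph{Equality case and main obstacle.} Rewriting \eqref{EnergyEquality} as
\[
\mathcal E_{\mathcal H_{u_T}}(\hat\Phi)+\mathcal E_{\mathcal I_{v_T}}(\hat\Phi)=\mathcal E_{\Sigma_0}(\hat\Phi)-\big(\mathcal E_{\mathcal H^+_{2u_T}}(\hat\Phi)+\mathcal E_{\mathcal I^+_{2v_T}}(\hat\Phi)\big)
\]
and letting $u_T,v_T\to\infty$, the right-hand side converges by the previous step to $\mathcal E_{\Sigma_0}(\hat\Phi)-\big(\mathcal E_{\scri^+}(\hat\Phi)+\mathcal E_{\mathfrak H^+}(\hat\Phi)\big)\ge0$; hence $\lim_{u_T,v_T\to\infty}\big(\mathcal E_{\mathcal H_{u_T}}(\hat\Phi)+\mathcal E_{\mathcal I_{v_T}}(\hat\Phi)\big)$ exists and equals that number, so it vanishes precisely when $\mathcal E_{\scri^+}(\hat\Phi)+\mathcal E_{\mathfrak H^+}(\hat\Phi)=\mathcal E_{\Sigma_0}(\hat\Phi)$, which is the claimed characterization of equality. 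The only point that goes beyond bookkeeping is the convergence of the fluxes to the conformal boundary used in the second step — the identifications $\lim_{u_T\to\infty}\mathcal E_{\mathcal H^+_{2u_T}}(\hat\Phi)=\mathcal E_{\mathfrak H^+_T}(\hat\Phi)$ and its analogue at $\scri^+$ — and I expect this to be the main (though mild) technical obstacle; it follows from the smoothness up to $\bar{\mathcal B}_I$ of the solution of the Cauchy problem together with the vanishing of $F$ on the null conformal boundary.
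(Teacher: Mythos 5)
Your argument is correct and is essentially the paper's: the proposition is stated there as a direct consequence of the energy identity \eqref{EnergyEquality}, and your positivity-plus-monotone-convergence bookkeeping, with the identification of the interior null fluxes $\mathcal{E}_{\mathcal{H}^+_{2u_T}}$, $\mathcal{E}_{\mathcal{I}^+_{2v_T}}$ with the boundary fluxes (which you correctly isolate as the only non-trivial point, supplied by the $H^1_{loc}$ regularity of $\hat{\phi}_a$ up to $\mathfrak{H}^+\cup\scri^+$) is exactly the intended justification. The only blemish is notational: the asserted limit $\lim_{u_T\to\infty}\mathcal{E}_{\mathcal{H}^+_{2u_T}}(\hat{\Phi})=\mathcal{E}_{\mathfrak{H}^+_T}(\hat{\Phi})$ cannot have a $T$ surviving on the right; what is meant is that for each fixed $T_0$ the flux through $\{u=2u_T,\ v\le v_{T_0}\}$ converges to the horizon flux over $\{v\le v_{T_0}\}$, after which one lets $T_0\to\infty$ using the uniform bound by $\mathcal{E}_{\Sigma_0}(\hat{\Phi})$.
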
 
\begin{proof} 
The proof of this proposition is similar to the one for the scalar wave equations (see \cite[Proposition 1]{Pha2020} or \cite[Section 3.2]{Ni2016}). Intergrating the conservation law \eqref{zero1} on $\Omega$ and by using the Stokes's formula \eqref{Stokesformula}, we get an exact energy identity between the hypersurfaces $\Sigma_0, \, \mathfrak{H}_K^+, \, \mathcal{H}^+_K,\, \mathcal{I}^+_K$ and $\scri_K^+$ as follows 
\begin{equation}\label{energyidentity}
{\mathcal{E}}^T_{\Sigma_0}(\phi_a)= {\mathcal{E}}^T_{\mathfrak{H}_K^+}(\phi_a) + {\mathcal{E}}^T_{\mathcal{I}^+_K}(\phi_a) + {\mathcal{E}}^T_{\mathcal{H}^+_K}(\phi_a) + {\mathcal{E}}^T_{\scri_K^+}(\phi_a).
\end{equation}
On $\Sigma_0$, we take
$$\mathcal{L}_{\Sigma_0}=\frac{r^2}{F}\partial_t, \, \mathcal{N}_{\Sigma_0}=\partial_t.$$
On $\scri^+_K$, take $\mathcal{L}_{\scri^+_K}= -\partial_R$ in the $(u,R,\omega)$ coordinates
$$\mathcal{L}_{\scri^+_K}= \frac{r^2F^{-1}}{2}l|_{\scri^+_K}.$$
On $\mathfrak{H}^+_K$, take $\mathcal{L}_{\mathfrak{H}^+_K} = \partial_R$ in the $(v,R,\omega)$ coordinates
$$\mathcal{L}_{\mathfrak{H}^+_K} = \frac{r^2F^{-1}}{2}n|_{\mathfrak{H}^+_K}.$$
Hence, we have $\mathcal{N}=\partial_t$ on both $\mathfrak{H}^+_K$ and $\scri^+_K$. This corresponds to $\mathcal{N}=\partial_v$ on $\mathfrak{H}^+_K$ and $\mathcal{N}=\partial_u$ on $\scri^+_K$. 
The transversal and normal vectors of the hypersurface $\mathcal{I}^+_K$ (resp. $\mathcal{H}_K$) can be choosen exactly as the ones of $\scri^+_K$ (resp. $\mathfrak{H}^+_K$). From this, it follows that the the energy identity given in \eqref{energyidentity} becomes
\begin{eqnarray*}
&&\int_{\mathfrak{H}_K^+}{J}_c(\phi_a)(\partial_v)^c\mathcal{L}_{\mathfrak{H}^+_K}\hook \mathrm{dVol}_{\hat g} + \int_{\scri_K^+}{J}_c(\phi_a)(\partial_u)^c\mathcal{L}_{\scri^+_K}\hook \mathrm{dVol}_{\hat g}\cr
&&+ \int_{\mathcal{I}^+_K}{J}_c(\phi_a)(\partial_u)^c\mathcal{L}_{\scri^+_K}\hook \mathrm{dVol}_{\hat g} + \int_{\mathcal{H}^+_K}{J}_c(\phi_a)(\partial_v)^c \mathcal{L}_{\mathfrak{H}^+_K}\hook \mathrm{dVol}_{\hat g} \cr
&&=\int_{\Sigma_0}{J}_c(\phi_a)(\partial_t)^a r^2F^{-1}\partial_t \hook \mathrm{dVol}_{\hat g}.
\end{eqnarray*}

Using relations \eqref{Cor2}, \eqref{Cor4}, \eqref{ResDeri} and formulas \eqref{energytensor}, \eqref{current}, we can calculate the energy fluxes through $\Sigma_0$, $\mathcal{I}^+_K$, $\mathcal{H}^+_K$, $\scri^+_K$ and $\mathfrak{H}^+_K$ as follows (see the same calculations for scalar wave equations in \cite{Ni2016,Pha2020}):
\begin{equation}\label{energyFlux1}
{\mathcal{E}}^T_{\Sigma_0}(\phi_a) = \frac{1}{2}\int_{{\Sigma}_0}\left( |\slashed{\nabla}_{\partial_{t}}{\phi}_a|^2 + |\slashed{\nabla}_{\partial_{r_*}}{\phi}_a|^2 + R^2F|\slashed{\nabla}_{\mathbb{S}^2}{\phi}_a|^2 + R^2F|{\phi}_a|^2 \right) \d r_*\d \mathbb{S}^2,
\end{equation}
\begin{equation}\label{energyFlux2}
{\mathcal{E}}^T_{\mathcal{I}^+_K}(\phi_a) = \int_{\mathcal{I}_K^+} \left( |\slashed{\nabla}_{{\underline L}}{\phi_a}|^2 + R^2F|\slashed{\nabla}_{\mathbb{S}^2}{\phi_a}|^2 + R^2F|{\phi_a}|^2 \right) \d u \d \mathbb{S}^2,
\end{equation}
\begin{equation}\label{energyFlux3}
\mathcal{E}^T_{\mathcal{H}^+_K}(\phi_a) = \int_{\mathcal{H}^+_K} \left( |\slashed{\nabla}_{L} {\phi_a}|^2 + R^2F|\slashed{\nabla}_{\mathbb{S}^2}{\phi_a}|^2 + R^2F|{\phi_a}|^2 \right) \d v \d \mathbb{S}^2,\end{equation}
where $|\slashed{\nabla}_{\mathbb{S}^2}{\phi_a}|^2 = |\slashed{\nabla}_{\partial_\theta}\phi_a|^2 + \frac{1}{\sin^2\theta}|\slashed{\nabla}_{\partial_{\varphi}}\phi_a|^2$, and
\begin{equation}\label{energyFlux4}
{\mathcal{E}}^T_{\scri^+_K}(\phi_a) = \int_{\scri_K^+} |(\slashed{\nabla}_{{\underline L}}{\phi_a})|_{\scri^+}|^2 \d u \d \mathbb{S}^2,
\end{equation}
\begin{equation}\label{energyFlux5}
{\mathcal{E}}^T_{\mathfrak{H}^+_K}(\phi_a) = \int_{\mathfrak{H}^+_K} |(\slashed{\nabla}_{L}{\phi_a})|_{\mathfrak{H}^+}|^2 \d v \d \mathbb{S}^2.
\end{equation}

We observe that the energy fluxes across $\scri_K^+$ and $\mathfrak{H}_K^+$ are non negative increasing functions of $U_K,\, V_K$ and their sum is bounded by ${\mathcal{E}}^T_{\Sigma_0}(\phi_a)$ by the energy indentity \eqref{energyidentity}. This can be deduced from the energy identity \eqref{energyidentity} and the positivity of ${\mathcal{E}}^T_{\mathcal{I}^+_K}(\phi_a) + {\mathcal{E}}^T_{\mathcal{H}^+_K}(\hat\psi)$. Therefore, the limit of ${\mathcal{E}}^T_{\scri^+_K}(\phi_a)+{\mathcal{E}}^T_{\mathfrak{H}^+_K}(\phi_a)$ exists and the following sum is well defined
\begin{eqnarray}\label{limitenergy}
{\mathcal{E}}^T_{\scri^+}(\phi_a) + {\mathcal{E}}^T_{\mathfrak{H}^+}(\phi_a) &=& \lim_{U_K,V_K\rightarrow+\infty} \left( {\mathcal{E}}^T_{\scri_K^+}(\phi_a) + {\mathcal{E}}^T_{\mathfrak{H}_K^+}(\phi_a) \right)\cr
&=& {\mathcal{E}}^T_{\Sigma_0}(\phi_a) - \lim_{U_K,V_K\rightarrow +\infty} \left( {\mathcal{E}}^T_{\mathcal{I}^+_K}(\phi_a) + {\mathcal{E}}^T_{\mathcal{H}^+_K}(\phi_a)\right).
\end{eqnarray}
The proposition now holds from the above identity.
\end{proof}

\subsection{Tensorial field space of initial data and Cauchy problem}\label{FacSpace}
We define the finite energy space of tensorial fields on the spacelike hypersurface $\Sigma_\tau = \left\{ t=\tau \right\}$ as follows
\begin{definition}\label{FunSpaFac}
We define $\mathcal{H}(\Lambda^1(\mathbb{S}^2)|_{\Sigma_\tau})$ which is the completion of ${C}_0^\infty (\Lambda^1(\mathbb{S}^2)|_{\Sigma_\tau})\times {C}_0^\infty (\Lambda^1(\mathbb{S}^2)|_{\Sigma_\tau})$ in the norm
\begin{equation}\label{ENERGY}
\left\|({\xi}_{a},{\zeta}_{a}) \right\|_{\mathcal{H}(\Lambda^1(\mathbb{S}^2)|_{\Sigma_\tau})} = \frac{1}{\sqrt 2}\left( \int_{\Sigma_\tau}\left( |{\zeta}_{a}|^2 + |\slashed{\nabla}_{\partial_{r_*}}{\xi}_{a}|^2 + R^2F|\slashed{\nabla}_{\mathbb{S}^2}{\xi}_{a}|^2 + R^2F|{\xi}_{a}|^2 \right) \d r_*\d \mathbb{S}^2 \right)^{1/2}.
\end{equation}
\end{definition}
In order to state and prove the well-posedness of Cauchy problem, we need the following definition of the Sobolev spaces for tensorial fields that are defined on open sets (see \cite[Definition 2]{Ni2016} for the original definition of scalar fields):
\begin{definition}
Let $s \in [0, +\infty)$, a tensorial field $u_a$ on $\Lambda^1(\mathbb{S}^2)|_{\mathcal{M}}$ is said to belong to $H^s_{loc}(\Lambda^1(\mathbb{S}^2)|_{\bar{\mathcal{M}})}$ if for any
local chart $(\Omega, \zeta)$, such that $\Omega \subset \mathcal{M}$ is an open set with smooth compact boundary in $\bar{\mathcal{M}}$ (note that this excludes neighbourhoods of either $i^\pm$ or $i^0$ but allows open sets whose boundary contains parts of the conformal boundary) and $\zeta$ is a smooth diffeomorphism from $\Omega$ onto a bounded open set $U \subset \mathbb{R}^4$ with smooth compact boundary, we have $u_a \circ \zeta^{-1} \in H^s(\Lambda^1(\mathbb{S}^2)|_U)$.
\end{definition}
To define the trace operator in Subsection \ref{TraceInj}, we need to prove the well-posedness of Cauchy problem for Equation \eqref{Fac01} in the conformal rescaled spacetime $(\bar{\mathcal{M}},\hat{g})$:
\begin{theorem}\label{Cauchyproblem}(Cauchy problem of Equation \eqref{Fac01} in $(\bar{\mathcal{M}},\hat{g})$).
The Cauchy problem of Equation \eqref{Fac01} is well-posed in ${C}(\mathbb{R}_t,\,\mathcal{H})$, where $\mathcal{H}= \cup_{t\in \mathbb{R}}\mathcal{H}(\Lambda^1(\mathbb{S}^2)|_{\Sigma_t})$. This means that for any $(\xi_{a},\zeta_{a}) \in \mathcal{H}(\Lambda^1(\mathbb{S}^2)|_{\Sigma_0})$, there exists a unique solution $\phi_a \in \mathcal{D}'(\Lambda^1(\mathbb{S}^2)|_{\bar{\mathcal{M}}})$ of Equation\eqref{Fac01} such that
\begin{equation}\label{CauchyCD}
(\phi_a,\slashed{\nabla}_{\partial_t}\phi_a) \in {C}(\mathbb{R}_t;\, \mathcal{H}): \, \phi_a|_{\Sigma_0}={\xi}_{a}; \, \slashed{\nabla}_{\partial_t}\phi_a|_{\Sigma_0} ={\zeta}_{a}.
\end{equation}
Moreover, $\phi_a$ belongs to $H^1_{loc}(\Lambda^1(\mathbb{S}^2)|_{\bar{{\mathcal M}}})$.
The same assertion holds for Equation \eqref{Fac02}.
\end{theorem}
\begin{proof}
We prove the well-posedness in the future domain $\mathcal{I}^+(\Sigma_0)=\left\{t\geq 0\right\}$ of $\bar{\mathcal{M}}$, the well-posedness in the past domain $\mathcal{I}^-(\Sigma_0)=\left\{ t\leq 0\right\}$ is done similarly.
The proof is done by using the same methods in \cite{Saka1985} (see also \cite{Bruhat1979}) which are based on Leray's theorem energy estimates for symmetrical hyperbolic system in smooth globally hyperbolic spacetime. In fact, the work of Saka (see Theorem 2 in \cite{Saka1985}) established the well-posedness in finite energy spaces for tensorial wave equations on smooth globally hyperbolic spacetimes. In the rest of this proof, we show how the method in \cite{Saka1985} can be extended to prove Theorem \ref{Cauchyproblem}. 

First, by projecting the tensorial Fackerell-Ipser equation \eqref{Fac01} on the basic frame $(\slashed{\nabla}_{\partial_{\theta^a}},\slashed{\nabla}_{\partial_{\theta^b}})$ of $\Lambda^1(\mathbb{S}^2)$, we get
\begin{equation}\label{system-wave1}
P_{\hat g}\Phi + L_1\Phi = 0,
\end{equation}
with 
$$P_{\hat g}= \left(\begin{matrix}
\Box_{\hat{g}} &&0\\
0&&\Box_{\hat{g}}
\end{matrix} \right)\hbox{      } (\hbox{here   }\Box_{\hat{g}} = \frac{r^2}{F}L\underline{L} - {\Delta}_{\mathbb{S}^2})$$
is a diagonal matrix,
$${\Phi} = \left( \begin{matrix} {\phi}_1\\ {\phi}_2\end{matrix} \right),$$
where $\phi_i \, (i=1,2)$ are the scalar components of $\phi_a$ decomposed on the basic frame $(\slashed{\nabla}_{\partial_{\theta^a}},\slashed{\nabla}_{\partial_{\theta^b}})$ of $\Lambda^1(\mathbb{S}^2)$ and 
$$L_1 = \left( \begin{matrix} L_1^{11}&&L_1^{12}\\ L_1^{21}&& L_1^{22} \end{matrix}  \right)$$
is a $2\times 2$-matrix, where $L_1^{11}=L_1^{22},\, L_1^{12}=L_1^{21}$ and $L_1^{ij}$ are first order differential operators with smooth coefficients
$$L_1^{ij} = b_0^{ij}\partial_t + b_1^{ij}\partial_x + c^{ij}.$$

To avoid the singularities, we cut off $\mathcal{I}^+(\Sigma_0)$ by $\mathcal{O}$ which is a union of far enough neighborhoods of $i^+$ and $i_0$ (see \cite[Theorem 1]{Pham2022} and also \cite[Section 4.2]{Ni2012}). Note that, we can do this and do not change the domain of the well-posedness of Cauchy problem because $\mathcal{H}(\Lambda^1(\mathbb{S}^2)|_{\Sigma_0})$
is the completion of tensorial fields which have smooth and compact supports and the energy fluxes of smooth solutions (if the existence holds) through the cut-off null hypersurfaces $\mathcal{H}^+_K,\, \mathcal{I}^+_K$ will tend to $0$ as $U_K,V_K$ tend to infinity (see Theorem \ref{EqualityFac} below).

\begin{figure}[H]
\begin{center}
\includegraphics[scale=0.9]{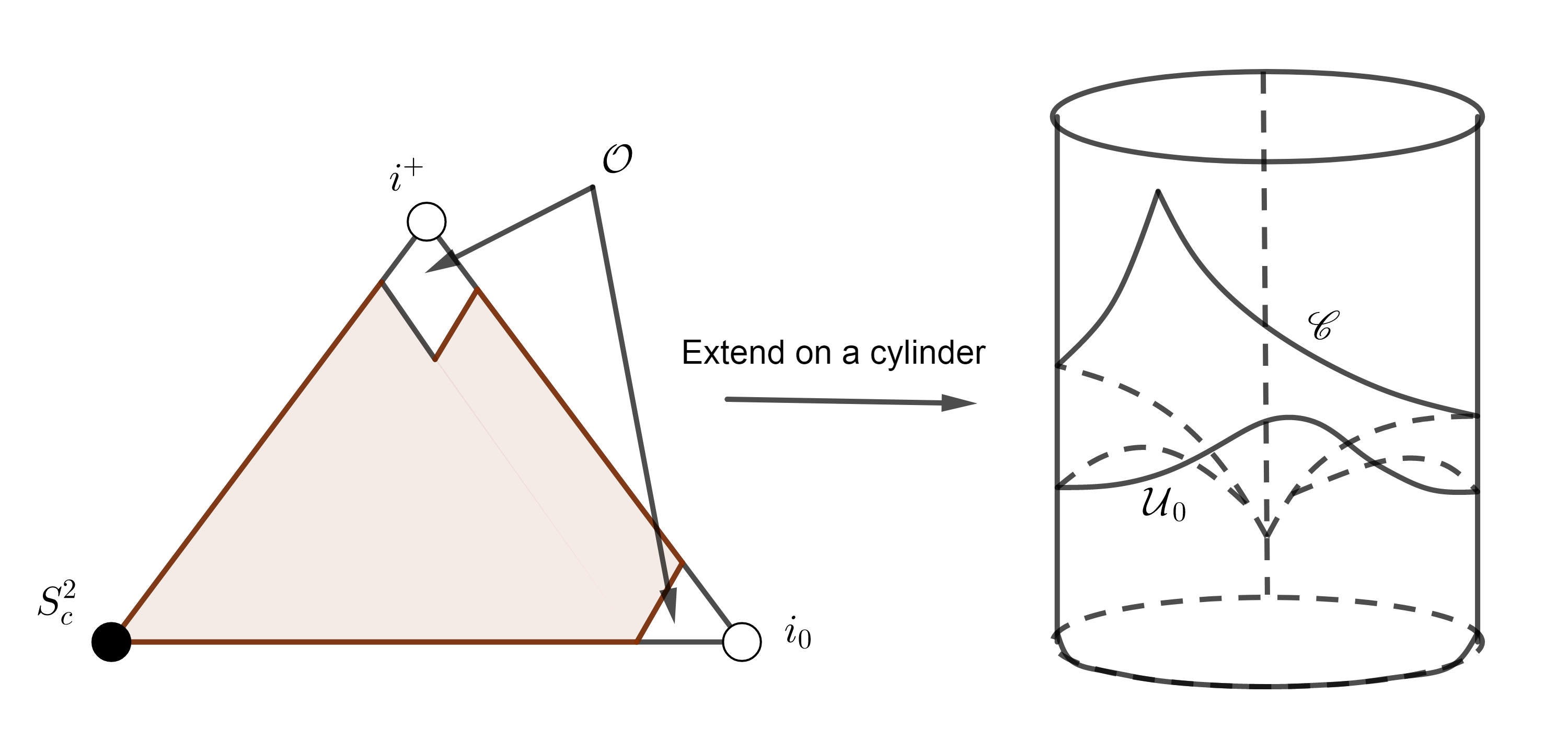}
\caption{The extension of $\mathcal{I}^+(\Sigma_0)-\mathcal{O}$ onto a global hyperbolic cylindre}
\end{center}
\end{figure}

In order to use the method in \cite{Saka1985}, we extend $(\mathcal{I}^+(\Sigma_0)-\mathcal{O},\hat{g})$ onto a cylindrical globally hyperbolic spacetime $(\mathbb{R}_t\times \mathbb{S}^3, \mathfrak{g})$. where $\mathfrak{g} = \d t^2 - h$ with $h(t)$ is a Riemannian metric on $\mathbb{S}^3$ smoothly with respect to $t$.  For each $t\geq 0$, the hypersurface $\Sigma_t-\mathcal{O}$ is extended inside $(\mathbb{R}_t\times \mathbb{S}^3, \mathfrak{g})$ as a spacelike hypersurface $\mathcal{U}_t$ and we obtain a spacelike foliation $\left\{ \mathcal{U}_t \right\}_{t\geq 0}$.
The conformal boundary $(\mathfrak{H}^+\cup \scri^+)-\mathcal{O}$ is extended inside $(\mathbb{R}_t\times \mathbb{S}^3, \mathfrak{g})$ as a null hypersurface $\mathscr{C}$, that is the graph of a Lipschitz function over $\mathbb{S}^3$ and the initial data by zero on the rest of the extended hypersurface $\mathscr{C}-(\mathfrak{H}^+\cup \scri^+)$. The initial data $(\xi_a,\zeta_a)$ is extended to $(\widetilde{\xi}_a,\widetilde{\zeta}_a)$ which vanishes on $\Sigma_0-\mathcal{U}_0$.
In the extending spacetime $(\mathbb{R}_t\times \mathbb{S}^3, \mathfrak{g})$, Equation \eqref{system-wave1} becomes
\begin{equation}\label{systemwave1}
P_{\mathfrak{g}}({\Phi}) + L_1(\Phi) = 0,
\end{equation}
where
$$P_{\mathfrak{g}}= \left(\begin{matrix}
\Box_{\mathfrak{g}} &&0\\
0&&\Box_{\mathfrak{g}}
\end{matrix} \right),\, \Box_{\mathfrak{g}} = \partial_t^2 - \Delta_h.$$
Equation \eqref{system-wave1} is equivalent to a symmetrical hyperbolic system which consists \eqref{EQ1} and \eqref{EQ2} in Appendix \ref{appendix}.

The well-posedness of Cauchy problem in Theorem \ref{Cauchyproblem} is extended to the one for Equation \eqref{system-wave1} which consists equations \eqref{EQ1}, \eqref{EQ2} with the initial data in $\mathcal{H}(\mathcal{U}_0)$.
Decomposing on the basic frame of $\Lambda^1(\mathcal{U}_0)$, we get the scalar form of $\widetilde{\xi}_a$ as $\widetilde{\xi}=(\widetilde{\xi}_1,\widetilde{\xi}_2)$ and of $\widetilde{\zeta}_a$ as $\widetilde{\zeta}=(\widetilde{\zeta}_1,\widetilde{\zeta}_2)$.
By using Leray's theorem, for smooth intitial data on $(\widetilde{\xi}_i,\widetilde{\zeta}_i)\in C^\infty(\mathcal{U}_0)\times C^\infty(\mathcal{U}_0)\, (i=1,2)$, equations \eqref{EQ1} and \eqref{EQ2} have a unique smooth solution $\widetilde{\Phi}=(\widetilde{\phi}_1,\widetilde{\phi}_2)$ in smooth globally hyperbolic spacetime $(\mathbb{R}_t\times \mathbb{S}^3, \mathfrak{g})$. 
For the initial data $(\widetilde{\xi}_i,\widetilde{\zeta}_i) \, (i=1,2)$ in $\mathcal{H}(\mathcal{U}_0)$, there exists the $C_0^\infty(\mathcal{U}_0)$ sequences $\left\{\widetilde{\xi}_i^n\right\}_{n\in \mathbb{N}},\, \left\{\widetilde{\zeta}_i^n\right\}_{n\in \mathbb{N}}\, (i=1,2)$
which converge to $\xi_i$ and $\zeta_i\, (i=1,2)$ under $H^1$-norm and $L^2$-norm, respectively (see the definition of $H^1$-norm and $L^2$-norm in Appendix \ref{appendix}). For each smooth initial data $(\widetilde{\xi}_i^n,\widetilde{\zeta}_i^n)\, (i=1,2)$, there is a unique smooth solution $\widetilde{\Phi}^n=(\widetilde{\phi}^n_1,\widetilde{\phi}^n_2)$ of Cauchy problem for equations \eqref{EQ1} and \eqref{EQ2}. By using energy estimate \eqref{ENG} in Appendix \ref{appendix}, we can show that $\left\{ (\widetilde{\Phi}^n,\partial_t\widetilde{\Phi}^n)\right\}_{n\in\mathbb{N}}$ is a Cauchy sequence in $C([0,T],\cup_{0\leq t\leq T}\mathcal{H}(\mathcal{U}_t))$, hence $(\widetilde{\Phi}^n,\partial_t\widetilde{\Phi}^n)$ converges to $(\widetilde{\Phi},\partial_t\widetilde{\Phi})\in C([0,T],\cup_{0\leq t \leq T}\mathcal{H}(\mathcal{U}_t))$ (in details, see the proof of \cite[Theorem 2]{Saka1985}). Clearly, $\widetilde{\Phi}$ is the local solution of Cauchy problem of equations \eqref{EQ1} and \eqref{EQ2} with $\widetilde{\phi}_i|_{t=0}=\widetilde{\xi}_i; \, \partial_t\widetilde{\phi}_i|_{t=0} = \widetilde{\zeta}_i$ for $i=1,2$. Using the local well-posedness result and energy estimate \eqref{ENG}, we can establish the global well-posedness of Cauchy problem of Equation \eqref{systemwave} in $C(\mathbb{R}_t,\cup_{t \geq 0}\mathcal{H}(\mathcal{U}_t))$ by the same methods in \cite[Theorem 2]{Cagnac} and \cite[Theorem 1]{Dossa}. 
Therefore, we obtain the global tensorial field solution $\widetilde{\phi}_a$ of the extended equation of \eqref{Fac01} in $C(\mathbb{R}_t,\cup_{t \geq 0}\mathcal{H}(\Lambda^1(\mathbb{S}^2)|_{\mathcal{U}_t}))$.

By local uniqueness and causality, using in particular the fact that as a consequence of the
finite propagation speed, the global solution $\phi_a$ of Cauchy problem of Equation \eqref{Fac01} is the restriction of $\widetilde{\phi}_a$ on $\mathcal{I}^+(\Sigma_0)-\mathcal{O}$ and it satifies Equation \eqref{CauchyCD}. 
\end{proof}

\subsection{Energy identity up to $i^+$ and trace operator}\label{TraceInj}
In this section, we will show that $\lim\limits_{U_K,V_K\rightarrow +\infty} \left( \mathcal{E}^T_{\mathcal{H}^+_{K}}(\phi_a) + \mathcal{E}^T_{\mathcal{I}^+_{K}}(\phi_a)\right) = 0$ and then we can obtain the energy equality
$$\mathcal{E}^T_{\Sigma_0}(\phi_a) = \mathcal{E}^T_{\mathfrak{H}^+}(\phi_a) + \mathcal{E}^T_{\scri^+}(\phi_a).$$
We recall the following energy decay of the tensorial field $\phi_a$ which satisfies Equation \eqref{Fac01} (see \cite[Lemma 5.8]{Pa2019}):
\begin{lemma}\label{DecayEnergy}
There exists a positive number $R_*$ such
that the following holds: let $\phi_a$ be a smooth solution to the tensorial fackerell–Ipser
equation \eqref{Fac01}  on $\left\{ u = u_0\right\} \cap \left\{ v = v_0 \right\}$. Let $U_K\geq u_0$ and let $P(U_K)$ be $(U_K,R_*+U_K)$. We have the decay of the flux
\begin{equation*}
\mathcal{F}^\infty[\phi_a](P(U_K)) \leq CU_K^{-2},
\end{equation*}
where $C$ is a positive constant depending on $\phi_a|_{u=u_0,\, v=v_0}$ and
\begin{equation*}
\mathcal{F}^\infty[\phi_a](P(U_K)) = \mathcal{F}_u^T[\phi_a](R_*+U_K,+\infty) + \mathcal{F}^N_{R_*+U_K}[\phi_a](U_K,+\infty),
\end{equation*}
with\footnote{in \cite{Pa2019}, the energy fluxes \eqref{Oen2r1} and \eqref{Oener2} use the notation $1-\mu$ for $F$ and $V$ for $R^2F$.}
\begin{equation}\label{Oen2r1}
\mathcal{F}_{U_K}^T[\phi_a](R_*+U_K,+\infty) = \int_{R_*+U_K}^{+\infty}\int_{\mathbb{S}^2} \left( |\slashed{\nabla}_{L} {\phi_a}|^2 + F|\slashed{\nabla}{\phi_a}|^2 + R^2F|{\phi_a}|^2 \right) \d v \d \mathbb{S}^2,
\end{equation}
\begin{equation}\label{Oener2}
\mathcal{F}_{R_*+U_K}^N[\phi_a](U_K,+\infty) = \int_{U_K}^{+\infty}\int_{\mathbb{S}^2}\left( F^{-1}|\slashed{\nabla}_{{\underline L}}{\phi_a}|^2 + F|\slashed{\nabla}{\phi_a}|^2 + R^2F|{\phi_a}|^2 \right) \d u \d \mathbb{S}^2,
\end{equation}
where the pointwise norms $|\cdot|$ in \eqref{Oen2r1} and \eqref{Oener2} are given as in \eqref{contraction}.
\end{lemma}

The above lemma helps us to obtain the energy decay for solution of the Fackerell-Ipser equation \eqref{Fac01} through null hypersurfaces $\mathcal{H}^+_K$ and $\mathcal{I}^+_K$.

\begin{theorem}\label{ResultDecay}(Energy Decay)
Let $\phi_a$ be a smooth tensorial solution of the tensorial Fackerell-Ipser equation \eqref{Fac01} on $\left\{ u \geq u_0\right\} \cap \left\{ v\geq v_0 \right\}$. 
There exist a positive number $R_*$ and a positive constant $C$ depending on the value of $\phi_a$ on $\left\{ u = u_0\right\} \cap \left\{ v = v_0 \right\}$ such that:
we have the following decay energy for the original field $\phi_a$ on $\mathcal{H}^+_K \cup \mathcal{I}^+_K$ with $U_K\geq u_0$ and $V_K= U_K +2R_*$:
\begin{equation}\label{decay}
\mathcal{E}^T_{\mathcal{H}^+_K}(\phi_a) + \mathcal{E}^T_{\mathcal{I}^+_K}(\phi_a) \leq C U_K^{-2}.
\end{equation}
 
\end{theorem}
\begin{proof}
Since $r\slashed{\nabla}_{\partial_{x^a}}=\slashed{\nabla}_{\partial_{\theta^a}}$, we have $|\slashed{\nabla}\phi_a|^2 = R^2|\slashed{\nabla}_{\mathbb{S}^2}\phi_a|^2$. Therefore, we obtain 
\begin{eqnarray*}
\mathcal{F}_{U_K}^T[\phi_a](R_*+U_K,+\infty)&=&\int_{R_*+U_K}^{+\infty}\int_{\mathbb{S}^2} \left( |\slashed{\nabla}_{L} {\phi_a}|^2 + F|\slashed{\nabla}{\phi_a}|^2 + R^2F|{\phi_a}|^2 \right) \d v \d \mathbb{S}^2\cr
&=& \int_{\mathcal{H}_K} \left( |\slashed{\nabla}_{L} {\phi_a}|^2 + R^2F|\slashed{\nabla}_{\mathbb{S}^2}{\phi_a}|^2 + R^2F|{\phi_a}|^2 \right) \d v \d \mathbb{S}^2\cr
&=& \mathcal{E}^T_{\mathcal{H}^+_K}(\phi_a),
\end{eqnarray*}
and
\begin{eqnarray*}
\mathcal{F}_{R_*+U_K}^N[\phi_a](U_K,+\infty) &=& \int_{U_K}^{+\infty}\int_{\mathbb{S}^2}\left( F^{-1}|\slashed{\nabla}_{{\underline L}}{\phi_a}|^2 + F|\slashed{\nabla}{\phi_a}|^2 + R^2F|{\phi_a}|^2 \right) \d u \d \mathbb{S}^2\cr
&=& \int_{\mathcal{I}_K} \left( (1-2MR)^{-1}|\slashed{\nabla}_{{\underline L}}{\phi_a}|^2 + R^2F|\slashed{\nabla}_{\mathbb{S}^2}{\phi_a}|^2 + R^2F|{\phi_a}|^2 \right) \d u \d \mathbb{S}^2\cr
&\geq&  \int_{\mathcal{I}_K} \left( |\slashed{\nabla}_{{\underline L}}{\phi_a}|^2 + R^2F|\slashed{\nabla}_{\mathbb{S}^2}{\phi_a}|^2 + R^2F|{\phi_a}|^2 \right) \d u \d \mathbb{S}^2\cr
&=& {\mathcal{E}}^T_{\mathcal{I}^+_K}(\phi_a).
\end{eqnarray*}
Combining these inequalities with Lemma \ref{DecayEnergy}, we obtain 
\begin{equation*}
\mathcal{E}^T_{\mathcal{H}^+_K}(\phi_a) + \mathcal{E}^T_{\mathcal{I}^+_K}(\phi_a)\leq \mathcal{F}_{U_K}^T[\phi_a](R_*+U_K,+\infty) + \mathcal{F}_{R_*+U_K}^N[\phi_a](U_K,+\infty) \leq C U_K^{-2}.
\end{equation*}
\end{proof}

Now, we state and prove the energy equality between the energy fluxes of $\phi_a$ through the Cauchy hypersurface $\Sigma_0$ and the one through conformal boundary $\mathfrak{H}^+\cup \scri^+$ in the following theorem.
\begin{theorem}\label{EqualityFac}
Let ${\phi}_a$ be a smooth tensorial solution of the rescaled tensorial Fackerell-Ipser equation \eqref{Fac01} on $\left\{ u \geq u_0\right\} \cap \left\{ v\geq v_0 \right\}$. The energies of ${\phi}_a$ through the null hypersurfaces $\mathcal{H}_K$ and $\mathcal{I}_K$ tend to zero as $U_K,\, V_K$ tend to infinity, i.e.,
\begin{equation}\label{lim1}
\lim_{U_K,V_K\rightarrow +\infty} \left( {\mathcal{E}}^T_{\mathcal{H}^+_K}(\phi_a) + {\mathcal{E}}^T_{\mathcal{I}^+_K}(\phi_a)\right) = 0.
\end{equation}
As a consequence, we have the energy equality between the energy flux of $\phi_a$ (resp. $\underline{\phi}_a$) through $\Sigma_0$ and the ones through the conformal boundary $\mathfrak{H}^+\cup \scri^+$, i.e., the energy identity up to $i^+$, as follows
\begin{equation}\label{equality}
\mathcal{E}^T_{\Sigma_0}(\phi_a) = \mathcal{E}^T_{\mathfrak{H}^+}(\phi_a) + \mathcal{E}^T_{\scri^+}(\phi_a).
\end{equation}
The same energy identity up to $i^-$ holds.
\end{theorem}
\begin{proof}
Note that for a fixed $R_*$, we have $V_K=R_*+U_K$ tends to $+\infty$ as $U_K$ tends to $+\infty$. The convergence \eqref{lim1} is valid by the energy decay \eqref{decay} obtained in Theorem \ref{ResultDecay}. Combining this convergence and Proposition \ref{UpToInfinity}, we obtain the energy identity up to $i^+$ such as equality \eqref{equality}. 
\end{proof}
The well-posedness of Cauchy problem obtained in Theorem \ref{Cauchyproblem} allows us to define the trace operator on the conformal boundary (note that, in the proof of Theorem \ref{Cauchyproblem}, we obtained the well-posedness of Cauchy problem for both the smooth initial data and the initial data in the finite energy space on $\Sigma_0$).
\begin{definition}(Trace operator for tensorial Fackerell-Ipser equation) Let $({\xi}_{a}, {\zeta}_{a}) \in {C}_0^\infty(\Lambda^1(\mathbb{S}^2)|_{\Sigma_0})\times {C}_0^\infty(\Lambda^1(\mathbb{S}^2)|_{\Sigma_0})$. Consider the solution of Equation \eqref{Fac01}, and let ${\phi}_a \in {C}^\infty(\Lambda^1(\mathbb{S}^2)|_{\bar{\mathcal{M}}})$  such that
$${\phi}_a|_{\Sigma_0} = {\xi}_{a}, \, \slashed{\nabla}_t {\phi}_{a}|_{\Sigma_0} = {\zeta}_{a}.$$
We define the trace operator $\mathcal{T}^+$ from ${C}_0^\infty(\Lambda^1(\mathbb{S}^2)|_{\Sigma_0})\times {C}_0^\infty(\Lambda^1(\mathbb{S}^2)|_{\Sigma_0})$ to ${C}^\infty(\Lambda^1(\mathbb{S}^2)|_{\mathfrak{H}^+})\times {C}_0^\infty(\Lambda^1(\mathbb{S}^2)|_{\scri^+})$ by
$$\mathcal{T}^+({\xi}_{a}, {\zeta}_{a}) = ({\phi}_a|_{\mathfrak{H}^+}, {\phi}_a|_{\scri^+}).$$
The trace operator for solution $\underline{\phi}_a$ of Equation \eqref{Fac02} is defined by the same way.
\end{definition}
We can extend the tensorial field space for scattering data of Equation \eqref{Fac01} by density as in the following definition:
\begin{definition}\label{ScatteringDataFac}
The tensorial field space for scattering data $\mathcal{H}^+$ is the completion of ${C}_0^\infty(\Lambda^1(\mathbb{S}^2)|_{\mathfrak{H}^+}) \times {C}_0^\infty(\Lambda^1(\mathbb{S}^2)|_{\scri^+})$ in the norm
$$\left\| (\xi_a,\zeta_a) \right\|_{\mathcal{H}^+} = \frac{1}{\sqrt 2}\left(\int_{\mathfrak{H}^+} |\slashed{\nabla}_L \xi|^2 \d v\d \mathbb{S}^2 + \int_{\scri^+} |\slashed{\nabla}_{\underline{L}} \zeta|^2 \d u \d \mathbb{S}^2 \right)^{1/2}.$$
This means that
$$\mathcal{H}^+ \simeq \dot{H}^1(\mathbb{R}_v; \, L^2(\Lambda^1(\mathbb{S}^2)|_{\mathfrak{H}^+})) \times \dot{H}^1(\mathbb{R}_u; \, L^2(\Lambda^1(\mathbb{S}^2)|_{\scri^+})).$$
\end{definition}
As a direct consequence of the energy equality \eqref{equality} and the well-posedness of Cauchy problem in the finite energy space in Theorem \ref{Cauchyproblem}, we have the following theorem.
\begin{theorem}\label{Trace}
The trace operator of solution $\phi_a$ of Equation \eqref{Fac01} extends uniquely as a bounded linear map from $\mathcal{H}$ to $\mathcal{H}^+$. The extended operator is a partial isometry, i.e., an injective operator. This means that for any $({\xi}_{a},{\zeta}_{a}) \in \mathcal{H}(\Lambda^1(\mathbb{S}^2)|_{\Sigma_0})$,
$$\left\| \mathcal{T}^+({\xi}_{a},{\zeta}_{a}) \right\|_{\mathcal{H}^+} = \left\| ({\xi}_{a},{\zeta}_{a}) \right\|_{\mathcal{H}(\Lambda^1(\mathbb{S}^2)|_{\Sigma_0})}.$$
The same assertion holds for the trace operator of solution $\underline{\phi}_a$ of Equation \eqref{Fac02}.
\end{theorem}

\section{Conformal scattering for the tensorial Fackerell-Ipser equations}\label{ConFac}
\subsection{Generalization of L. H\"ormander's result for tensorial wave equations}\label{GeneHorm}
To construct the conformal scattering operator, we need to show that the trace operator is surjective. This corresponds to prove the well-posedness of the Goursat problem for the rescaled equation \eqref{Fac01} with the initial data on the conformal boundaries $\mathfrak{H}^+\cup \scri^+$ (resp. $\mathfrak{H}^-\cup \scri^-$) in Penrose's conformal compactification $\bar{\mathcal{M}}$.

H\"ormander \cite{Ho1990} proved the well-posedness of the Goursat problem for the second-order scalar wave equations with regular first-order potentials in the spatially compact spacetime. Nicolas \cite{Ni2006} extended the results of H\"ormander with very slightly regular metric and potential, precisely a ${C}^1$-metric and potential with continuous coefficients of the first-order terms and locally $L^\infty$ coefficients for the terms of order $0$.
Here, we will prove the well-posedness of the Goursat problem for the tensorial wave equations with regular first-order tensorial potentials. More precisely, we will show how we can apply the results of H\"ormander for the tensorial wave equation \eqref{Fac01} (or \eqref{Fac02}) with the smooth compactly supported initial data on the conformal boundary, i.e, $(\xi_a, \zeta_a) \in {C}_0^\infty(\Lambda^1(\mathbb{S}^2)|_{\mathfrak{H}^+}) \times {C}_0^\infty(\Lambda^1(\mathbb{S}^2)|_{\scri^+})$ in Schwarzschild background.
\begin{figure}[H]
\begin{center}
\includegraphics[scale=0.8]{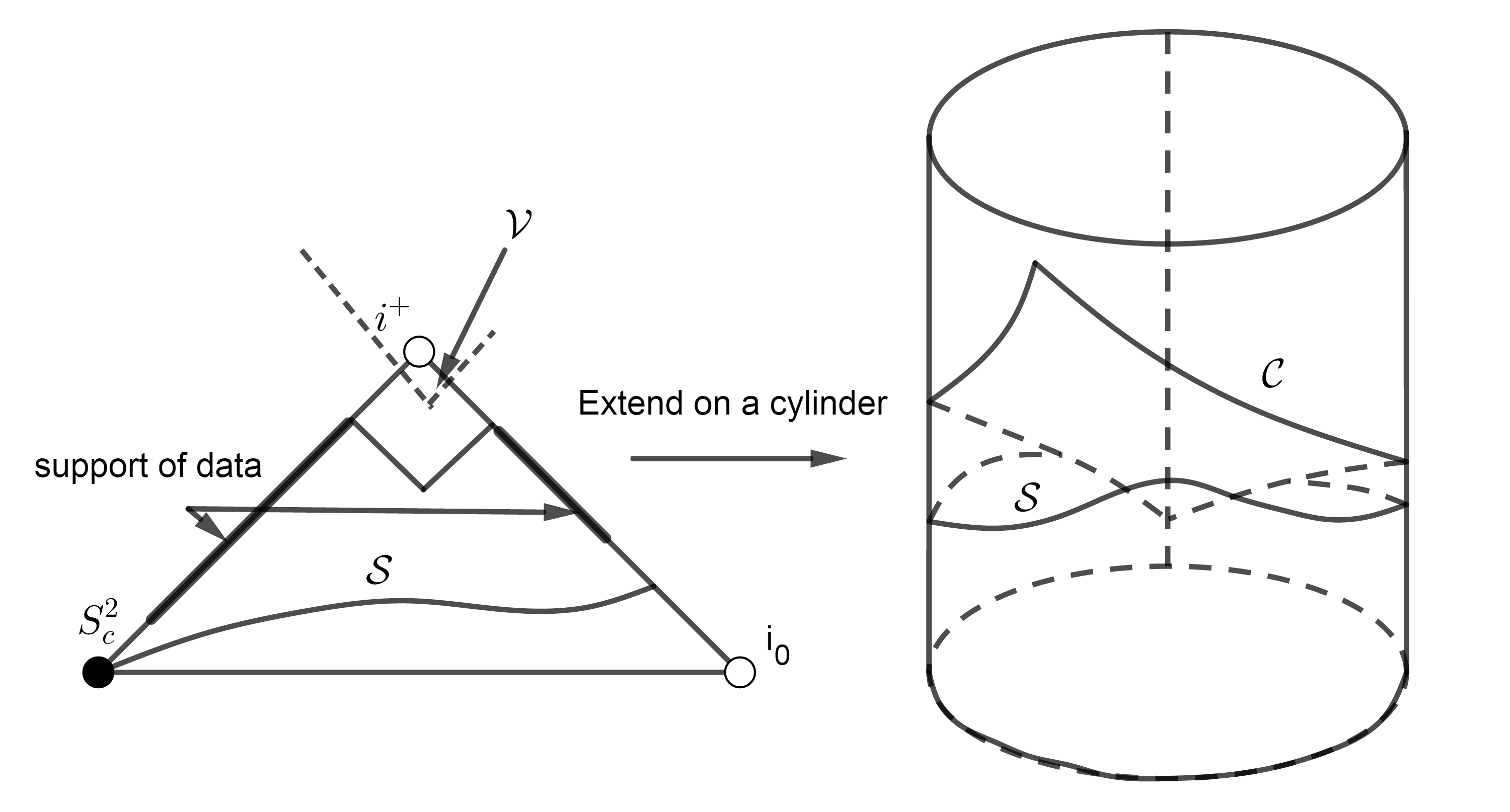}
\caption{The extension of $\mathfrak{M}=\mathcal{I}^+(\mathcal{S})-\mathcal{V}$ onto a global hyperbolic cylindre.}
\end{center}
\end{figure}
To avoid the singularities at $i^+$ and $i_0$, we use the same method as in Appendix B in \cite{Ni2016,Pha2020,Pha2022}. In particular, we take $\mathcal{S}$ which is a spacelike hypersurface on $\bar{\mathcal{M}}$ whose intersection with the horizon is the crossing sphere and which crosses $\scri^+$ strictly in the past of the support of the data. We cut $\mathcal{I}^+(\mathcal{S})$ by a neighbourhood $\mathcal{V}$ of a point in $\bar{\mathcal{M}}$ lying in the future of the support of the Goursat data and get a spacetime denoted by $\mathfrak{M}$. Then, we extend $\mathfrak{M}$ as a cylindrical globally hyperbolic spacetime $(\mathbb{R}_t\times \mathbb{S}^3, \mathfrak{g})$, where $\mathfrak{g} = \d t^2 - h$ with $h(t)$ is a Riemannian metric on $\mathbb{S}^3$ smoothly varying with $t$. The conformal boundary $\mathfrak{H}^+\cup \scri^+$ is extended inside $$(\mathbb{R}_t\times \mathbb{S}^3, \mathfrak{g})$$ as a null hypersurface $\mathcal{C}$, that is the graph of a Lipschitz function over $\mathbb{S}^3$ and the initial data by zero on the rest of the extended hypersurface $\mathcal{C}-(\mathfrak{H}^+\cup \scri^+)$. Here, we still use the notation of extending metric as in the proof of Theorem \ref{Cauchyproblem}.

Similar to the proof of Theorem \ref{Cauchyproblem}, we project the tensorial Fackerell-Ipser equation \eqref{Fac01} on the basic frame $(\slashed{\nabla}_{\partial_{\theta^a}},\slashed{\nabla}_{\partial_{\theta^b}})$ of $\Lambda^1(\mathbb{S}^2)$ and get
\begin{equation}\label{system-wave}
P_{\hat g}\Phi + L_1\Phi = 0,
\end{equation}
with 
$$P_{\hat g}= \left(\begin{matrix}
\Box_{\hat{g}} &&0\\
0&&\Box_{\hat{g}}
\end{matrix} \right)\hbox{      } (\hbox{here   }\Box_{\hat{g}} = \frac{r^2}{F}L\underline{L} - {\Delta}_{\mathbb{S}^2})$$
is a diagonal matrix, ${\Phi} = \left( \begin{matrix} {\phi}_1\\ {\phi}_2\end{matrix} \right)$
and $L_1 = (L_1^{ij})_{2\times 2}$ is a $2\times 2$-matrix with $L_1^{11}=L_1^{22},\, L_1^{12}=L_1^{21}$ and $L_1^{ij}$ are first order differential operators with smooth coefficients
$L_1^{ij} = b_0^{ij}\partial_t + b_1^{ij}\partial_x + c^{ij}.$

In the extending spacetime $(\mathbb{R}_t\times \mathbb{S}^3, \mathfrak{g})$, Equation \eqref{system-wave} becomes
\begin{equation}\label{systemwave}
P_{\mathfrak{g}}({\Phi}) + L_1(\Phi) = 0,
\end{equation}
where
$$P_{\mathfrak{g}}= \left(\begin{matrix}
\Box_{\mathfrak{g}} &&0\\
0&&\Box_{\mathfrak{g}}
\end{matrix} \right),\, \Box_{\mathfrak{g}} = \partial_t^2 - \Delta_h.$$

The following lemma is extended from resutls in \cite{Ho1990}:
\begin{lemma}(Goursat problem for Equation \eqref{systemwave} in $(\mathbb{R}_t\times \mathbb{S}^3,\mathfrak{g})$).\label{partlyGoursat}
For any foliation $\left\{\widetilde{S}_\tau\right\}_{\tau \in \mathbb{R}}$ of $(\mathbb{R}_t\times \mathbb{S}^3,\mathfrak{g})$, where $\widetilde{S}_0=\left\{ 0\right\}\times \mathbb{S}^3$ and $\widetilde{S}_\tau$ is a $\mathfrak{g}$-spacelike hypersurface which is topological $3$-sphere endowed with the Riemannian metric $-\mathfrak{g}|_{\mathcal{S}_\tau}$ for all $\tau$.
For the initial data $(\widetilde{\xi}_i,\,\widetilde{\zeta}_i) \in  C_0^{\infty}(\mathcal{C}) \, (i=1,2)$, Equation \eqref{systemwave} has a unique smooth solution $\widetilde{\Phi} = (\widetilde{\phi}_1,\, \widetilde{\phi}_2)$ satisfying
\begin{equation}\label{GoursatCond}
\widetilde{\phi}_i \in C^\infty(\mathbb{R}_\tau;\, \cup_{\tau\in \mathbb{R}} H^1(\widetilde{S}_{\tau}), \, \partial_t \widetilde{\phi}_i \in {C}^\infty(\mathbb{R}_\tau; \, \cup_{\tau\in \mathbb{R}} L^2(\widetilde{S}_\tau))
\end{equation}
for all $i = 1,\, 2.$ 
\end{lemma}
\begin{proof}
The proof is given in Appendix \ref{appendix}.
\end{proof}

Using Lemma \ref{partlyGoursat}, we obtain the well-posedness of the Goursat problem of \eqref{system-wave}, hence \eqref{Fac01} in $\mathcal{I}^+(\mathcal{S})$ in the following corollary.
\begin{corollary}(Goursat problem for Equation \eqref{systemwave} in $\mathcal{I}^+(\mathcal{S})$).\label{partGo}
For any foliation $\left\{ \mathcal{S}_\tau\right\}_{\tau\geq 0}$ of $\mathcal{I}^+(\mathcal{S})$, where $\mathcal{S}_\tau$ is a $\hat{g}$-spacelike hypersurface and $\mathcal{S}_0=\mathcal{S}$ and the initial data $(\xi_i, \zeta_i) \in {C}_0^\infty(\mathfrak{H}^+) \times {C}_0^\infty(\scri^+)$, where $i=1,\, 2$, Equation \eqref{system-wave} or the tensorial Fackerell-Ipser equation \eqref{Fac01} has a unique  smooth solution $\Phi=({\phi}_1,\phi_2)$ in $\mathcal{I}^+(\mathcal{S})$ satisfying
\begin{equation}\label{GousatCD}
{\phi}_i \in C^\infty(\mathbb{R}_\tau;\, \cup_{\tau\geq 0}H^1(\mathcal{S}_\tau)), \, \partial_\tau\phi_i \in {C}^\infty(\mathbb{R}_\tau; \, \cup_{\tau\geq 0}L^2(\mathcal{S}_\tau))
\end{equation}
for all $i=1,\, 2$. 
\end{corollary}
\begin{proof}
In the beginning of this section, we have extended $\mathfrak{M}=\mathcal{I}^+(\mathcal{S})-\mathcal{V}$ onto a global hyperbolic spacetime $(\mathbb{R}_t\times \mathbb{S}^3,\mathfrak{g})$. In this spacetime, Equation \eqref{system-wave} has the form \eqref{systemwave}. 
Now, we extend the $\hat{g}$-spacelike $\mathcal{S}_\tau$ to a $\mathfrak{g}$-spacelike hypersurface $\widetilde{\mathcal{S}}_\tau$ in $(\mathbb{R}_t\times \mathbb{S}^3,\mathfrak{g})$ for each $\tau$. The hypersurface $\widetilde{\mathcal{S}}_\tau$ is  topological $3$-spheres endowed with the Riemannian metric $-\mathfrak{g}|_{\widetilde{\mathcal{S}}_\tau}$. 
By using Lemma \ref{partlyGoursat}, the Goursat problem of equation \eqref{systemwave} has a unique smooth solution $\widetilde{\Phi}$ in $(\mathbb{R}_t\times \mathbb{S}^3,\mathfrak{g})$ that satisfies \eqref{GoursatCond}.
By local uniqueness and causality, using in particular the fact that as a consequence of the
finite propagation speed, the solution $\widetilde{\Phi}$ of \eqref{systemwave} obtained in Lemma \ref{partlyGoursat} vanishes in $\mathcal{V}$. 
Therefore, the Goursat problem of Equation \eqref{system-wave} has a unique smooth solution $\Phi=(\phi_1,\phi_2)$ in $\mathcal{I}^+(\mathcal{S})$, that is the restriction of $\widetilde{\Phi}=(\widetilde{\phi}_1,\widetilde{\phi}_2)$ to $\mathfrak{M}$. Since $\widetilde{\Phi}$ satisfies \eqref{GoursatCond}, we obatin that the solution $\Phi$ satifies \eqref{GousatCD}. Our proof is completed.
\end{proof}

\subsection{Goursat problem and conformal scattering operator}\label{GoursatFac}
In the previous section, we proved that the Goursat problem for the tensorial Fackerell-Ipser equation \eqref{Fac01} is well-posed in the future $\mathcal{I}^+(\mathcal{S})$. In order to establish the full solution of the Goursat problem, we need to extend the solution (which is obtained in the previous section) down to $\Sigma_0$, i.e., we prove the well-posedness of the Goursat problem in the past $\mathcal{I}^-(\mathcal{S})$. The solution of the Goursat problem is a union of the two solutions in $\mathcal{I}^+(\mathcal{S})$ and $\mathcal{I}^-(\mathcal{S})$.
\begin{theorem}(Goursat problem of Equation \eqref{Fac01} in $\mathcal{I}^+(\Sigma_0)$).\label{Goursat}
The Goursat problem of the tensorial Fackerell-Ipser equation \eqref{Fac01} is well-posed in $\mathcal{I}^+(\Sigma_0)$. This means that for the initial data $(\xi_a,\zeta_a)\in C_0^\infty(\Lambda^1(\mathbb{S}^2)|_{\mathfrak{H}^+})\times C_0^\infty(\Lambda^1(\mathbb{S}^2)|_{\scri^+})$, there exists a unique solution of Equation \eqref{Fac01} satisfying
$$({\phi}_a,\slashed{\nabla}_t{\phi}_a)\in {C}(\mathbb{R}_t; \, \cup_{t\geq 0}\mathcal{H}(\Lambda^1(\mathbb{S}^2)|_{\Sigma_t})) \hbox{  and  } \mathcal{T}^+({\phi}_a|_{\Sigma_0}, \slashed{\nabla}_t {\phi}_a|_{\Sigma_0}) = (\xi_a,\zeta_a).$$
The same assertion holds for Equation \eqref{ReFac02}.
\end{theorem}
\begin{proof}
Following Corollary \ref{partGo}, there exists a unique solution ${\phi}_a$ of the Gourast problem of Equation \eqref{Fac01} which satisfies the following properties.
\begin{itemize}
\item[$\bullet$] ${\phi}_a \in H^1(\Lambda^1(\mathbb{S}^2)|_{\mathcal{I}^+(\mathcal{S}}))$, where $\mathcal{I}^+(\mathcal{S})$ is the causal future of $\mathcal{S}$ in $\bar{\mathcal{M}}$. Since the support of the initial data is compact, the solution vanishes in the neighbourhood $\mathcal{V}$ of $i^+$ (where, the neighbourhood $\mathcal{V}$ is chosen as in Subsection \ref{GeneHorm}, and the solution $\phi_a$ vanishes in $\mathcal{V}$ as a consequence of the finite propagation speed (see the proof of Corollary \ref{partGo})). Then, we do not need to distinguish between $H^1(\Lambda^1(\mathbb{S}^2)|_{\mathcal{I}^+(\mathcal{S})})$ and $H^1_{loc}(\Lambda^1(\mathbb{S}^2)|_{\mathcal{I}^+(\mathcal{S}})$. 

\item[$\bullet$] For any foliation $\left\{ \mathcal{S}_\tau\right\}_{\tau\geq 0}$ of $\mathcal{I}^+(\mathcal{S})$, where $\mathcal{S}_0=\mathcal{S}$, we have ${\phi}_a(\tau)$ in $H^1(\Lambda^1(\mathbb{S}^2)|_{\mathcal{S}_\tau})$ and $\partial_\tau \phi_a$ in $L^2(\Lambda^1(\mathbb{S}^2)|_{\mathcal{S}_\tau})$ for all $\tau\geq 0$.

\item[$\bullet$] ${\phi}_a|_{\scri^+} = \zeta_a, \, {\phi}_a|_{\mathfrak{H}^+}= \xi_a$.
\end{itemize}
\begin{figure}[H]
\begin{center}
\includegraphics[scale=0.2]{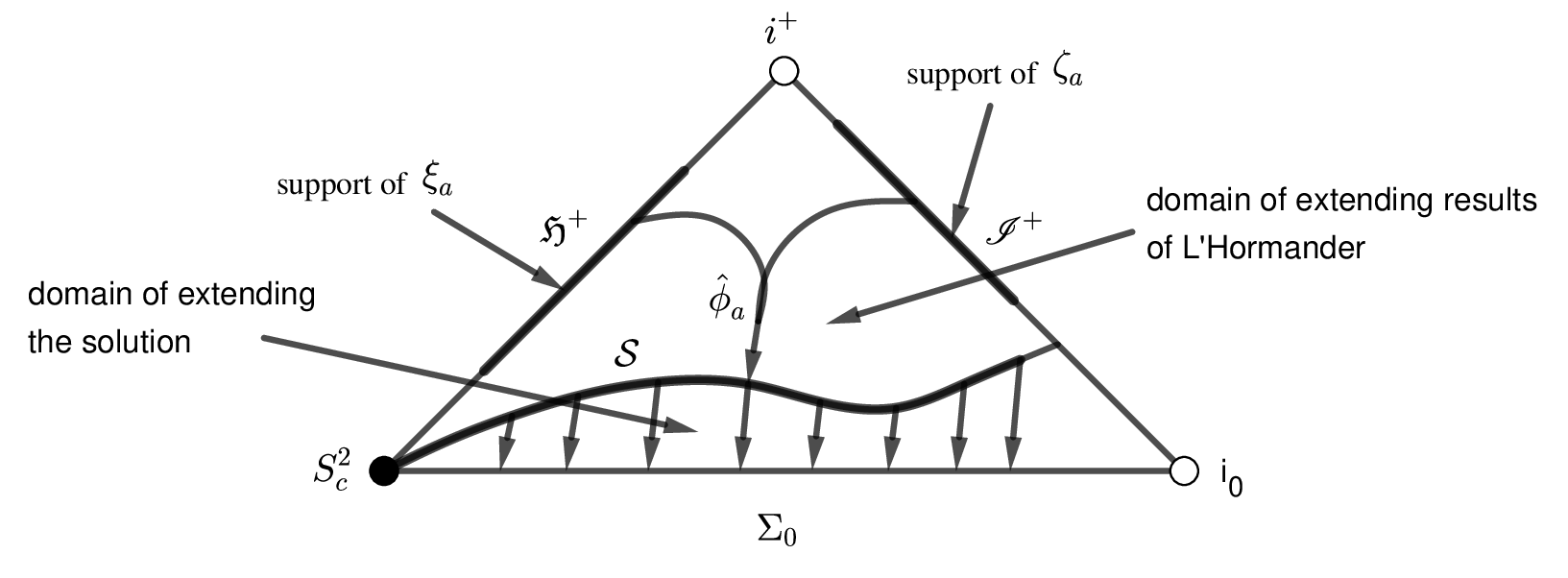}
\caption{Extending the solution $\hat{\phi}_a$ down to $\Sigma_0$.}
\end{center}
\end{figure}
We need to extend the solution down to $\Sigma_0$ in a manner that avoids the singularity at $i_0$.
Since the hypersurface $\mathcal{S}$ intersects the horizon at the crossing sphere and intersects $\scri^+$ strictly in the past of the support of the data, we have that the restriction of ${\phi}_a$ to $\mathcal{S}$ is in $H^1(\Lambda^1(\mathbb{S}^2)|_{\mathcal{S}})$ and its trace on $\mathcal{S}\cap \scri^+$ is also the trace of $\zeta_a$ on $\mathcal{S}\cap \scri^+$, hence this trace is zero. Therefore, ${\xi}_a|_{\Lambda^1(\mathcal{S}^2)|_{\mathcal{S}}}$ can be approached by a sequence $\left\{{\phi}^n_{a}|_{\Lambda^1(\mathbb{S}^2)|_{\mathcal{S}}} \right\}_{n\in \mathbb{N}}$ of the smooth tensorial fields on $\Lambda^1(\mathbb{S}^2)|_{\mathcal{S}}$ supported away from $\scri^+$ that converges towards ${\phi}_a|_{\Lambda^1(\mathbb{S}^2)|_{\mathcal{S}}}$ in $H^1(\Lambda^1(\mathbb{S}^2)|_{\mathcal{S}})$. Moreover, $\slashed{\nabla}_{\partial_t}{\phi}_a|_{\Lambda^1(\mathbb{S}^2)|_{\mathcal{S}}}$ can be approached by a sequence $\left\{\slashed{\nabla}_{\partial_t}{\phi}^n_{a}|_{\Lambda^1(\mathbb{S}^2)|_{\mathcal{S}}} \right\}_{n\in \mathbb{N}}$ of the smooth tensorial fields on $\Lambda^1(\mathbb{S}^2)|_{\mathcal{S}}$ supported away from $\scri^+$ that converges towards $\slashed{\nabla}_{\partial_t}{\phi}_a|_{\Lambda^1(\mathbb{S}^2)|_{\mathcal{S}}}$ in $L^2(\Lambda^1(\mathbb{S}^2)|_{\mathcal{S}})$.
For the initial data $({\phi}^n_{a}|_{\Lambda^1({\mathbb{S}^2})|_{\mathcal{S}}},\slashed{\nabla}_{\partial_t}{\phi}^n_{a}|_{\Lambda^1(\mathbb{S}^2)|_{\mathcal{S}}})$, we let $\psi_a^n$ be the smooth tensorial solution of Cauchy problem of Equation \eqref{Fac01} on $\Lambda^1(\mathbb{S}^2)|_{\bar{\mathcal{M}}}$ (the existence by Theorem \ref{Cauchyproblem}). This tensorial solution vanishes in the neighbourhood of $i_0$ and we can establish energy estimates for ${\psi}_a^n$ between $\mathcal{S}$ and $\Sigma_0$ by using the conservation law \eqref{zero1} as follows
\begin{equation}
\mathcal{E}^T_{\mathcal{S}}({\psi}_a^n) = \mathcal{E}^T_{\Sigma_0}({\psi}_a^n).
\end{equation}
By the same way, we have energy identities between $\mathcal{S}$ and the hypersurfaces $\Sigma_t \hbox{  for   } t> 0$. Therefore, the sequence $({\psi}_a^n,\slashed{\nabla}_{\partial_t}{\psi}_a^n)$ converges towards $({\psi}_a, \slashed{\nabla}_{\partial_t}{\psi}_a)$ in ${C}(\mathbb{R}_t,\cup_{t\in \mathbb{R}}\mathcal{H}(\Sigma_t))$, where ${\psi}_a$ is a solution of \eqref{Fac01}. By local uniqueness ${\psi}_a$ coincides with ${\phi}_a$ in the future of $\mathcal{S}$. Therefore, we have
$$\left( {\phi}_a|_{\Sigma_0},\, \slashed{\nabla}_{\partial_t}{\phi}_a|_{\Sigma_0}  \right) \in \mathcal{H}(\Lambda^1(\mathbb{S}^2)|_{\Sigma_0})$$
and  
$$\mathcal{T}^+\left( {\phi}_a|_{\Sigma_0},\, \slashed{\nabla}_{\partial_t}{\phi}_a|_{\Sigma_0} \right) = (\xi_a,\zeta_a).$$
Therefore, the range of $\mathcal{T}^+$ contains ${C}_0^\infty(\Lambda^1(\mathbb{S}^2)|_{\mathfrak{H}^+})\times {C}_0^\infty(\Lambda^1(\mathbb{S}^2)|_{\scri^+})$.
\end{proof}

Theorem \ref{Goursat} shows that the trace operator $\mathcal{T}^+: \mathcal{H}(\Sigma_0) \to \mathcal{H}^+$ is surjective. Combining with Theorem \ref{Trace}, we obtain that the trace operator $\mathcal{T}^+: \mathcal{H}(\Sigma_0) \to \mathcal{H}^+$ is an isometric operator. Similarly, we can construct the space $\mathcal{H}^-$ of past scattering data on the past horizon and the past null infinity and the past trace operator $\mathcal{T}^-: \mathcal{H}(\Sigma_0)\to \mathcal{H}^-$ which is an isometric operator. Therefore,
we can define the conformal scattering operator for the tensorial Fackerell-Ipser equation \eqref{Fac01} (resp. \eqref{Fac02}) as follows
\begin{definition}
The conformal scattering operator $S: \mathcal{H}^- \to \mathcal{H}^+$ of the tensorial Fackerell-Ipser equation \eqref{Fac01} (resp. \eqref{Fac02}) is an isometry which maps the past scattering data to the future scattering data, i.e.,
$$S:= \mathcal{T}^+\circ (\mathcal{T}^-)^{-1}.$$
\end{definition}

\section{Conformal scattering for the spin $\pm 1$ Teukolsky equations}\label{ConTeu}
In this section, we will use the results obtained in Section \ref{ConFac} to establish the conformal scattering operator for the spin $+1$ Teukolsky equations \eqref{Teu1}. The construction for the spin $-1$ Teukolsky equation \eqref{Teu2} is done by the same way. Our method is developed from the recent work \cite{Masao}.
\subsection{The tensorial field and scattering data spaces}\label{TraceScat}
First, we define the finite energy space for the spin $+1$ Teukolsky equation \eqref{Teu1} by the following proposition.
\begin{proposition}\label{norm}
If we put
\begin{equation}\label{Norm1}
\left\|(\alpha_{a}|_{\Sigma_0},\alpha'_{a}|_{\Sigma_0})\right\| := \left\| (\phi_{a}|_{\Sigma_0},\slashed{\nabla}_{\partial_t}\phi_{a}|_{\Sigma_0}) \right\|_{\mathcal{H}(\Sigma_0)},
\end{equation}
then \eqref{Norm1} determines a norm for $\alpha_a$ on $\Sigma_0$ and we define by $\mathcal{H}^1(\Sigma_0)$ the completion of ${C}_0^\infty (\Lambda^1(\mathbb{S}^2)|_{\Sigma_0})\times {C}_0^\infty (\Lambda^1(\mathbb{S}^2)|_{\Sigma_0})$ in the norm \eqref{Norm1}. Here, 
$\phi_{a} = \dfrac{r^2}{F}\slashed{\nabla}_{\underline L}(r\alpha_{a}), \, \alpha'_{a} = \slashed{\nabla}_{\partial_t}\alpha_{a}$ and the space $\mathcal{H}(\Sigma_0)$ is defined by Definition \ref{FunSpaFac}. Similarly, we have the definition of space $\mathcal{H}^1(\Sigma_\tau)$ for $\tau >0$.
\end{proposition}
\begin{proof}
We need to prove that if $\left\|(\alpha_{a},\alpha'_{a})\right\|_{\mathcal{H}^1(\Sigma_0)} = 0$,  for a smooth, compactly supported tensors $\alpha_{a}$ and $\alpha'_{a}$, then $\alpha_{a}|_{\Sigma_0} = \alpha'_{a}|_{\Sigma_0}=0$. Indeed, the equality $\left\|(\alpha_{a},\alpha'_{a})\right\|_{\mathcal{H}^1(\Sigma_0)} = 0$ and the definition \eqref{Norm1} lead to
$$\left\| (\phi_{a}|_{\Sigma_0},\slashed{\nabla}_{\partial_t}\phi_{a}|_{\Sigma_0}) \right\|_{\mathcal{H}(\Sigma_0)}=0.$$
By using \eqref{ENERGY}, the above equality is equivalent to
$$\frac{1}{\sqrt 2}\left( \int_{\Sigma_0}\left( |\slashed{\nabla}_{\partial_t}\phi_{a}|^2 + |\slashed{\nabla}_{\partial_{r_*}}{\phi}_{a}|^2 + R^2F|\slashed{\nabla}_{\mathbb{S}^2}{\phi}_{a}|^2 + R^2F|{\phi}_{a}|^2 \right) \d r_*\d \mathbb{S}^2 \right)^{1/2}=0.$$
Therefore,
$$\phi_{a}|_{\Sigma_0} = \slashed{\nabla}_{\partial_t}\phi_{a}|_{\Sigma_0} =0.$$
Since Equation \eqref{Tran} and the Teukolsky equation \eqref{Teu1}, we have
\begin{eqnarray*}
\slashed{\nabla}_L\phi_a &=& \slashed{\nabla}_L \left( \frac{r^2}{F}\slashed{\nabla}_{\underline L}(r\alpha_a) \right)\cr
&=& \slashed{\nabla}_L \left( \frac{r^2}{F} \right)\slashed{\nabla}_{\underline L}(r\alpha_a) + \frac{r^2}{F}\slashed{\nabla}_L \slashed{\nabla}_{\underline L}(r\alpha_a)\cr
&=& \frac{2r}{F}\left( 1- \frac{3M}{r}\right)\slashed{\nabla}_{\underline L}(r\alpha_a) + \frac{r^2}{F}\slashed{\nabla}_L \slashed{\nabla}_{\underline L}(r\alpha_a)\cr
&=& \frac{2r}{F}\left( 1- \frac{3M}{r}\right)\slashed{\nabla}_{\underline L}(r\alpha_a) + \frac{r^2}{F}\slashed{\nabla}_L \slashed{\nabla}_{\underline L}(r\alpha_a)\cr
&=& \slashed{\Delta}_{\mathbb{S}^2}(r\alpha_a) - r\alpha_a.
\end{eqnarray*}
Since $\partial_t$ is a Killing vector field, we obtain also that
$$\slashed{\nabla}_L \left(\slashed{\nabla}_{\partial_t}\phi_a \right) = \slashed{\Delta}_{\mathbb{S}^2}\left(r\slashed{\nabla}_{\partial_t}\alpha_a \right)  - r\slashed{\nabla}_{\partial_t}\alpha_a.$$
Combining these equalities with $\phi_{a}|_{\Sigma_0}= \slashed{\nabla}_{\partial_t}\phi_{a}|_{\Sigma_0} = 0$, we get
$$\slashed{\Delta}_{\mathbb{S}^2}(r\alpha_{a}|_{\Sigma_0}) - r\alpha_{a}|_{\Sigma_0} = \slashed{\Delta}_{\mathbb{S}^2}\left(r\slashed{\nabla}_{\partial_t}\alpha_a \right)  - r\slashed{\nabla}_{\partial_t}\alpha_a = 0.$$
Since the operator $\slashed{\Delta}_{\mathbb{S}^2} - \mathrm{Id}$ (where $\mathrm{Id}$ is identity operator) is uniformly elliptic on the set of symmetric, traceless $2$-tensor field on $\mathbb{S}^2$, we have that $\alpha_{a}|_{\Sigma_0} = \alpha'_{a}|_{\Sigma_0} =0$. Our proof is completed.
\end{proof}
Similar to Theorem \ref{Cauchyproblem}, we obtain the well-posedness of Cauchy problem for the Teukolsky equation \eqref{Teu1} in the conformal rescaled spacetime $(\bar{\mathcal{M}},\hat{g})$. The well-posedness of Cauchy problem allows us to define the trace operator on conformal boundary $\mathfrak{H}^+\cup \scri^+$. 
\begin{theorem}(Cauchy problem for Equation \eqref{Teu1} in $(\bar{\mathcal{M}},\hat{g})$).\label{CauchyTeu}
For the initial data $(\beta_{a},\,\beta'_{a}) \in \mathcal{H}^1(\Lambda^1(\mathbb{S}^2)|_{\Sigma_0})$, the Cauchy problem for \eqref{Teu1} on $\Lambda^1(\mathbb{S}^2)|_{\bar{\mathcal{M}}}$ is well-posed in $ \cup_{t\in \mathbb{R}}\mathcal{H}^1(\Lambda^1(\mathbb{S}^2)|_{\Sigma_t})$. In other words, there exists a unique solution $\alpha_a \in \mathcal{D}'(\Lambda^1(\mathbb{S}^2)|_{\bar{\mathcal{M}}})$ of \eqref{Teu1} such that
\begin{equation}\label{CDteukolsky}
(\alpha_a,\slashed{\nabla}_{\partial_t}\alpha_a) \in {C}(\mathbb{R}_t;\, \cup_{t\in \mathbb{R}}\mathcal{H}^1(\Lambda^1(\mathbb{S}^2)|_{\Sigma_t})): \, \alpha_a|_{\Sigma_0}={\beta}_{a}; \, \slashed{\nabla}_{\partial_t}\alpha_a|_{\Sigma_0} ={\beta}'_{a}.
\end{equation}
Moreover, $\alpha_a$ belongs to $H^1_{loc}(\Lambda^1(\mathbb{S}^2)|_{\bar{\mathcal{M}}})$.
\end{theorem}
\begin{proof}
We prove the well-posedness in the future domain $\mathcal{I}^+(\Sigma_0)=\left\{t\geq 0\right\}$ of $\bar{\mathcal{M}}$, the well-posedness in the past domain $\mathcal{I}^-(\Sigma)=\left\{ t\leq 0\right\}$ is done similarly.  Multiplying the spin $+1$ Teukolsky equation \eqref{Teu1} by the factor $\dfrac{r^2}{F}$, we get
\begin{equation}\label{ReTeu1}
\frac{r^2}{F}\slashed{\nabla}_L\slashed{\nabla}_{\underline L}(r\alpha_a) + \frac{2r}{F}\left( 1-\frac{3M}{r} \right)\slashed{\nabla}_{\underline L}(r\alpha_a) - \slashed{\Delta}_{\mathbb{S}^2}(r\alpha_a) + r\alpha_a=0.
\end{equation}
Projecting Equation \eqref{ReTeu1} on the basic frame $(\slashed{\nabla}_{\partial^a_\theta},\slashed{\nabla}_{\partial^b_\theta})$ of $\Lambda^1(\mathbb{S}^2)$, we get the matrix equation which is similar to \eqref{system-wave}: 
\begin{equation}\label{new}
P_{\hat g}\Psi + \widetilde{L}_1\Psi = 0
\end{equation}
but with the first order differential operator $\widetilde{L}_1=(\widetilde{L}_1^{ij})_{2\times 2}$ still satisfies $\widetilde{L}_1^{11}=\widetilde{L}_1^{22},\, \widetilde{L}_1^{12}=\widetilde{L}_1^{21}$ and the unknown vector 
$${\Psi} = \left( \begin{matrix} \psi_1\\ \psi_2\end{matrix} \right),$$
where $\psi_i=r\alpha_i$ and $\alpha_i \, (i=1,2)$ are the scalar components of $\alpha_a$ decomposed on the basic frame $(\slashed{\nabla}_{\partial^a_\theta},\slashed{\nabla}_{\partial^b_\theta})$ of $\Lambda^1(\mathbb{S}^2)$.

Similar to the proof of Theorem \ref{Cauchyproblem}, we cut off $\mathcal{I}^+(\Sigma_0)$ by $\mathcal{O}$ which is a union of far enough neighbourhoods of $i^+$ and $i_0$. Then, we extend $(\mathcal{I}^+(\Sigma_0)-\mathcal{O},\hat{g})$ onto a cylindrical globally hyperbolic spacetime $(\mathbb{R}_t\times \mathbb{S}^3, \mathfrak{g})$ where $\mathfrak{g} = \d t^2 - h$ with $h(t)$ is a Riemannian metric on $\mathbb{S}^3$ smoothly with respect to $t$. For each $t\geq 0$, the hypersurface $\Sigma_t-\mathcal{O}$ is extended inside $(\mathbb{R}_t\times \mathbb{S}^3, \mathfrak{g})$ as a spacelike hypersurface $\mathcal{U}_t$. 
The conformal boundary $(\mathfrak{H}^+\cup \scri^+)-\mathcal{O}$ is extended inside $(\mathbb{R}_t\times \mathbb{S}^3, \mathfrak{g})$ as a null hypersurface $\mathscr{C}$.
Equation \eqref{new} is exstended to the following equation (which is similar to \eqref{systemwave}):
\begin{equation}\label{newnew}
P_{\mathfrak{g}} \Psi + \widetilde{L}_1\Psi=0.
\end{equation}
Therefore, Equation \eqref{newnew} is equivalent to a symmetrical hyperbolic system which consists two following equations (which are similar to \eqref{EQ1} and \eqref{EQ2}, respectively):
\begin{equation}\label{EQ1'}
(\partial_t^2\psi_1 - \Delta_h\psi_1) + (\widetilde{L}_1^{11}\psi_1+ \widetilde{L}_1^{12}\psi_2) = 0
\end{equation}
and
\begin{equation}\label{EQ2'}
(\partial_t^2\psi_2 - \Delta_h\psi_2) + (\widetilde{L}_2^{21}\psi_1+ \widetilde{L}_2^{22}\psi_2) =0.
\end{equation}
By using Leray's theorem, for smooth intitial data on $\mathcal{U}_0$, equations \eqref{EQ1'} and \eqref{EQ2'} have a unique smooth solution $\widetilde{\Psi}=(\widetilde{\psi}_1,\widetilde{\psi}_2)$ with $\widetilde{\psi}_i = r\widetilde{\alpha}_i\, (i=1,2)$ in smooth globally hyperbolic spacetime $(\mathbb{R}_t\times \mathbb{S}^3, \mathfrak{g})$. This corresponds to smooth solution $\widetilde{\alpha}_a$ of Equation \eqref{ReTeu1} in $(\mathbb{R}_t\times \mathbb{S}^3, \mathfrak{g})$.
By extending \eqref{Norm1} on $\mathcal{U}_\tau$, we have
$$\| (\widetilde{\alpha}_a|_{\mathcal{U}_\tau},\slashed{\nabla}_{\partial_t}\widetilde{\alpha}_a|_{\mathcal{U}_\tau})\|_{\mathcal{H}^1(\mathcal{U}_\tau)} = \| (\widetilde{\phi}_a,\slashed{\nabla}_{\partial_t}\widetilde{\phi}_a) \|_{\mathcal{H}(\mathcal{U}_\tau)} = \sqrt{\mathcal{E}(\widetilde{\Phi})(\tau)},$$
where $\widetilde{\phi}_a$ is smooth solution of Equation \eqref{Fac01} in $(\mathbb{R}_t\times \mathbb{S}^3, \mathfrak{g})$ and $\sqrt{\mathcal{E}(\widetilde{\Phi})(\tau)}$ is given by \eqref{EN}. Setting $\widetilde{\mathcal E}(\widetilde{\Psi})(\tau) = \| (\widetilde{\alpha}_a|_{\mathcal{U}_\tau},\slashed{\nabla}_{\partial_t}\widetilde{\alpha}_a|_{\mathcal{U}_\tau})\|_{\mathcal{H}^1(\mathcal{U}_\tau)}^2$, we have
$$\widetilde{\mathcal E}(\widetilde{\Psi})(\tau) = {\mathcal E}(\widetilde{\Phi})(\tau).$$
By energy estimate as \eqref{ENG}, we obtain the similar estimate that 
\begin{equation}\label{ENG'}
\widetilde{\mathcal E}(\widetilde{\Psi})(t)\leq \widetilde{\mathcal E}(\widetilde{\Psi})(s) e^{\widetilde{D}|t-s|}
\end{equation}
Using the existence of smooth solutions and energy estimate \eqref{ENG'}, we can obtain the global well-posedness of Cauchy problem for \eqref{EQ1'} and \eqref{EQ2'} for the initial data on $\mathcal{H}^1(\mathcal{U}_0)$ and we obtain the global solution $(\widetilde{\Psi},\partial_t\widetilde{\Psi})$ in $C(\mathbb{R}_t,\cup_{t\geq 0}\mathcal{H}^1(\mathcal{U}_t))$ (the process is similar to the one for solution $\widetilde{\Phi}$ obtained in the proof of Theorem \ref{Cauchyproblem} but for the energy norm $\sqrt{\widetilde{\mathcal E}(\widetilde{\Psi})(t)}$).
By local uniqueness and causality, using in particular the fact that as a consequence of the
finite propagation speed, the solution $\Psi$ of Cauchy problem of Equation \eqref{new} is the restriction of $\widetilde{\Psi}$ on $\mathcal{I}^+(\Sigma_0)-\mathcal{O}$.
Our proof is completed.

\end{proof}

In order to define the trace operators and tensorial field spaces of scattering data, we find the restrictions of $\slashed{\nabla}_L\phi_a$ on $\mathfrak{H}^+$ and $\scri^-$ and the ones of $\slashed{\nabla}_{\underline{L}}\phi_a$ on $\mathfrak{H}^-$ and $\scri^+$. 
In the proof of Proposition \ref{norm}, we proved that
\begin{equation}\label{Tensor1}
\slashed{\nabla}_L\phi_a = \slashed{\Delta}_{\mathbb{S}^2}(r\alpha_a) - r\alpha_a.
\end{equation}
Equality \eqref{Tensor1} leads to the following restrictions of $\slashed{\nabla}_L\phi_a$ on $\scri^-$ and $\mathfrak{H}^+$:
\begin{equation}\label{trans1}
\slashed{\nabla}_L\phi_a|_{\scri^-} = \slashed{\Delta}_{\mathbb{S}^2}(r\alpha_a)|_{\scri^-} - (r\alpha_a)|_{\scri^-}
\end{equation}
and   
\begin{equation}\label{trans2}
\slashed{\nabla}_L\phi_a|_{\mathfrak{H}^+} = 2M\slashed{\Delta}_{\mathbb{S}^2}\alpha_a|_{\mathfrak{H}^+} - 2M\alpha_a|_{\mathfrak{H}^+}\hbox{   (because   } r|_{\mathfrak{H}^+}=2M),
\end{equation}
respectively.

By the same way as in the proof of Proposition \ref{norm}, we can establish that
\begin{eqnarray}\label{Tensor2}
\slashed{\nabla}_{\underline L}\phi_a &=& \slashed{\nabla}_{\underline L} \left( \frac{r^2}{F}\slashed{\nabla}_{\underline L}(r\alpha_a) \right)\cr
&=& \slashed{\nabla}_{\underline L} \left(\slashed{\nabla}_{\underline L}\left(\frac{r^3}{F}\alpha_a \right) - \slashed{\nabla}_{\underline L} \left( \frac{r^2}{F} \right)(r\alpha_a)\right)\cr
&=& \slashed{\nabla}^2_{\underline L} \left(\frac{r^3\alpha_a}{F} \right) + \slashed{\nabla}_{\underline L} \left( \frac{2r^2\alpha_a}{F}\left( 1- \frac{3M}{r}\right)\right).
\end{eqnarray}
Equality \eqref{Tensor2} leads to the restriction of $\slashed{\nabla}_{\underline L}\phi_a$ on $\scri^+$ as
\begin{eqnarray}\label{trans3}
\slashed{\nabla}_{\underline L}\phi_a|_{\scri^+} &=&\lim_{r\to \infty}\slashed{\nabla}^2_{\underline L} \left(\frac{r^3\alpha_a}{F} \right) + \lim_{r\to \infty}\slashed{\nabla}_{\underline L} \left( \frac{2r^2\alpha_a}{F}\left( 1- \frac{3M}{r}\right)\right)\cr
&=& \slashed{\nabla}^2_{\underline L} (r^3\alpha_a)|_{\scri^+} + \lim_{r\to \infty} \slashed{\nabla}_{\underline L} (2r^2\alpha_a)\cr
&& \hbox{   (because   } F=1-\frac{2M}{r} \to 1 \hbox{  and  }1-\frac{3M}{r} \to 1 \hbox{  as  } r\to\infty)\cr
&=& \slashed{\nabla}^2_{\underline L} (r^3\alpha_a)|_{\scri^+},
\end{eqnarray}
if we consider that the tensorial field $r^3\alpha_a$ is regular on $\scri^+$, hence $(r^2\alpha_a)|_{\scri^+} = \lim\limits_{r\to \infty} \dfrac{1}{r} (r^3\alpha_a)=0$. And by \eqref{Tensor2}, the restriction of $\slashed{\nabla}_{\underline L}\phi_a$ on $\mathfrak{H}^-$ is
\begin{eqnarray}\label{trans4}
\slashed{\nabla}_{\underline L}\phi_a|_{\mathfrak{H}^-} &=& 8M^3\slashed{\nabla}^2_{\underline L} \left(\frac{\alpha_a}{F} \right)|_{\mathfrak{H}^-} + 2(2M)^2 \left(1 - \frac{3M}{2M}\right)\slashed{\nabla}_{\underline L} \left( \frac{\alpha_a}{F}\right)|_{\mathfrak{H}^-} \hbox{   (because   } r|_{\mathfrak{H}^-}=2M)\cr
&=& 8M^3\slashed{\nabla}^2_{\underline L} \left(\frac{\alpha_a}{F} \right)|_{\mathfrak{H}^-} - 4M^2\slashed{\nabla}_{\underline L} \left( \frac{\alpha_a}{F}\right)|_{\mathfrak{H}^-}.
\end{eqnarray}

Combining \eqref{trans2} and \eqref{trans3} (resp. \eqref{trans1} and \eqref{trans4}) with the well-posedness of Cauchy problem in Theorem \ref{CauchyTeu}, we can define the future (resp. past) trace operator for Equation \eqref{Teu1} on the conformal boundary $\mathfrak{H}^+\cup \scri^+$ (resp. $\mathfrak{H}^-\cup \scri^-$) in the following definition.
\begin{definition}\label{TRACETEU} (Trace operator for the spin $+1$ Teukolsky equation). Let $({\beta}_{a}, {\beta}'_{a}) \in {C}_0^\infty(\Lambda^1(\mathbb{S}^2)|_{\Sigma_0})\times {C}_0^\infty(\Lambda^1(\mathbb{S}^2)|_{\Sigma_0})$. Consider the smooth solution ${\alpha}_a$ of Equation \eqref{Teu1} such that
$${\alpha}_a|_{\Sigma_0} = {\beta}_{a}, \, \slashed{\nabla}_t {\alpha}_{a}|_{\Sigma_0} = {\beta}'_{a}.$$
The future trace operator $\mathfrak{T}^+$ from ${C}_0^\infty(\Lambda^1(\mathbb{S}^2)|_{\Sigma_0})\times {C}_0^\infty(\Lambda^1(\mathbb{S}^2)|_{\Sigma_0})$ to ${C}_0^\infty(\Lambda^1(\mathbb{S}^2)|_{\mathfrak{H}^+})\times {C}_0^\infty(\Lambda^1(\mathbb{S}^2)|_{\scri^+})$ is defined by
$$\mathfrak{T}^+({\beta}_{a}, {\beta}'_{a}) = ({\alpha}_a|_{\mathfrak{H}^+}, (r^3{\alpha}_a)|_{\scri^+}).$$
The past trace operator $\mathfrak{T}^-$ from ${C}_0^\infty(\Lambda^1(\mathbb{S}^2)|_{\Sigma_0})\times {C}_0^\infty(\Lambda^1(\mathbb{S}^2)|_{\Sigma_0})$ to ${C}_0^\infty(\Lambda^1(\mathbb{S}^2)|_{\mathfrak{H}^-})\times {C}_0^\infty(\Lambda^1(\mathbb{S}^2)|_{\scri^-})$ is defined by
$$\mathfrak{T}^-({\beta}_{a}, {\beta}'_{a}) = \left((F^{-1}{\alpha}_a)|_{\mathfrak{H}^-}, (r{\alpha}_a)|_{\scri^-} \right).$$
\end{definition}
We define also the tensorial field space for scattering data of the spin $+1$ Teukolsky equation \eqref{Teu1} by density as follows: 
\begin{definition}\label{normboundary}
The tensorial field space for scattering data $\mathcal{H}^{2,+}$ is the completion of ${C}_0^\infty(\Lambda^1(\mathbb{S}^2)|_{\mathfrak{H}^+}) \times {C}_0^\infty(\Lambda^1(\mathbb{S}^2)|_{\scri^+})$ under the norm
\begin{equation}\label{H2+norm}
\left\| (\xi_a,\zeta_a) \right\|_{\mathcal{H}^{2,+}} = \frac{1}{\sqrt 2}\left(\int_{\mathfrak{H}^+} |2M\slashed{\Delta}_{\mathbb{S}^2}\xi_a - 2M\xi_a|^2 \d v\d \mathbb{S}^2 + \int_{\scri^+} |\slashed{\nabla}^2_{\underline L} \zeta_a|^2 \d u \d \mathbb{S}^2 \right)^{1/2},
\end{equation}
which means
$$\mathcal{H}^{2,+} \simeq \dot{H}^2(\mathbb{R}_v; \, L^2(\Lambda^1(\mathbb{S}^2)|_{\mathfrak{H}^+})) \times \dot{H}^2(\mathbb{R}_u; \, L^2(\Lambda^1(\mathbb{S}^2)|_{\scri^+})).$$
On the other hand, the tensor space for scattering data $\mathcal{H}^{2,-}$ is the completion of ${C}_0^\infty(\Lambda^1(\mathbb{S}^2)|_{\mathfrak{H}^-}) \times {C}_0^\infty(\Lambda^1(\mathbb{S}^2)|_{\scri^-})$ in the norm
\begin{equation}\label{H2-norm}
\left\| (\xi_a,\zeta_a) \right\|_{\mathcal{H}^{2,-}} = \frac{1}{\sqrt 2}\left(\int_{\mathfrak{H}^-} |8M^3\slashed{\nabla}^2_{\underline L} \xi_a - 4M^2\slashed{\nabla}_{\underline L} \xi_a|^2 \d v\d \mathbb{S}^2 + \int_{\scri^-} |\slashed{\Delta}_{\mathbb{S}^2}\zeta_a - \zeta_a|^2 \d u \d \mathbb{S}^2 \right)^{1/2},
\end{equation}
which means
$$\mathcal{H}^{2,-} \simeq \dot{H}^2(\mathbb{R}_v; \, L^2(\Lambda^1(\mathbb{S}^2)|_{\mathfrak{H}^-})) \times \dot{H}^2(\mathbb{R}_u; \, L^2(\Lambda^1(\mathbb{S}^2)|_{\scri^-})).$$
\end{definition}
As another consequence of the equality energy \eqref{equality}, we have the following theorem
\begin{theorem}\label{TeuTraceInj}
The trace operator $\mathfrak{T}^+$ extends uniquely as a bounded linear map from $\mathcal{H}^1(\Lambda^1(\mathbb{S}^2)|_{\Sigma_0}))$ to $\mathcal{H}^{2,+}$. The extended operator is a partial isometry, i.e., for any initial data $({\beta}_a,{\beta}'_a) \in \mathcal{H}^1(\Lambda^1(\mathbb{S}^2)|_{\Sigma_0})$, we have
$$\left\| \mathfrak{T}^+({\beta}_{a},{\beta}'_{a}) \right\|_{\mathcal{H}^{2,+}} = \left\| ({\beta}_{a},{\beta}'_{a}) \right\|_{\mathcal{H}^1(\Lambda^1(\mathbb{S}^2)|_{\Sigma_0})}.$$
The same property holds for the past trace operator $\mathfrak{T}^-$.
\end{theorem}
\begin{proof}
For $({\beta}_{a},{\beta}'_{a})$ in $\mathcal{H}^1(\Sigma_0)$, by \eqref{Norm1}, we have that $(\xi_{a},\zeta_{a}) = \left(\dfrac{r^2}{F}\slashed{\nabla}_{\underline L}(r\alpha_{a0}), \dfrac{r^2}{F}\slashed{\nabla}_{\underline L}(r\alpha_{a1})\right)$ belongs to $\mathcal{H}(\Sigma_0)$. For this initial data $(\xi_{a},\zeta_{a})$, the Fackerell-Ipser equation \eqref{Fac01} has a unique solution $(\phi_a,\slashed{\nabla}_t\phi_a) \in C(\mathbb{R}_t,\cup_{t\in \mathbb{R}}\mathcal{H}(\Lambda^1(\mathbb{S}^2)|_{\Sigma_t}))$ by Theorem \ref{Cauchyproblem}.
Since the energy equality \eqref{equality}, we have 
\begin{equation}\label{eq}
\mathcal{E}^T_{\Sigma_0}(\phi_a) = \mathcal{E}^T_{\mathfrak{H}^+}(\phi_a) + \mathcal{E}^T_{\scri^+}(\phi_a).
\end{equation}
By Proposition \ref{norm} and energy fluxes \eqref{energyFlux4}, \eqref{energyFlux5} of $\phi_a$ through $\scri^+,\mathfrak{H}^+$, the equality \eqref{eq} leads to
$$\| (\beta_a,\beta'_a) \|^2_{\mathcal{H}^1(\Lambda^1(\mathbb{S}^2)|_{\Sigma_0})} = \int_{\mathfrak{H}^+} |(\slashed{\nabla}_{L}{\phi_a})|_{\mathfrak{H}^+}|^2 \d v \d \mathbb{S}^2 + \int_{\scri^+} |(\slashed{\nabla}_{{\underline L}}{\phi_a})|_{\scri^+}|^2 \d u \d \mathbb{S}^2.$$
Combining this inequality with \eqref{trans2}, \eqref{trans3} and \eqref{H2+norm}, we obtain 
$$\| (\beta_a,\beta'_a) \|_{\mathcal{H}^1(\Lambda^1(\mathbb{S}^2)|_{\Sigma_0})} = \|(\alpha_a|_{\mathfrak{H}^+}, (r^3\alpha_a)|_{\scri^+}) \|_{\mathcal{H}^{2,+}}.$$
Using Definition \ref{TRACETEU}, the above equality leads to
$$\| (\beta_a,\beta'_a) \|_{\mathcal{H}^1(\Lambda^1(\mathbb{S}^2)|_{\Sigma_0})} = \| \mathfrak{T}^+(\beta_a,\beta'_a) \|_{\mathcal{H}^1(\Lambda^1(\mathbb{S}^2)|_{\Sigma_0})}.$$
This completes our proof.
\end{proof}
\begin{remark}
In fact, Theorem \ref{TeuTraceInj} is a direct consequence of the energy equality \eqref{equality} of the tensorial Fackerell-Ipser equation \eqref{Fac01}, because we use the energy norm of solution $\phi_a$ of \eqref{Fac01} in Proposition \ref{norm} and Definition \ref{normboundary} to define the tensorial field spaces $\mathcal{H}^1(\Lambda^1(\mathbb{S}^2)|_{\Sigma_0})$ (Cauchy data on $\Sigma_0$) and $\mathcal{H}^{2,+}$ (Goursat data (scattering data) on $\mathfrak{H}^+\cup \scri^+$) for spin $+1$ Teukolsky equation \eqref{Teu1}.
\end{remark}

\subsection{Goursat problem and conformal scattering operator}\label{GoursatTeu}
In this Subsection, we will use the results in Subsection \ref{GoursatFac} to prove the well-posedness of the Goursat problem of the spin $+1$ Teukolsky equation \eqref{Teu1}. In particular, the tensorial Fackerell-Ipser equation \eqref{Fac01} is a consequence equation from \eqref{Teu1} by commuting with the operator $\slashed{\nabla}_{\underline L}$. Using this fact and the relation $\phi_a = \dfrac{r^2}{F}\slashed{\nabla}_{\underline L}(r\alpha_a)$ we will prove the following theorem.
\begin{theorem}\label{GGoursatTeu}(Goursat problem of the spin $+1$ Teukolsky equation \eqref{Teu1} in $\mathcal{I}^+(\Sigma_0)$).
The Goursat problem of the spin $+1$ Teukolsky equation \eqref{Teu1} is well-posed in $\mathcal{I}^+(\Sigma_0)$. This means that for the initial data $(\xi_a,\zeta_a)\in C_0^\infty(\Lambda^1(\mathbb{S}^2)|_{\mathfrak{H}^+})\times C_0^\infty(\Lambda^1(\mathbb{S}^2)|_{\scri^+})$, there exists a unique solution of \eqref{Teu1} satisfying
$$(\alpha_a,\slashed{\nabla}_t{\alpha}_a)\in {C}(\mathbb{R}_t; \, \cup_{t\geq 0}\mathcal{H}^1(\Lambda^1(\mathbb{S}^2)|_{\Sigma_t})) \hbox{  and  } \mathfrak{T}^+({\alpha}_a|_{\Sigma_0}, \slashed{\nabla}_t {\alpha}_a|_{\Sigma_0}) = (\xi_a,\zeta_a).$$
\end{theorem}
\begin{proof}
Since equations \eqref{trans2}, \eqref{trans3} and Definition \ref{TRACETEU}, we consider the following equations
\begin{equation}\label{Ini}
\slashed{\nabla}_L \phi_a|_{\mathfrak{H}^+} = 2M \slashed{\Delta}_{\mathbb{S}^2}\xi_a - 2M \xi_a,\,\, \, \slashed{\nabla}_{\underline L}\phi_a|_{\scri^+} = \slashed{\nabla}_{\underline L}^2\zeta_a.
\end{equation}
We find that the following tensor fields satisfy the above equations
\begin{equation}\label{Initial}
\phi_a|_{\mathfrak{H}^+} (v_0)= \int_{v_0}^{+\infty} \left( 2M \slashed{\Delta}_{\mathbb{S}^2}\xi_a - 2M \xi_a 
\right) \d v\hbox{   (for all  } v_0\in \mathbb{R}),\,\,\, \phi_a|_{\scri^+} = \slashed{\nabla}_{\underline L}\zeta_a.
\end{equation}
By Theorem \ref{Goursat}, for the initial data \eqref{Initial}, the Goursat problem of the Fackerell-Ipser equation \eqref{Fac01} has a unique solution $(\phi_a, \slashed{\nabla}_{\partial_t}\phi_a)\in C(\mathbb{R}_t,\cup_{t\geq 0}\mathcal{H}(\Sigma_t))$.

Now, if we define
\begin{equation}\label{alpha}
r\alpha _a(u_0) = 2M\xi_a - \int_{u_0}^{+\infty} \frac{F}{r^2}\phi_a \d u \hbox{      (for all   } u_0\in \mathbb{R}),
\end{equation}
then $\phi_a$ and $\alpha_a$ satisfy the relation \eqref{Tran}, i.e., $\phi_a = \dfrac{r^2}{F}\slashed{\nabla}_{\underline L}(r\alpha_a)$. Since $\phi_a$ satisfies the Fackerell-Ipser equation \eqref{Fac01} and the proof of Proposition \ref{relationEq}, we have 
$$\slashed{\nabla}_{\underline L}\left( \slashed{\nabla}_{L}\left( \frac{r^2}{F}\slashed{\nabla}_{\underline L}(r\alpha_a) \right) - r^2\slashed{\Delta}(r\alpha_a) + r\alpha_a \right) =0.$$
This corresponds to 
\begin{equation}\label{GoursatEq}
\slashed{\nabla}_{\underline L} \left( \frac{r^2}{F} \mathbb{T} (r\alpha_a) \right) =0,
\end{equation}
where $\mathbb{T}$ is Teukolsky operator
\begin{equation}\label{TO}
\mathbb{T}(r\alpha_a)= \slashed{\nabla}_L\slashed{\nabla}_{\underline L}(r\alpha_a) + \frac{2}{r}\left( 1-\frac{3M}{r} \right)\slashed{\nabla}_{\underline L}(r\alpha_a) - F\slashed{\Delta}(r\alpha_a) + \frac{F}{r^2}(r\alpha_a).
\end{equation}
Since \eqref{alpha}, we have 
\begin{equation*}
(r\alpha_a)|_{\mathfrak{H}^+} = \left( 2M\xi_a  - \int_{u_0}^{+\infty} \frac{F}{r^2}\phi_a \d u\right)|_{\mathfrak{H}^+}.
\end{equation*}
This leads to
\begin{equation*}
2M \alpha_a|_{\mathfrak{H}^+} = 2M\xi_a - \lim_{u_0\to +\infty}\int_{u_0}^{+\infty} \frac{F}{r^2}\phi_a \d u = 2M\xi_a.
\end{equation*}
Therefore, we get 
\begin{equation}\label{RestricAlpha}
\alpha_a|_{\mathfrak{H}^+}=\xi_a.
\end{equation}
Now, we calculate
\begin{eqnarray}\label{sca}
\left( \frac{r^2}{F}\mathbb{T}(r\alpha_a) \right)|_{\mathfrak{H}^+} &=& \left( \frac{r^2}{F}\slashed{\nabla}_L\slashed{\nabla}_{\underline L}(r\alpha_a) \right)|_{\mathfrak{H}^+} - \frac{1}{2M}\left(\frac{r^2}{F}\slashed{\nabla}_{\underline L}(r\alpha_a)\right)|_{\mathfrak{H}^+}\cr
&& - \slashed{\Delta}_{\mathbb{S}^2} (r\alpha_a)|_{\scri^-} + (r\alpha_a)|_{\scri^-} \hbox{     (because   } r|_{\mathfrak{H}^+}=2M)\cr
&=& \slashed{\nabla}_L \left( \frac{r^2}{F}\slashed{\nabla}_{\underline L}(r\alpha_a) \right)|_{\mathfrak{H}^+} - \left(\slashed{\nabla}_L \left( \frac{r^2}{F} \right)\slashed{\nabla}_{\underline L}(r\alpha_a)\right)|_{\mathfrak{H}^+} \cr
&&- \frac{1}{2M}\left(\frac{r^2}{F}\slashed{\nabla}_{\underline L}(r\alpha_a)\right)|_{\mathfrak{H}^+} - 2M \slashed{\Delta}_{\mathbb{S}^2} \alpha_a|_{\mathfrak{H}^+} + 2M\alpha_a|_{\mathfrak{H}^+}\cr
&=& \slashed{\nabla}_L \left( \frac{r^2}{F}\slashed{\nabla}_{\underline L}(r\alpha_a) \right)|_{\mathfrak{H}^+} - \left(\frac{2}{r} \left( 1-\frac{3M}{r}\right)\frac{r^2}{F}\slashed{\nabla}_{\underline L}(r\alpha_a)\right)|_{\mathfrak{H}^+} \cr
&&- \frac{1}{2M}\left(\frac{r^2}{F}\slashed{\nabla}_{\underline L}(r\alpha_a)\right)|_{\mathfrak{H}^+} - 2M \slashed{\Delta}_{\mathbb{S}^2} \alpha_a|_{\mathfrak{H}^+} + 2M\alpha_a|_{\mathfrak{H}^+}\cr
&=& \slashed{\nabla}_L \left( \frac{r^2}{F}\slashed{\nabla}_{\underline L}(r\alpha_a) \right)|_{\mathfrak{H}^+} + \frac{1}{2M}\left(\frac{r^2}{F}\slashed{\nabla}_{\underline L}(r\alpha_a)\right)|_{\mathfrak{H}^+} \cr
&&- \frac{1}{2M}\left(\frac{r^2}{F}\slashed{\nabla}_{\underline L}(r\alpha_a)\right)|_{\mathfrak{H}^+} - 2M \slashed{\Delta}_{\mathbb{S}^2} \alpha_a|_{\mathfrak{H}^+} + 2M\alpha_a|_{\mathfrak{H}^+}\cr
&=& \slashed{\nabla}_L \left( \frac{r^2}{F}\slashed{\nabla}_{\underline L}(r\alpha_a) \right)|_{\mathfrak{H}^+} - 2M \slashed{\Delta}_{\mathbb{S}^2}\alpha_a|_{\mathfrak{H}^+} + 2M\alpha_a|_{\mathfrak{H}^+}\cr
&=& \slashed{\nabla}_L \phi_a|_{\mathfrak{H}^+} - 2M \slashed{\Delta}_{\mathbb{S}^2} \xi_a + 2M\xi_a\cr
&=& 0\,\,\,\, \hbox{  (due to equations \eqref{Ini} and \eqref{RestricAlpha}}).
\end{eqnarray}
Therefore, integrating Equation \eqref{GoursatEq} follows $u$ (recall that $\partial_u =\underline{L}$) and using \eqref{sca}, we get
$$\frac{r^2}{F} \mathbb{T} (r\alpha_a) = \left(\frac{r^2}{F} \mathbb{T} (r\alpha_a)|_{\mathfrak{H}^+} \right)=0,$$
which means $\alpha_a$ given by \eqref{alpha} satisfies the spin $+1$ Teukosky equation \eqref{Teu1}.

Now, using \eqref{Initial} and \eqref{alpha}, we prove 
$$\mathfrak{T}^+({\alpha}_a|_{\Sigma_0}, \slashed{\nabla}_t {\alpha}_a|_{\Sigma_0}) = (\xi_a,\zeta_a).$$
Indeed, by Definition \ref{TRACETEU}, we have 
$$\mathfrak{T}^+({\alpha}_a|_{\Sigma_0}, \slashed{\nabla}_t {\alpha}_a|_{\Sigma_0})= (\alpha_a|_{\mathfrak{H}^+},(r^3\alpha_a)|_{\scri^+}).$$
Therefore, we need to prove that $\alpha_a|_{\mathfrak{H}^+}=\xi_a$ and $(r^3\alpha_a)|_{\scri^+} = \zeta_a$. The first restriction holds by \eqref{RestricAlpha}. For the second restriction, by  \eqref{Initial}, we have
\begin{eqnarray}\label{scate}
\slashed{\nabla}_{\underline L}\zeta_a &=& \phi_a|_{\scri^+}\cr
&=& \left( \dfrac{r^2}{F}\slashed{\nabla}_{\underline L}(r\alpha_a) \right)|_{\scri^+}\cr
&=& \slashed{\nabla}_{\underline L}\left( \dfrac{r^3}{F}\alpha_a \right)|_{\scri^+} + \frac{r^2}{F}\frac{2}{r}\left( 1- \frac{3M}{r}\right) (r\alpha_a)|_{\scri^+}\cr
&=& \slashed{\nabla}_{\underline L}\left( \dfrac{r^3}{F}\alpha_a \right)|_{\scri^+} + \lim_{r\to \infty}2F\left( 1- \frac{3M}{r}\right) (r^2\alpha_a)|_{\scri^+}\cr
&=& \slashed{\nabla}_{\underline L}\left( r^3\alpha_a \right)|_{\scri^+},
\end{eqnarray}
here $F=1-\frac{2M}{r}\to 1;\,1-\frac{3M}{r}\to 1$ as $r\to \infty$ and we considered that $r^3\alpha_a$ is regular on $\scri^+$, hence $r^2\alpha_a$ vanishes on $\scri^+$.
Integrating \eqref{scate} follows $u$ and using the fact that $\zeta_a$ has compact support,
we obtain that $\left( r^3\alpha_a \right)|_{\scri^+} = \zeta_a$. 

This means that $\alpha_a$ (given by \eqref{alpha}) is a solution of the Goursat problem for the Teukolsky equation \eqref{Teu1} with initial data $(\xi_a,\zeta_a)$. If $\gamma_a$ is another solution of the Goursat problem with the same initial data $(\xi_a,\zeta_a)$, then we obtain that $\xi_a-\gamma_a$ is a solution of the Goursat problem of Equation \eqref{Teu1} with initial data $(0,\, 0)$ on $\mathfrak{H}^+\cup \scri^+$.
This shows that $\phi_a = \frac{r^2}{F}\slashed{\nabla}_{\underline L}(\alpha_a-\gamma_a)$ is a solution of the Goursat problem of \eqref{Fac01} with initial data $(0,\, 0)$. By the uniqueness we have $\phi_a=0$, hence $\frac{r^2}{F}\slashed{\nabla}_{\underline L}(\alpha_a-\gamma_a)=0$. Integrating this equation and note that $(\alpha_a-\gamma_a)|_{\mathfrak{H}^+}=0$, we get $\alpha_a=\gamma_a$ and the uniqueness of the Goursat problem for spin $+1$ Teukolsky equation \eqref{Teu1} holds.
\end{proof}

By the same way as the proof of Theorem \ref{GGoursatTeu}, we establish the well-posedness of the Goursat problem for Equation \eqref{Teu1} in $\mathcal{I}^-(\Sigma_0)$ in the following theorem.
\begin{theorem}\label{GGoursatTeu'} (Goursat problem of the spin $+1$ Teukolsky equation \eqref{Teu1} in $\mathcal{I}^-(\Sigma_0)$).
The Goursat problem of the spin $+1$ Teukolsky equation \eqref{Teu1} is well-posed in $\mathcal{I}^-(\Sigma_0)$. This means that for the initial data $(\xi_a,\zeta_a)\in C_0^\infty(\Lambda^1(\mathbb{S}^2)|_{\mathfrak{H}^-})\times C_0^\infty(\Lambda^1(\mathbb{S}^2)|_{\scri^-})$, there exists a unique solution of \eqref{Teu1} satisfying
$$(\alpha_a,\slashed{\nabla}_t{\alpha}_a)\in {C}(\mathbb{R}_t; \, \cup_{t\geq 0}\mathcal{H}^1(\Lambda^1(\mathbb{S}^2)|_{\Sigma_t})) \hbox{  and  } \mathfrak{T}^-({\alpha}_a|_{\Sigma_0}, \slashed{\nabla}_t {\alpha}_a|_{\Sigma_0}) = (\xi_a,\zeta_a).$$
\end{theorem}
\begin{proof}
The proof is done by the same way of the one of Theorem \ref{GGoursatTeu}. However, since the past trace is different from the future trace (see their formulas in Definition \ref{TRACETEU}), we give here the detailed calculations.

Since equations \eqref{trans1}, \eqref{trans4} and Definition \ref{TRACETEU}, we consider $\phi_a$ the following equations
\begin{equation}\label{Ini'}
\slashed{\nabla}_L \phi_a|_{\scri^-} = \slashed{\Delta}_{\mathbb{S}^2}\zeta_a - \zeta_a,\,\, \, \slashed{\nabla}_{\underline L}\phi_a|_{\mathfrak{H}^-} = 8M^3\slashed{\nabla}_{\underline L}^2\xi_a - 4M^2\slashed{\nabla}_{\underline L}\xi_a.
\end{equation}
We find that the following tensor fields satisfy the above relation
\begin{equation}\label{Initial'}
\phi_a|_{\scri^-}(v_0) = \int_{v_0}^{+\infty} \left( \slashed{\Delta}_{\mathbb{S}^2}\zeta_a - \zeta_a 
\right) \d v \hbox{     (for all    } v_0\in \mathbb{R});\,\,\, \phi_a|_{\mathfrak{H}^-} = 8M^3\slashed{\nabla}_{\underline L}\xi_a - 4M^2\xi_a.
\end{equation}
By the well-posedness of Goursat problem for the tensorial Fackerell-Ipser equation \eqref{Fac01}, for the initial data \eqref{Initial'}, the Goursat problem of \eqref{Fac01} has a unique solution $(\phi_a, \slashed{\nabla}_{\partial_t}\phi_a)\in C(\mathbb{R}_t,\cup_{t\leq 0}\mathcal{H}(\Sigma_t))$.

We define
\begin{equation}\label{alpha'}
r\alpha _a (u_0) = \zeta_a + \int^{u_0}_{-\infty} \frac{F}{r^2}\phi_a \d u \hbox{    (for all   } u_0\in \mathbb{R}),
\end{equation}
then $\phi_a$ and $\alpha_a$ satisfy the relation \eqref{Tran}, i.e., $\phi_a = \dfrac{r^2}{F}\slashed{\nabla}_{\underline L}(r\alpha_a)$. Since supports of $\xi_a$ and $\zeta_a$ are compact, we obtain that the supports of $\phi_a$ and $\alpha_a$ on $\mathfrak{H}^+\cup \scri^+$ are also compact. 

By the same way in the proof of Theorem \ref{GGoursatTeu}, we have
\begin{equation}\label{GoursatEq'}
\slashed{\nabla}_{\underline L} \left( \frac{r^2}{F} \mathbb{T} (r\alpha_a) \right) =0,
\end{equation}
where $\mathbb{T}$ is Teukolsky operator \eqref{TO}.

By \eqref{alpha'}, we obtain the following restriction on $\scri^-$:
\begin{eqnarray}\label{ReT}
(r\alpha _a)|_{\scri^-} &=& \left( \zeta_a + \int^{u_0}_{-\infty} \frac{F}{r^2}\phi_a \d u\right)|_{\scri^-}\cr
&=& \zeta_a + \lim_{u_0\to -\infty}\int^{u_0}_{-\infty}\frac{F}{r^2}\phi_a\d u\cr
&=&\zeta_a.
\end{eqnarray}
On the other hand
\begin{eqnarray}\label{sca'}
\left( \frac{r^2}{F}\mathbb{T}(r\alpha_a) \right)|_{\scri^-} &=& \left( \frac{r^2}{F}\slashed{\nabla}_L\slashed{\nabla}_{\underline L}(r\alpha_a) \right)|_{\scri^-} + \left(\frac{2r}{F}\left(1-\frac{3M}{r}\right)\slashed{\nabla}_{\underline L}(r\alpha_a)\right)|_{\scri^-}\cr
&& - \slashed{\Delta}_{\mathbb{S}^2} (r\alpha_a)|_{\scri^-} + (r\alpha_a)|_{\scri^-} \cr
&=& \slashed{\nabla}_L \left( \frac{r^2}{F}\slashed{\nabla}_{\underline L}(r\alpha_a) \right)|_{\scri^-} - \left(\slashed{\nabla}_L \left(\frac{r^2}{F} \right)\slashed{\nabla}_{\underline L}(r\alpha_a)\right)|_{\scri^-} \cr
&& +\left(\frac{2r}{F}\slashed{\nabla}_{\underline L}(r\alpha_a)\right)|_{\scri^-} - \slashed{\Delta}_{\mathbb{S}^2} (r\alpha_a)|_{\scri^-} + (r\alpha_a)|_{\scri^-}\cr
&=& \slashed{\nabla}_L \phi_a|_{\scri^-} - \left(\frac{2r}{F}\left( 1-\frac{3M}{r} \right)\slashed{\nabla}_{\underline L}(r\alpha_a)\right)|_{\scri^-} \cr
&&+ \left(\frac{2r}{F}\slashed{\nabla}_{\underline L}(r\alpha_a)\right)|_{\scri^-} - \slashed{\Delta}_{\mathbb{S}^2}(r \alpha_a)|_{\scri^-} + (r\alpha_a)|_{\scri^-}\cr
&=& \slashed{\nabla}_L \phi_a|_{\scri^-} - \left(2r\slashed{\nabla}_{\underline L}(r\alpha_a)\right)|_{\scri^-} + \left(2r\slashed{\nabla}_{\underline L}(r\alpha_a)\right)|_{\scri^-} \cr
&&- \slashed{\Delta}_{\mathbb{S}^2}(r \alpha_a)|_{\scri^-} + (r\alpha_a)|_{\scri^-}\cr
&&\hbox{     (because on   }\scri^-:\,  F=1-\frac{2M}{r}\to 1; \, 1-\frac{3M}{r}\to 1 \hbox{    as   } r\to -\infty)\cr
&=& \slashed{\nabla}_L \phi_a|_{\scri^-} - \slashed{\Delta}_{\mathbb{S}^2}(r\alpha_a)|_{\scri^-} + (r\alpha_a)|_{\scri^-}\cr
&=& \slashed{\nabla}_L \phi_a|_{\scri^-} - \slashed{\Delta}_{\mathbb{S}^2} \zeta_a + \zeta_a\cr
&=& 0\,\,\,\, \hbox{  (due to equations \eqref{Ini'} and \eqref{ReT})}.
\end{eqnarray}
Therefore, integrating Equation \eqref{GoursatEq'} follows $u$ (recall that $\partial_u =\underline{L}$) and using \eqref{sca'}, we get
$$\frac{r^2}{F} \mathbb{T} (r\alpha_a) = \left(\frac{r^2}{F} \mathbb{T} (r\alpha_a)|_{\scri^-} \right)=0.$$
This means that $\alpha_a$ given by \eqref{alpha'} satisfies the spin $+1$ Teukosky equation \eqref{Teu1}.

Now, using \eqref{Initial'} and \eqref{alpha'}, we prove 
$$\mathfrak{T}^-({\alpha}_a|_{\Sigma_0}, \slashed{\nabla}_t {\alpha}_a|_{\Sigma_0}) = (\xi_a,\zeta_a).$$
Indeed, by Definition \ref{TRACETEU}, we have 
$$\mathfrak{T}^-({\alpha}_a|_{\Sigma_0}, \slashed{\nabla}_t {\alpha}_a|_{\Sigma_0})= ((F^{-1}\alpha_a)|_{\mathfrak{H}^-},(r\alpha_a)|_{\scri^-}).$$
Therefore, we need to prove that $(F^{-1}\alpha_a)|_{\mathfrak{H}^-}=\xi_a$ and $(r\alpha_a)|_{\scri^-} = \zeta_a$. The second restriction holds by \eqref{ReT}. For the first restriction, by using \eqref{Initial'}, we have
\begin{eqnarray}\label{ResTric'}
\slashed{\nabla}_{\underline L}(r\alpha_a)|_{\mathfrak{H}^-} &=& \left(\frac{F}{r^2}\phi_a \right)|_{\mathfrak{H}^-} \cr
&=&\left( \frac{F}{r^2} (8M^3\slashed{\nabla}_{\underline L}\xi_a  - 4M^2\xi_a) \right)|_{\mathfrak{H}^-}\cr
&=& \left( \frac{F}{r^2} (r^3\slashed{\nabla}_{\underline L}\xi_a  - r^2\xi_a) \right)|_{\mathfrak{H}^-}\cr
&=& \left( \frac{F}{r^2} \left(r^3\slashed{\nabla}_{\underline L}\xi_a  - \left( F+ \frac{2M}{r}\right)r^2\xi_a \right) \right)|_{\mathfrak{H}^-}\cr
&=& \left(rF\slashed{\nabla}_{\underline L}\xi_a  - \left( F+ \frac{2M}{r}\right)F\xi_a \right)|_{\mathfrak{H}^-}\cr
&=& \slashed{\nabla}_{\underline L}(rF\xi_a)|_{\mathfrak{H}^-}. 
\end{eqnarray}
Here, we used the fact that $r|_{\mathfrak{H}^-}=2M$ and $F|_{\mathfrak{H^-}} = \left(1-\frac{2M}{r}\right)|_{\mathfrak{H}^-}=0$. Equality \eqref{ResTric'} is equivalent to
$$\slashed{\nabla}_{\underline L}(r\alpha_a - rF\xi_a)|_{\mathfrak{H}^-} = 0.$$
Integrating this equality follows $u$ (recall that $\partial_u = \underline{L}$), we get that $(r\alpha_a - rF\xi_a)|_{\mathfrak{H}^-}$ is a constant. Since the supports of $\alpha_a$ and $\xi_a$ are compact on $\mathfrak{H}^-$, we have $\alpha_a(P)-\xi_a(P) =0$ at a point $P$ which does not belong to the union of supports of $\alpha_a|_{\mathfrak{H}^-}$ and $\xi_a$. Therefore, we obtain $\alpha_a|_{\mathfrak{H}^-} - (F\xi_a)|_{\mathfrak{H}^-}=0$. This is equivalent to $(F^{-1}\alpha_a)|_{\mathfrak{H}^-}=\xi_a|_{\mathfrak{H}^-}$. 

We proved that $\alpha_a$ (given by \eqref{alpha'}) is a solution of the Goursat problem for the Teukolsky equation \eqref{Teu1} with initial data $(\xi_a,\zeta_a)$ on $\mathfrak{H}^-\cup \scri^-$. The uniqueness is done by the same way as in the proof of Theorem \ref{GGoursatTeu}. Our proof is completed.
\end{proof}

As a direct consequence of the well-posedness of the Goursat problem in Theorem \ref{GGoursatTeu} we have that the future trace operator $\mathfrak{T}^+ : \mathcal{H}^1(\Lambda^1(\mathbb{S}^2)|_{\Sigma_0}) \to \mathcal{H}^{2,+}$ is surjective. Combining this with Theorem \ref{TeuTraceInj}, we obtain that the operator $\mathfrak{T}^+$ is an isometric operator. Similarly, the past trace operator $\mathfrak{T}^- : \mathcal{H}^1(\Lambda^1(\mathbb{S}^2)|_{\Sigma_0}) \to \mathcal{H}^{2,+}$ is isometric. Therefore, we can define the conformal scattering operator for the spin $+1$ Teukolsky equation \eqref{Teu1} as follows:
\begin{definition}
The conformal scattering operator $\mathfrak{S}: \mathcal{H}^{2,-}\to \mathcal{H}^{2,+}$ of the spin $+1$ Teukolsky equation \eqref{Teu1} is an isometric operator that maps the past scattering data to the future scattering data, i.e.,
$$\mathfrak{S}:= \mathfrak{T}^+\circ (\mathfrak{T}^-)^{-1}.$$
\end{definition}
\begin{remark}
\item[$\bullet$] By the same way as above, we can construct the conformal scattering operator for the spin $-1$ Teukolsky equation \eqref{Teu2}.

\item[$\bullet$] This work can be extended to construct conformal scattering theories for the tensorial Fackerell-Ipser and spin $\pm 1$ Teukolsky equations on the other symmetric spherical spacetimes such as Reissner-Nordstr\"om-de Sitter back hole. First, we extend the work \cite{Pa2019} to obtain the energy decays (where the results of Giorgi \cite{El2019,El2020'} can be useful) and then use these decays to establish the construction of the theory, where it remains useful to use the timelike Killing vector field $\partial_t$ to establish the energies of the fields on the Cauchy hypersurface $\Sigma_T= \left\{ t=T \right\}$.  

\item[$\bullet$] The extension of the conformal scattering theory for the scalar wave or tensorial wave equations (such as tensorial Fackerell-Ipser and spin $\pm 1$, $\pm 2$ Teukolsky equations) on the non-static and non-symmetric spherical spacetimes such as Kerr spacetimes is more complicated. In Kerr spacetimes, we have the energy and pointwise decay results obtained by Dafermos et all. \cite{Da2016}. However, the existence of the orbiting null geodesics and the fact that the vector field $\partial_t$ is no longer global timelike in the exterior domains. This fact leads to an issue that the energy on $\Sigma_T$ can be not defined by using $\partial_t$ as in Schwarzschild and Reissner-Nordstr\"om-de Sitter spacetimes. We need to choose another global timelike vector field on the exterior domain to define the energies on $\Sigma_T$ and we will not obtain conserved currents for the equations. This fact leads to the complication of the case of Kerr spacetime. In a recent work \cite{Pha2022}, we have established the conformal scattering theory for the massless Dirac equation on Kerr spacetime, where it remains to have a conserved current for the equation. This work can be useful for the construction of conformal scattering theories on Kerr spacetime for the scalar wave, tensorial wave and Maxwell equations. 

\item[$\bullet$] The peeling properties of the tensorial wave equations \eqref{TensorWave} and \eqref{TensorWave1} (where their rescaled forms are the tensorial Fackerell-Ipser equations \eqref{Fac01} and \eqref{Fac02}, respectively) on Schwarzschild spacetime can be established by the same method as in the previous work \cite{MaNi2009} (see recent work \cite{Pham2022}). However, the peeling probelms for the spin $\pm 1$ Teukolsky equations \eqref{Teu1} and \eqref{Teu2} (or spin $\pm 2$ Teukolsky equations) on Kerr spacetimes remain open and put an interesting question, where the method can be developed from \cite{NiXu2019}.
\end{remark}

\section{Appendix}
\subsection{The commutators}\label{appen}
We give the proof of following commutators which were used in Subsection \ref{Equation}: 
\begin{equation*}\label{commutator}
[r\slashed\nabla_{\partial_{x^a}},\, \slashed{\nabla}_L]= [r\slashed\nabla_{\partial_{x^a}},\slashed{\nabla}_{\underline{L}}]=0,\, [\slashed{\nabla}_L,\, \slashed{\nabla}_{\underline L}] =0, \, [r\slashed{\nabla}_{\partial_{x^a}},\slashed{\Delta}]= \dfrac{1}{r^2}(r\slashed{\nabla}_{\partial_{x^a}}).
\end{equation*}
The three first commutators are valid on both scalar functions and tensor fields and the last commutator is valid on scalar functions. By using the commutation formulas for projected covariant derivatives (which valid on both tensor fields and scalar fields) in the Schwarzschild spacetime obtained in Subsection 4.3.2 in \cite{Da2019}, we have
\begin{equation}\label{com1}
\slashed{\nabla}_{e_3}(\slashed{\nabla}_{\partial_{x^a}}\psi) - \slashed{\nabla}_{\partial_{x^a}}(\slashed{\nabla}_{e_3}\psi) = -\frac{1}{2}\mathrm{tr}\underline{\chi}\slashed{\nabla}_{\partial_{x^a}}\psi,
\end{equation}
where $e_3= \frac{1}{\sqrt{F}}(\partial_t-\partial_{r_*})$, $\underline{\chi} = -\frac{\sqrt{F}}{r}r^2g_{\mathbb{S}^2} = -\frac{\sqrt{F}}{r}\slashed{g}$ and $\psi$ is a scalar function or tensor field.

Since $e_3 = \frac{1}{\sqrt{F}}\underline{L}$, $\partial_{x^a}\left( \frac{1}{\sqrt F}\right)=0$ and $\mathrm{Tr}(\underline{\chi})=-\frac{\sqrt{F}}{r}\mathrm{Tr}\slashed{g} = -\frac{2\sqrt{F}}{r}$ with respect to local coordinates $(x^a,x^b)$, Equality \eqref{com1} leads to
\begin{equation*}
\frac{1}{\sqrt{F}}\slashed{\nabla}_{\underline L}(\slashed{\nabla}_{\partial_{x^a}}\psi) - \frac{1}{\sqrt{F}}\slashed{\nabla}_{\partial_{x^a}}(\slashed{\nabla}_{\underline L}\psi) = \frac{\sqrt{F}}{r} \slashed{\nabla}_{\partial_{x^a}}\psi.
\end{equation*}
This is equivalent to
$$\slashed{\nabla}_{\underline L}(\slashed{\nabla}_{\partial_{x^a}}\psi) - \slashed{\nabla}_{\partial_{x^a}}(\slashed{\nabla}_{\underline L}\psi) = \frac{F}{r} \slashed{\nabla}_{\partial_{x^a}}\psi.$$
Therefore, we obtain that
\begin{eqnarray}
[r\slashed\nabla_{\partial_{x^a}},\slashed{\nabla}_{\underline{L}}]\psi &=& r\slashed\nabla_{\partial_{x^a}}(\slashed{\nabla}_{\underline{L}}\psi) - \slashed{\nabla}_{\underline{L}}(r\slashed\nabla_{\partial_{x^a}}\psi)\cr
&=& r\slashed\nabla_{\partial_{x^a}}(\slashed{\nabla}_{\underline{L}}\psi) - (\slashed{\nabla}_{\underline{L}}r)(\slashed\nabla_{\partial_{x^a}}\psi) - r\slashed{\nabla}_{\underline{L}}(\slashed\nabla_{\partial_{x^a}}\psi)\cr
&=&  r\slashed\nabla_{\partial_{x^a}}(\slashed{\nabla}_{\underline{L}}\psi) - r\slashed{\nabla}_{\underline{L}}(\slashed\nabla_{\partial_{x^a}}\psi) + (\partial_{r^*}r) (\slashed\nabla_{\partial_{x^a}}\psi)\cr
&=& -r\frac{F}{r}(\slashed\nabla_{\partial_{x^a}}\psi) + F\slashed\nabla_{\partial_{x^a}}\psi \cr
&=& 0.
\end{eqnarray}
Hence, $[r\slashed\nabla_{\partial_{x^a}},\slashed{\nabla}_{\underline{L}}]=0$. Similarly, we can prove that $[r\slashed\nabla_{\partial_{x^a}},\slashed{\nabla}_{L}]=0$. As consequences of the two first commutators, we can obtain that
$$[r^2\slashed{\Delta},\slashed{\nabla}_L]=[r^2\slashed{\Delta},\slashed{\nabla}_{\underline{L}}]=0,$$
which were used in the proof of Proposition \ref{relationEq}.

To prove the third commutator, we have (see Subsection 4.3.2 in \cite{Da2019}):
\begin{equation}\label{com2}
\slashed{\nabla}_{e_3}(\slashed{\nabla}_{e_4}\psi) - \slashed{\nabla}_{e_4}(\slashed{\nabla}_{e_3}\psi) = \hat{\omega}\slashed{\nabla}_{e_3}\psi -\underline{\hat\omega}\slashed{\nabla}_{e_4}\psi,
\end{equation}
where $e_3= \frac{1}{\sqrt{F}}(\partial_t-\partial_{r_*}) = \frac{1}{\sqrt{F}}\underline{L}$, $e_4= \frac{1}{\sqrt{F}}(\partial_t+\partial_{r_*}) = \frac{1}{\sqrt{F}}L$, $\hat\omega = -\underline{\hat\omega} = \frac{M}{r^2\sqrt{F}}$ and $\psi$ is a scalar function or tensor field.

We can calculate that $\underline{L}\left(\frac{1}{\sqrt F}\right) = \frac{M}{r^2\sqrt F}$ and $L\left(\frac{1}{\sqrt F}\right) = -\frac{M}{r^2\sqrt F}$. Therefore, equality \eqref{com2} is equivalent to
\begin{equation*}
\frac{1}{\sqrt F}\slashed{\nabla}_{\underline L}(\slashed{\nabla}_{L}\psi) + \frac{M}{r^2F}(\slashed{\nabla}_{L}\psi)  - \left(\frac{1}{\sqrt F}\slashed{\nabla}_L(\slashed{\nabla}_{\underline L}\psi) - \frac{M}{r^2F}\slashed{\nabla}_{\underline L}\psi \right) = \frac{M}{r^2F}\slashed{\nabla}_{\underline L}\psi + \frac{M}{r^2F}\slashed{\nabla}_L\psi.
\end{equation*}
This leads to
$$\slashed{\nabla}_{\underline L}(\slashed{\nabla}_{L}\psi) - \slashed{\nabla}_L(\slashed{\nabla}_{\underline L}\psi) =0$$ and we obtain the third commutator.

We prove the last commutator $[r\slashed{\nabla}_{\partial_{x^a}},\slashed{\Delta}]\psi = \frac{1}{r^2}(r\slashed{\nabla}_{\partial_{x^a}})\psi$, for all scalar function $\psi$. For simplicity, we denote $\slashed{\nabla}_{\partial_{x^a}}=\slashed{\nabla}_a$. Using the Ricci identity and the symmetries of the Riemannian curvature tensor, we have 
\begin{eqnarray*}
\slashed{\Delta}\slashed{\nabla}_a\psi &=& \slashed{g}^{bc}\slashed{\nabla}_b\slashed{\nabla}_c\slashed{\nabla}_a\psi = \slashed{\nabla}^c\slashed{\nabla}_c\slashed{\nabla}_a \psi = \slashed{\nabla}^c\slashed{\nabla}_a \slashed{\nabla}_c \psi\cr
&=& - {{{\mathrm{Rm}^c}_a}{^d}}_c\slashed{\nabla}_d\psi + \slashed{\nabla}_a\slashed{\nabla}^c\slashed{\nabla}_c\psi\cr
&=& \slashed{g}^{cd}\mathrm{Ric}_{ac}\slashed{\nabla}_d\psi + \slashed{\nabla}_a\slashed{\Delta}\psi,
\end{eqnarray*}
where $\mathrm{Rm}$ and $\mathrm{Ric}$ are Riemannian and Ricci curvature tensors associated with the rough metric $\slashed{g}$. It is known that $\mathrm{Ric} =g_{\mathbb{S}^2}= \frac{1}{r^2}\slashed{g}.$
Therefore, we obtain that
$$\slashed{\Delta}\slashed{\nabla}_a\psi = \frac{1}{r^2}\slashed{\nabla}_a\psi + \slashed{\nabla}_a\slashed{\Delta}\psi.$$
This leads to $[r\slashed{\nabla}_{\partial_{x^a}},\slashed{\Delta}]\psi = \frac{1}{r^2}(r\slashed{\nabla}_{\partial_{x^a}})\psi$. The last commutator holds.

\subsection{The Goursat problem for tensorial wave equations}\label{appendix}
In this appendix we give a brief proof of Lemma \ref{partlyGoursat} that is a modification of H\"ormander's work \cite{Ho1990} (see Theorem 2 and its proof) for the wave equations on vector fields \eqref{systemwave}. For convenience to follow the proof we use the same notations in \cite{Ho1990}.
Without loss of generality, we can replace $\mathbb{S}^3$ by $X$ which is a compact Riemannian manifold without boundary of dimension $n$ equipped with metric $h(t)=\sum_{jk}h_{jk}(t,x)\d x^j\d x^k$. 
In local coordinates, we have $\d \nu = \gamma\d x$, where $\d \nu$ is a fixed smooth density on $X$.
On $(\widetilde{X},\mathfrak{g}) = (\mathbb{R}_t\times X,\mathfrak{g})$, where $\mathfrak{g}=\d t^2-h$, we consider the wave equation on vector fields \eqref{systemwave}. We consider the folowing hypersurface of initial data
\begin{equation*}
\Sigma = \left\{ (\varphi(x),x); x\in X \right\},\, \, \varphi: X\to \mathbb{R},
\end{equation*}
where $\varphi$ is a Lipschitz continuous function on $X$ and satisfies the weak spacelike condition
\begin{equation}\label{graph}
\sum_{j,k=1}^n h^{jk}(\varphi(x),x)\partial_j\varphi(x)\partial_k\varphi(x) \leq 1 \,\,\, (x\in X)
\end{equation}
almost every where on $X$. The hypersurface $X$ is spacelike if the right-hand side (RHS) of \eqref{graph} is less than $1$ for almost every where $x\in X$ and it is characteristic (or null) if the RHS of \eqref{graph} is equal to $1$ for almost all $x\in X$.

The main difference between Equation \eqref{systemwave} and the wave equations on scalar functions in \cite{Ho1990} is the term $L_1(\Phi)$ which consists of the first order differential operators on vector fields. In particular, Equation \eqref{systemwave} is equivalent to a symmetrical hyperbolic system which consists two wave equations on scalar functions with the first order differential operators on both $\phi_1$ and $\phi_2$ in each equation
\begin{equation}\label{EQ1}
(\partial_t^2\phi_1 - \Delta_h\phi_1) + (L_1^{11}\phi_1+ L_1^{12}\phi_2) = 0
\end{equation}
and
\begin{equation}\label{EQ2}
(\partial_t^2\phi_2 - \Delta_h\phi_2) + (L_1^{21}\phi_1+ L_1^{22}\phi_2) =0,
\end{equation}
where $L_1^{11}=L_1^{22},\, L_1^{12}=L_1^{21}$ with $L_1^{ij} = b_0^{ij}\partial_t + b_1^{ij}\partial_x + c^{ij}$, and $\Delta_h$ is the Laplace-Beltrami operator associated with metric $h$ on $X$:
$$\Delta_h = \sum_{j,k} \gamma^{-1}\partial_j (\gamma h^{jk}(t,x)\partial_k),\, \partial_k=\frac{\partial}{\partial x_k},$$ 
with  $(h^{jk})=(h_{jk})^{-1}$.

Assume that $\Phi=(\phi_1,\phi_2)$ is a smooth solution of the couple equations \eqref{EQ1} and \eqref{EQ2}. Similar to \cite{Ho1990} (see equation 4, page 272), Equation \eqref{EQ1} leads to
\begin{eqnarray}\label{EQ3}
0&=&2\partial_t\bar{\phi}_1(\partial_t^2\phi_1 - \Delta_h\phi_1) + (\partial_t\bar{\phi}_1)(L_1^{11}\phi_1 ) + (\partial_t\bar{\phi}_2)(L_1^{12}\phi_2)\cr
&=& \partial_t \left( (\partial_t\phi_1)^2 + \sum_{j,k} h^{jk}\partial_j\phi_1\partial_k\phi_1 + (\phi_1)^2\right)\cr
&& - 2\gamma^{-1}\sum_{j,k}\partial_j\left( \gamma h^{jk}\partial_t\phi_1\partial_k\phi_1\right) -\sum_{j,k}(\partial_t h^{jk})\partial_j\phi_1\partial_k\phi_1\cr
&& + \underbrace{2(\partial_t\bar{\phi}_1)((L_1^{11}-1)\phi_1) + 2(\partial_t\bar{\phi}_1)(L_1^{12}\phi_2)}_{I_1}.
\end{eqnarray}
Similarly, Equation \eqref{EQ2} leads to
\begin{eqnarray}\label{EQ4}
0&=&2\partial_t\bar{\phi}_2(\partial_t^2\phi_2 - \Delta_h\phi_2) + (\partial_t\bar{\phi}_1)(L_2^{21}\phi_2) + (\partial_t\bar{\phi}_2)(L_2^{22}\phi_2)\cr
&=& \partial_t \left( (\partial_t\phi_2)^2 + \sum_{j,k} h^{jk}\partial_j\phi_2\partial_k\phi_2 + (\phi_2)^2\right)\cr
&& - 2\gamma^{-1}\sum_{j,k} \partial_j\left( \gamma h^{jk}\partial_t\phi_2\partial_k\phi_2\right) -\sum_{j,k}(\partial_t h^{jk})\partial_j\phi_2\partial_k\phi_2\cr
&& + \underbrace{2(\partial_t\bar{\phi}_2)(L_1^{21}\phi_1) + 2(\partial_t\bar{\phi}_2)((L_1^{22}-1)\phi_2)}_{I_2}.
\end{eqnarray}
The equations \eqref{EQ3} and \eqref{EQ4} have the mixed terms $I_1$ and $I_2$ of scalar functions $\phi_1$ and $\phi_2$. In order to control these terms, we define the pointwise norm of the vector $\Phi=(\phi_1,\,\phi_2)$ by
$$\left\|\Phi\right\| =  |\phi_1| + |\phi_2|$$
and we introduce the energy on the vector field $\Phi=(\phi_1,\,\phi_2)$ by
\begin{equation}\label{EN}
\mathcal{E}(\Phi)(t)=\int_X\left( \left\|\partial_t\Phi\right\|^2 + \sum_{j,k} h^{jk}\partial_j\Phi\partial_k\Phi + \left\|\Phi\right\|^2 \right)\d \nu(x),
\end{equation}
where $\partial_t\Phi = \left(\begin{matrix}
\partial_t\phi_1 \\
\partial_t\phi_2
\end{matrix} \right)$ and $\partial_j\Phi\partial_k\Phi = \partial_j\phi_1\partial_k\phi_1+\partial_j\phi_2\partial_k\phi_2$. We can see that $\mathcal{E}(\Phi)(t) = \mathcal{E}(\phi_1)(t) + \mathcal{E}(\phi_2)(t)$, where
\begin{equation}
\mathcal{E}(\phi_i)(t)=\int_X \left( \left\|\partial_t\phi_i\right\|^2 + \sum_{j,k} h^{jk}\partial_j\phi_i\partial_k\phi_i + \left\|\phi_i\right\|^2 \right)\d \nu(x) \,\, (i=1,2),
\end{equation}
which is energy of scalar fields defined in \cite{Ho1990} (and also \cite{Ni2006}). Moreover, we have
\begin{equation}
\mathcal{E}(\Phi)(t) = \|\Phi(t)\|^2_{H^1(X)} + \|\partial_t\Phi(t)\|^2_{L^2(X)},
\end{equation}
where
\begin{eqnarray*}
&&\|\Phi(t)\|_{H^1(X)}^2 = \int_X \left( \sum_{j,k} h^{jk}\partial_j\phi_i\partial_k\phi_i + \|\phi_i\|^2 \right)\d \nu(x),\cr
&&\|\partial_t\Phi(t)\|_{L^2(X)}^2 = \int_X \left\|\partial_t\phi_i\right\|^2 \d \nu(x).
\end{eqnarray*}
\begin{remark}
If we use the energy momentum tensor for Equation \eqref{systemwave} and the Killing vector field $T=\partial_t$, we can also obtain the energy $\mathcal{E}^T(\Phi)(X)$ on spacelike hypersurface $X$ which is equivalent to $\mathcal{E}(\Phi)(t)$ (see Subsection \ref{ConserFac} or more details in \cite[page 184]{Saka1985}). Moreover, the restriction of energy norm $\sqrt{\mathcal{E}(\Phi)(t)}$ of $\Phi$ on $\Sigma_t$ is also equivalent to energy of tensorial field $\phi_a$ on $\Sigma_t$ given by \eqref{ENERGY} in Subsection \ref{FacSpace}.
\end{remark}
Observe that, the term $I_1$ in \eqref{EQ3} can be controlled as 
\begin{eqnarray}\label{I1}
I_1&=&2(\partial_t\bar{\phi}_1)((L_1^{11}-1)\phi_1) + 2(\partial_t\bar{\phi}_1)(L_1^{12}\phi_2) \cr
&\leq& 2|\partial_t\phi_1||(L_1^{11}-1)\phi_1| + 2|\partial_t\phi_1||L_1^{12}\phi_2| \cr
&\leq& 2\left\|\partial_t\Phi \right\| \left(  \left\|\partial_t\Phi\right\| + \left\|\partial_x\Phi\right\| + \left\|\Phi\right\| \right)\cr
&\leq& C \left( \left\|\partial_t\Phi \right\|^2 + \left\|\partial_x\Phi\right\|^2 + \left\|\Phi\right\|^2   \right)\cr
&=& C\left( \left\|\partial_t\Phi\right\|^2 + \sum_{j,k} h^{jk}\partial_j\Phi\partial_k\Phi + \left\|\Phi\right\|^2 \right).
\end{eqnarray}
By the same way, we have
\begin{equation}\label{I2}
I_2 \leq C \left( \left\|\partial_t\Phi\right\|^2 + \sum_{j,k} h^{jk}\partial_j\Phi\partial_k\Phi + \|\Phi\|^2 \right).
\end{equation}
Therefore, integrating \eqref{EQ3} and \eqref{EQ4} on $\left\{t\leq \tau\leq s\right\}\times X$, we obtain the energy estimate
\begin{equation}\label{ENG}
\mathcal{E}(\Phi)(t)\leq \mathcal{E}(\Phi)(s) e^{D|t-s|}.
\end{equation}
For any foliation $\left\{X_\tau\right\}_{\tau\in \mathbb{R}}$, where $X_\tau$ is Cauchy hypersurface and $X_0=X$. We can see that $X_\tau$ is topological $X$ endowed with the Riemannian metric $-\mathfrak{g}|_{\widetilde{\mathcal{S}}_\tau}$. By using Leray's theorem for symmetrical hyperbolic systems on smooth globally hyperbolic spacetimes (see \cite{Le1952}), we obtain the existence of smooth solution $\Phi$ of Cauchy problem for Equation \eqref{systemwave} with the smooth initial data $(\Phi|_X,\partial_t\Phi|_X) \in C^\infty(X)\times C^\infty(X)$.
Using energy norm $\sqrt{\mathcal{E}(\Phi)(\tau)}$ (where $\mathcal{E}(\Phi)(\tau)$ given by \eqref{EN}) and energy estimate \eqref{ENG}, we can prove the local well-posedness in ${C}^0([0,T];\, \cup_{0\leq \tau \leq T} H^1(X_\tau)) \cap {C}^1([0,T];\, \cup_{0 \leq \tau\leq T} L^2(X_\tau))$ of Cauchy problem of Equation \eqref{systemwave} with the initial data satisfying $(\Phi|_X,\partial_\tau\Phi|_X) \in H^1(X)\times L^2(X)$ by the same method in \cite[Theorem 2]{Saka1985} (see also \cite{Bruhat1979}).
Using the local well-posedness result and energy estimate \eqref{ENG}, we can establish the global well-posedness of Cauchy problem of Equation \eqref{systemwave} in ${C}^0(\mathbb{R}_\tau;\, \cup_{\tau \geq 0} H^1(X_\tau)) \cap {C}^1(\mathbb{R}_{\tau};\, \cup_{\tau\geq 0} L^2(X_\tau))$ by the same methods in \cite[Theorem 2]{Cagnac} and \cite[Theorem 1]{Dossa}. The same process is used to prove Theorem \ref{Cauchyproblem} and Theorem \ref{CauchyTeu}.

We denote by $\mathscr{E}$ the closure of space of all smooth solutions of Cauchy problem for \eqref{systemwave} in ${C}^0(\mathbb{R}_\tau;\, \cup_{\tau\geq 0} H^1(X_\tau)) \cap {C}^1(\mathbb{R}_\tau;\, \cup_{\tau\geq 0} L^2(X_\tau))$ under the energy norm $\sup\limits_{\tau\in \mathbb{R}}\sqrt{\mathcal{E}(\Phi)(\tau)}$. The space $\mathcal{E}$ is called finite energy space.

Let $\Sigma= \left\{ (\varphi(x),x),\, x\in X\right\}$ be a weakly spacelike hypersurface. We define on hypersurface $\Sigma$ the density measure
$$\d \nu^0_\Sigma = \left(1-\sum_{j,k} h^{jk}\partial_j\varphi\partial_k\varphi\right)\d \nu_\Sigma.$$
This density measure is positive if $\Sigma$ is spacelike and vanishes if $\Sigma$ is null. Using this we define the norm of $\partial_t\Phi$ in $L^2(\Sigma;\, \d\nu^0|_\Sigma)$ by
\begin{equation}\label{newnorm}
\left\| \partial_t\Phi\right\|_{L^2(\Sigma;\, \d\nu^0_\Sigma)} = \int_\Sigma \left\| \partial_t\Phi\right\|^2 \d\nu^0_\Sigma = \int_\Sigma \left\| \partial_t\Phi\right\|^2 \left(1-\sum_{j,k} h^{jk}\partial_j\varphi\partial_k\varphi\right)\d \nu_\Sigma.
\end{equation}
Integrating \eqref{EQ3} and \eqref{EQ4} over $\left\{ (t,x): T\leq t \leq \varphi\right\}$ (where $T$ is chosen such that $T\leq \min\varphi$) and using \eqref{newnorm} and estimates \eqref{I1}, \eqref{I2} for $2\partial_t\Phi L_1(\Phi)$, we can establish the same estimate as (7) in \cite{Ho1990}: 
\begin{eqnarray}\label{e1}
&&\int_\Sigma \left\| \partial_t\Phi\right\|^2 \d\nu^0_\Sigma + \int_\Sigma \left( \sum_{j,k} h^{jk}(\partial_j\Phi + \partial_j\varphi\partial_t \Phi)(\partial_k\Phi + \partial_k\varphi\partial_t \Phi) + \left\| \Phi \right\|^2 \right)\d\nu_\Sigma\cr
&=& \int_\Sigma \left\| \partial_t\Phi\right\|^2 \d\nu^0_\Sigma + \int_\Sigma \left( \sum_{j,k} h^{jk}(\partial_j\Phi|_\Sigma) (\partial_k\Phi|_\Sigma) + \left\| \Phi\right\|^2\right)\d\nu_\Sigma\cr
&=&\left\| \partial_t\Phi \right\|_{L^2(\Sigma;\d\nu^0_\Sigma)} + \left\| \Phi \right\|_{H^1(\Sigma)}\cr
&\leq& C\left\| \Phi\right\|_{\mathscr{E}},
\end{eqnarray}
where we used $\partial_j(\Phi(\varphi(x),x)) = (\partial_j\varphi\partial_t\Phi + \partial_j\Phi)|_\Sigma$. 

Similar to the proof of Theorem 2 in \cite{Ho1990} (see page 274), we can establish an opposite estimate of \eqref{e1}. Indeed, we define
$$\mathcal{E}_\varphi(\Phi)(t) = \int_{\varphi(x)\leq t} \left( \left\|\partial_t\Phi\right\|^2 + \sum_{j,k}\partial_j\Phi\partial_k\Phi + \left\|\Phi\right\|^2  \right) \d \nu(x).$$
Integrating \eqref{EQ3} and \eqref{EQ4} over $\left\{ (s,x): \varphi(x)\leq s \leq t \right\}$ and using estimates \eqref{I1}, \eqref{I2}, we get
\begin{eqnarray}\label{re1}
\mathcal{E}_\varphi(\Phi)(t) &\leq& C\int_T^t \mathcal{E}_\varphi(\Phi)(s) + \int_\Sigma \left\| \partial_t\Phi\right\|^2 \d\nu^0_\Sigma + \int_\Sigma \left( \sum_{j,k} h^{jk}(\partial_j\Phi|_\Sigma) (\partial_k\Phi|_\Sigma) + \left\| \Phi\right\|^2\right)\d\nu_\Sigma\cr
&\leq& C\int_T^t \mathcal{E}_\varphi(\Phi)(s) + (\left\| \partial_t\Phi \right\|_{L^2(\Sigma;\d\nu^0_\Sigma)} + \left\| \Phi \right\|_{H^1(\Sigma)}),
\end{eqnarray}
where $T\leq \min\varphi$, so that $\mathcal{E}_\varphi(T)=0$. By using Gronwall’s lemma for \eqref{re1}, we get
\begin{equation*}
\mathcal{E}_\varphi(\Phi)(t)\leq e^{C(t-T)}(\left\| \partial_t\Phi \right\|_{L^2(\Sigma;\d\nu^0_\Sigma)} + \left\| \Phi \right\|_{H^1(\Sigma)}).
\end{equation*}
Hence, for $t\geq \max\varphi$, we obtain the opposite estimates of \eqref{e1} as
\begin{equation}\label{Ree1}
\left\| \Phi\right\|_{\mathscr{E}}\leq \widehat{C}(\left\| \partial_t\Phi \right\|_{L^2(\Sigma;\d\nu^0_\Sigma)} + \left\| \Phi \right\|_{H^1(\Sigma)}),
\end{equation}
where $\widehat{C}$ is a positive constant.

By using the energy on vector fields \eqref{EN} and energy estimates \eqref{ENG}, \eqref{e1}, \eqref{Ree1}, we can continue the process to get the proof of Theorem 2 in \cite{Ho1990} for wave equation on vector fields \eqref{systemwave} and get the same results that: for every weak spacelike hypersurface $\Sigma$, the map
\begin{align}
\Gamma:\hbox{      } {C}^\infty(\mathbb{R}_{\tau}, \cup_{\tau\in \mathbb{R}} H^1(X_\tau))\cap {C}^\infty(\mathbb{R}_\tau, \cup_{\tau\in \mathbb{R}} L^2(X_\tau)) &\to H^1(\Sigma) \oplus L^2(\Sigma;\, \d\nu^0_\Sigma)\cr
\Phi &\mapsto \left(\Phi|_\Sigma,\, \partial_t\Phi|_\Sigma \right)
\end{align}
is well defined for smooth solutions. Moreover, we can extend this map as an isometry on the finite energy space $\mathscr{E}$ as
\begin{align}
\Gamma:\hbox{      } \mathscr{E} &\to H^1(\Sigma) \oplus L^2(\Sigma;\, \d\nu^0_\Sigma)\cr
\Phi &\mapsto \left(\Phi|_\Sigma,\, \partial_t\Phi|_\Sigma \right).
\end{align}
Since $\Gamma$ is surjective, the Goursat problem of \eqref{systemwave} is well-posedness in $(\widetilde{X},\mathfrak{g})$ for the smooth initial data on null hypersurface $\Sigma$.

Therefore, we obtain the well-posedness of the Goursat problem of Equation \eqref{systemwave} in $(\mathbb{R}_t\times \mathbb{S}^3,\mathfrak{g})$ in Lemma \ref{partlyGoursat} as an applications of the above result with $X=\left\{ 0 \right\}\times \mathbb{S}^3$, $\Sigma = \mathcal{C}$ is a null hypersurface (which is extension of $\mathfrak{H}^+\cup \scri^+$ in $(\mathbb{R}_t\times \mathbb{S}^3,\mathfrak{g})$) and the foliation $\left\{ X_\tau\right\}_{\tau\geq 0}$, where $X_\tau = \widetilde{\mathcal{S}}_\tau$ is a spacelike hypersurface (which is an extension of $\mathcal{S}_\tau$ in $(\mathbb{R}_t\times \mathbb{S}^3,\mathfrak{g})$)).  In particular, for the initial data $ (\widetilde{\xi}_i,\widetilde{\zeta}_i)\in {C}_0^\infty(\mathcal{C})\times {C}_0^\infty(\mathcal{C})\, (i=1,2)$, Equation \eqref{systemwave} has a unique  smooth solution $\widetilde{\Phi}=(\widetilde{\phi}_1,\widetilde{\phi}_2)$ satisfying
\begin{equation*}
\widetilde{\Phi} \in {\mathcal {C}}^\infty(\mathbb{R}_\tau;\, \cup_{\tau\geq 0}H^1(\widetilde{\mathcal{S}}_\tau)), \, \partial_\tau \widetilde{\Phi} \in {C}^\infty(\mathbb{R}_\tau; \, \cup_{\tau\geq 0}L^2(\widetilde{\mathcal{S}}_\tau)).
\end{equation*} 
This means that
\begin{equation*}
\widetilde{\phi}_i \in {\mathcal {C}}^\infty(\mathbb{R}_\tau;\, \cup_{\tau\geq 0}H^1(\widetilde{\mathcal{S}}_\tau)), \, \partial_\tau \widetilde{\phi}_i \in {C}^\infty(\mathbb{R}_\tau; \, \cup_{\tau\geq 0}L^2(\widetilde{\mathcal{S}}_\tau))
\end{equation*} 
for $i=1,2$.
\begin{remark}
The Goursat problems for wave equations on spinor fields were also established in some other works \cite{Mo2019,Mo2022,Pha2022}.  
\end{remark}

\vspace{0.3cm}


\end{document}